\pgfplotsset{compat=1.18}
\theoremstyle{plain} 
\newtheorem{theorem}{Theorem} 
\title{Optimising pandemic response through vaccination strategies using neural networks}
\author{ \href{https://orcid.org/0009-0009-9574-7935}{\includegraphics[scale=0.06]{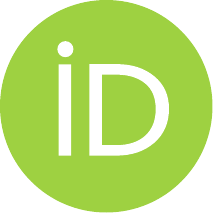}\hspace{1mm}Chang ~Zhai} \\
	Department of Economics\\
	The University of Melbourne\\
	Carlton, VIC, 3053, Australia \\
	\texttt{changzhai@unimelb.edu.au} \\
	\And
	\href{https://orcid.org/0000-0003-2885-289X}{\includegraphics[scale=0.06]{orcid.pdf}\hspace{1mm}Ping ~Chen} \\
	Department of Economics\\
	The University of Melbourne\\
	Carlton, VIC, 3053, Australia \\
	\texttt{pche@unimelb.edu.au} \\
        \And
	\href{https://orcid.org/0000-0002-9488-2993}
    {\includegraphics[scale=0.06]{orcid.pdf}\hspace{1mm}Zhuo ~Jin} \\
       Macquarie Business School\\
	Macquarie University\\
	Sydney, NSW, 2109, Australia\\
	\texttt{zhuo.jin@mq.edu.au} \\
        \And
	\href{https://orcid.org/0000-0002-6976-7168}{\includegraphics[scale=0.06]{orcid.pdf}\hspace{1mm}David ~Pitt} \\
	Department of Economics\\
	The University of Melbourne\\
	Carlton, VIC, 3053, Australia\\
	\texttt{david.pitt@unimelb.edu.au} \\
}
\begin{document}
\maketitle

\begin{abstract}
    Epidemic risk assessment poses inherent challenges, with traditional approaches often failing to balance health outcomes and economic constraints. This paper presents a data-driven decision support tool that models epidemiological dynamics and optimises vaccination strategies to control disease spread whilst minimising economic losses. The proposed economic-epidemiological framework comprises three phases: modelling, optimising, and analysing. First, a stochastic compartmental model captures epidemic dynamics. Second, an optimal control problem is formulated to derive vaccination strategies that minimise pandemic-related expenditure. Given the analytical intractability of epidemiological models, neural networks are employed to calibrate parameters and solve the high-dimensional control problem. The framework is demonstrated using COVID-19 data from Victoria, Australia, empirically deriving optimal vaccination strategies that simultaneously minimise disease incidence and governmental expenditure. By employing this three-phase framework, policymakers can adjust input values to reflect evolving transmission dynamics and continuously update strategies, thereby minimising aggregate costs, aiding future pandemic preparedness.
\end{abstract}

\keywords{Optimal Control \and COVID-19 \and Pandemic Risk \and Machine learning \and Epidemic Modelling \and Neural Networks }
\newpage

\section{Introduction}
\label{sec: introduction}

Recent outbreaks of the coronavirus have intensified the scientific investigation of infectious diseases. From the perspective of local governments, infectious diseases not only threaten public health but also impose substantial fiscal burdens. Given the significance of recent vaccination campaigns, vaccination plays a crucial role in preventing disease transmission. Consequently, we develop an economic epidemiological framework that assists social planners in minimizing overall pandemic-related expenses by adjusting vaccination rollout rates over time based on the progression of the disease. Figure \ref{fig: economic epidemiological approach} provides an overview of the framework, which is structured into three phases: modelling, optimizing, and analyzing. This framework incorporates actual disease data to study the evolution of the virus, develop an optimal vaccination strategy for the social planner, and generate a range of analytical insights. The proposed framework is applicable to epidemics that are not classified as catastrophic health crises, such as the Asian Flu, Swine Flu, and COVID-19. Specifically, if an epidemic has an extremely high case fatality rate (CFR) exceeding a critical threshold, likely around 10 to 20 percent or higher, then the government should prioritize saving lives over controlling aggregate spending, based on the intrinsic value of human life. In such scenarios, the societal and ethical imperative to prevent mass mortality outweighs cost-minimization policies. Therefore, we focus on the more common non-catastrophic health crises and provide policymakers with a data-driven framework for developing optimal vaccination strategies during such periods of time. In addition, in real-world settings, government vaccination campaigns are continually adjusted by updating the input data or assumptions based on the available data. Thus, the goal of this paper is to offer social planners a toolkit capable of developing optimal vaccination strategies based on the most recent data, allowing governments to update policies in response to changes in the external environment as the virus spreads.

\begin{figure}[htbp]
    \centering
    \includegraphics[width=\textwidth]
    {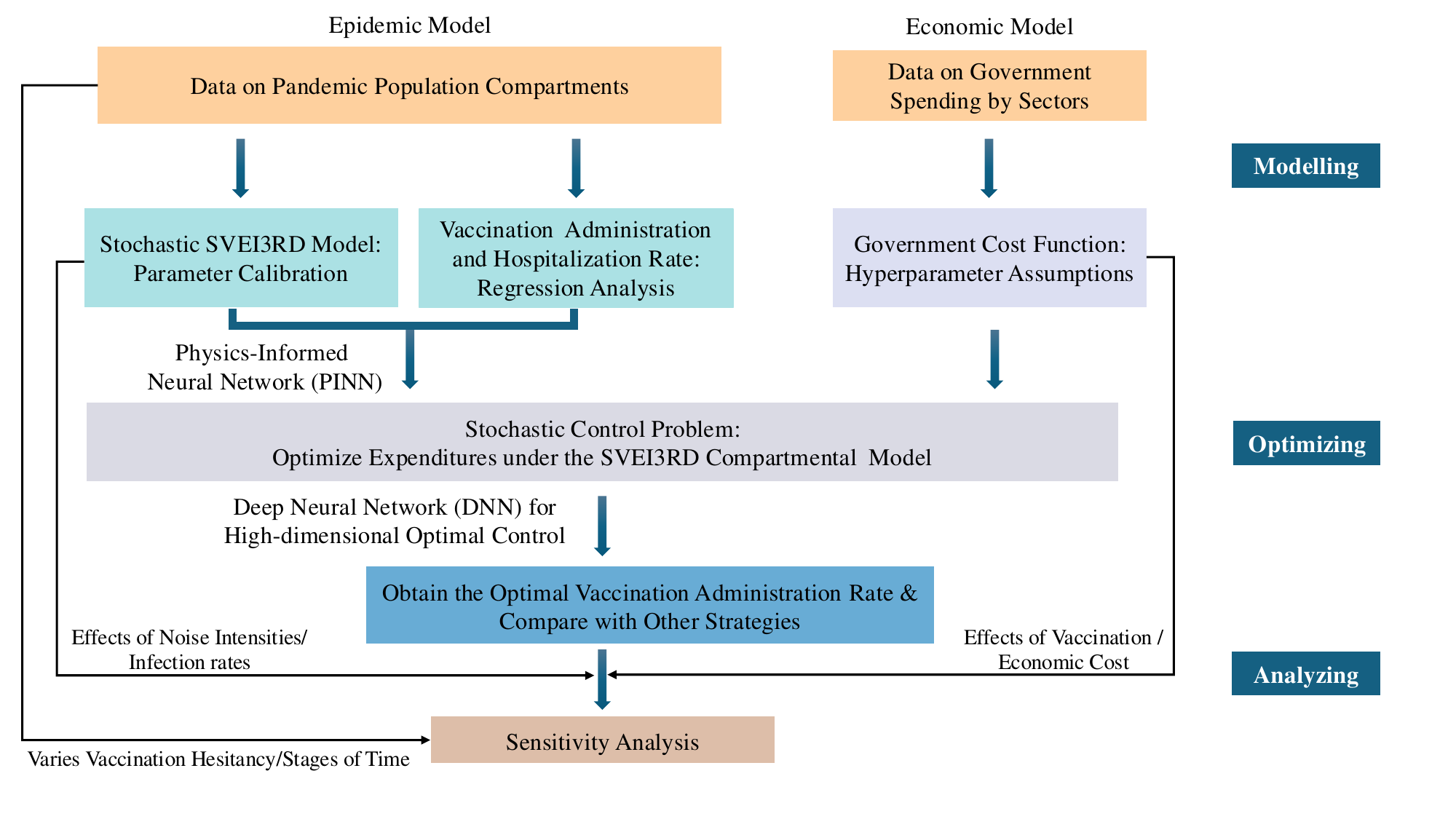}
    \caption{Overview of the economic epidemiological model framework.}
    \label{fig: economic epidemiological approach}
\end{figure}

In developing the framework, two strands of the literature are reviewed. The first strand focuses on epidemic modelling and examines compartmental models that describe the progression of infectious disease transmission. The second addresses epidemic control problems within the field of economic epidemiology. By integrating insights from these two areas, our framework becomes data-driven, incorporates real-world dynamics, and provides actionable strategic guidance for policymakers.

Developing a comprehensive understanding of infectious disease transmission is essential for shaping epidemic management strategies, and this focus underpins the first phase of our framework. Compartmental models represent a foundational tool in epidemiological research by enabling investigators to analyze the dynamics of disease spread and construct mathematical tools for understanding epidemics. In regard to the earliest investigations into an epidemic process, \citet{Kermack_1927} presents the fundamental Susceptible-Infectious-Recovered (SIR) model, which divides a population into three groups and describes their disease dynamics through a set of differential equations. Building on this foundation, researchers have expanded compartmental models by refining the underlying equations. For example, \citet{Bailey_1975} examines the proportion of individuals who die from the disease and introduces an additional Dead state into the model. Moreover, \citet{Kemper_1978} incorporates a carrier stage, accounting for individuals who can transmit the disease without displaying symptoms, and provides notable insights into infection pathways. Another line of research investigates how vaccination influences epidemic spread. A representative study by \citet{Cohen_2003} shows that vaccinating part of the population can substantially reduce overall virus transmission. Furthermore, to capture the substantial difference in medical costs across infected individuals based on illness severity, \citet{Zhai_2024} proposes a refined Susceptible-Exposed-Infected-with-3-Substates-Recovered-Dead (SVEI3RD) model that classifies infected individuals into mild, medium, and severe subgroups. This enhanced model provides a more comprehensive approach by subdividing the Infectious state, allowing governments to align distinct economic costs more closely with real-world conditions. In addition, diseases spanning longer time horizons often involve additional factors such as new births, elderly mortality, and migration, each of which shapes the compartment populations. \citet{Grenfell_2004} examines how host population dynamics affect viral transmission and evolution, noting that overall population size varies over time. A further consideration is the necessity of incorporating stochastic into the model in order to account for the fluctuations of real-world changing conditions. As shown by \citet{Beddington_1977} and \citet{Allen_2008}, stochastic models offer a more realistic representation of disease dynamics, enhancing the ability to analyze complex transmission patterns in uncertain settings. Accordingly, these models have gained prominence among researchers seeking richer insights into disease spread and more effective control measures. 

Building on these considerations, our framework begins by offering policymakers a comprehensive stochastic compartmental model with vital dynamics that capture the dynamics of disease transmission. This advanced model design includes more states than traditional models, enabling a more detailed analysis of population dynamics and the stochastic terms allow the mathematic model to incorporate environmental noise. However, the high-dimensional structure of our compartmental model presents computational challenges for parameter estimation, since most traditional numerical methods lack the capacity to calibrate a large set of parameters simultaneously. To overcome this, we employ recent developments in deep learning by applying Physics Informed Neural Networks (PINN) introduced by \citet{Raissi_2019} to calibrate the parameters of our compartmental model. This neural network approach is first applied to epidemiological models by \citet{Shaier_2021}, who introduces Disease Informed Neural Networks (DINN) and demonstrates their effectiveness in estimating parameters for compartmental models and providing accurate forecasts. While the DINN methodology is originally developed for parameter training of deterministic compartmental models, we extend this methodology to calibrate stochastic model parameters by simulating environmental noise through multiple Monte Carlo simulations and averaging the outcomes. Calibrating the parameters of our compartmental model allows us to study the disease's actual progression, forming the foundation of the epidemic modelling component of our proposed framework. For the economic part of the framework, we consider the government's aggregate pandemic-related expenditures, consisting of four parts: vaccination policy implementation costs, quarantine subsidies, healthcare system expenditures, and economic losses. The integration of these two models constitutes the first modelling layer within our framework.

The second phase of the study involves the formulation of the control problem that aligns with the policymakers' objective of reducing government expenditures by altering vaccination administration rates in response to disease progression. In addition to modelling the transmission dynamics of epidemic diseases, governments are particularly concerned with the broader impacts of pandemics. Historical evidence shows that past pandemics have significantly disrupted not only public health systems but also global economies, often triggering severe economic crises. This dual impact has heightened scholarly interest in economic epidemiology, with economists increasingly focused on devising optimal strategies for managing pandemics. \citet{McAdams_2021} surveys the literature, categorizing economic epidemiological models based on assumptions regarding immunity, transmission mechanisms, and economic impacts. Regarding the assessment of policy interventions, \citet{Giordano_2020} emphasizes the importance of non-medical measures, while \citet{Glover_2023} examines the intergenerational trade-offs and conflicts that pandemics create for policymakers. In the context of lockdown policies, studies by \citet{Acemoglu_2021}, \citet{Alvarez_2021}, \citet{Jones_2021}, \citet{Arias_2023}, and \citet{Dasaratha_2023} construct their optimal design and implementation. \citet{Garibaldi_2024} distinguishes between static and dynamic externalities, comparing decentralized and optimal solutions when agents derive utility from social interactions. Additionally, \citet{Carnehl_2023} and \citet{Chen_2023} examine optimal strategies for social distancing. \citet{Acemoglu_2024} develops frameworks for optimal surveillance testing during epidemics. Regarding vaccination strategies, \citet{Tortorice_2024} analyzes the government's optimal funding levels for vaccine research and development to minimize pandemic costs. The labour market impact of pandemics has also received attention. For instance, \citet{Jackson_2024} incorporates human capital losses from unemployment into macroeconomic models, showing that pandemics lead to significant reductions in total factor productivity.

Most studies in the literature that address the control problem in economic epidemiology utilize simplistic compartmental models in order to address computational challenges. Although these models serve to reduce complexity, leveraging advancements in compartmental modelling allows for a more detailed representation of the disease transmission process. This, in turn, can significantly enhance the ability of governments to formulate effective policies during a pandemic. To address this gap, this paper employs a more comprehensive compartmental model to better capture the dynamics of disease transmission. Furthermore, to tackle the computational difficulties, we draw on recent developments in the field of machine learning and utilize neural networks to solve the proposed high-dimensional control problem. In this context, we consider the social dilemma faced by local governments during outbreaks of infectious diseases. On one hand, prioritizing saving lives leads to significantly higher vaccination implementation costs, as the government aims to extensively promote the vaccination campaign and sustain a high rate of inoculation over an extended period. On the other hand, allocating fewer funds to vaccination efforts places greater pressure on the healthcare system, potentially leading to social dissatisfaction and increased healthcare system expenditures as the disease spreads rapidly. This trade-off is driven by the negative correlation between vaccination administration rates and hospitalization rates. As \citet{Amato_2020} points out, higher vaccination rates are typically associated with reductions in hospitalizations. Building on this concept, we formulate the optimization problem to assist policymakers in seeking an optimal balance between these trade-offs during a disease outbreak. This optimization phase involves a high-dimensional stochastic control problem with disease dynamics governed by compartmental models. The literature indicates that obtaining analytical solutions for such problems is computationally intractable, and due to the "curse of dimensionality" discussed by \citet{Bellman_2015}, most traditional methods are incapable of finding numerical solutions. Motivated by advances in scientific machine learning (SciML), particularly the use of deep learning to solve differential equations, we investigate the deep neural network developed by \citet{Han_2016} to address the problem. This method has proven effective in solving high-dimensional stochastic differential equations (SDEs). In this case, the deep neural network architecture approximates the control directly at each time point using feedforward subnetworks and connects outputs across time to form the final loss function. Utilizing this approach, we determine the optimal vaccination strategy under the proposed economic epidemiological model over time, enabling policymakers to make informed decisions.

Finally, in the last layer of the proposed framework, we perform sensitivity analysis to examine how various factors affect our model framework. By conducting this analysis, we identify critical parameters that influence the system's outputs, thereby enabling us to guide policymakers' decision-making processes more robustly.

To demonstrate the application of our framework, we present a case study utilizing a real-time dataset\footnote{COVID-19 Data for Australia. Available online at the address https://github.com/M3IT/COVID-19 Data.} containing the official historical COVID-19 records for each state of Australia. In relation to our analysis, records specific to Victoria were extracted for the purpose of determining parameters during the modelling process. Additionally, in order to support the detailed modelling of specific dose injections, we obtained a supplementary set of vaccination records from the Department of Health and Aged Care\footnote{COVID-19 Vaccination Data. Available online at: https://www.health.gov.au/resources/publications.} to work out the vaccination rates according to each dose. The numerical analysis shows that the government should prioritize vaccinations in the early stages, and then gradually decrease the vaccination rate once the disease is kept under control. This finding aligns with the optimal vaccination schedules computed for the theoretical epidemic in the study of \citet{Hethcote_1973}, indicating that early vaccination of susceptibles is a key strategy for preventing or controlling an epidemic and minimizing the aggregate cost of the vaccination program. In this case study, we find that although the government's actual vaccination program reduced the number of infectious individuals relative to a constant vaccination strategy, our proposed framework could further reduce these numbers, thereby decreasing the associated economic costs even more. The results indicate that the recommended strategy not only eases the economic burden of the government, but also lessens the strain on the healthcare system by reducing the number of infected individuals at the same time.

The key factors within the framework that drive the final outcomes are examined through a sensitivity analysis to assess policymakers' responses to pandemics under uncertainty. The results show that in scenarios with high noise intensity, the government must allocate additional vaccinations to manage larger environmental fluctuations. For diseases with higher infection rates, the government is required to sustain a high vaccination administration rate over an extended period, leading to higher expenditures. We then examine how economic considerations influence vaccination strategies. When vaccination is more expensive, the framework recommends reducing vaccination administration rates more quickly. Conversely, when labour losses impose higher financial costs on the government, policymakers appear more inclined to maintain high injection rates for a longer period. Finally, we evaluate the performance of our framework under differing levels of vaccination hesitancy and at various pandemic stages. This analysis offers policymakers valuable insights into the most effective vaccination strategies across a range of scenarios.

The organization of the paper is as follows. Section 2 presents the compartmental framework and explains the construction of the economic cost function. Section 3 details the methodology of PINN for parameter calibration and describes the deep neural network approach used to address the high-dimensional stochastic control problem. Section 4 provides a numerical case study of Victoria, Australia, comparing the optimal vaccination strategies with alternative strategies based on cost and population dynamics. Section 5 evaluates the effects of key factors within the framework and examines vaccination strategies under various scenarios through sensitivity analysis. Finally, Section 6 concludes with a summary of the findings.

\section{Model formulation}
\label{sec: problem formulation}

In this section, we examine the first phase of our framework, which focuses on modelling. We begin by addressing the mathematical models used to represent epidemic dynamics. Following this, we develop the expression to quantify the government's aggregate expenditure during a pandemic over time.

\subsection{The epidemic model}
\subsubsection{The SVEI3RD model}
\label{subsec in setup: sto model without control}

For epidemic modelling, we employ the comprehensive SVEI3RD compartmental model to analyze disease transmission dynamics. This model builds on the foundational SIR framework, which divides the population into three compartments: $\text{S}$ for susceptibles, $\text{I}$ for infectious individuals, and $\text{R}$ for those who have recovered during the pandemic. In this framework, we focus on the proportion of the total population in each compartment over time, denoted by $S_t$, $I_t$, and $R_t$. The key parameter $\beta$ governs the infectious rate, while transitions from the Infectious state to the Recovered state depend on $\delta$, which reflects the probability of recovery over time.

To extend this fundamental model, we incorporate an additional Dead state, which accounts for individuals who succumb to the disease. The mortality rate among the infected population is captured by the parameter $\mu$. In addition, the model is further enriched with the inclusion of Vaccinated and Exposed states. The Exposed state represents individuals who are infected but remain asymptomatic, effectively acting as carriers of the disease. The parameter $\gamma$ specifies the incubation rate at which exposed individuals transition to the Infectious state. Vaccination dynamics are characterized by two additional parameters: $\sigma$, which measures the inefficiency of vaccination, and $\alpha$, which represents the rate of vaccine administration.

\begin{figure}[htbp]
    \centering
    \scalebox{0.85}{
\begin{tikzpicture}[auto, scale=1, every node/.style={scale=1}]
 
\node[draw,
    rounded rectangle,
    minimum width=3cm,
    minimum height=1cm
    ] (block_s) {Susceptible (S)};

\node[draw,
    rounded rectangle,
    below = of block_s,
    minimum width=3cm,
    minimum height=1cm] (block_v) {Vaccinated (V)};
 
\node[draw,
    rounded rectangle,
    right=of block_s,
    minimum width=2.5cm,
    minimum height=1cm,
    inner sep=0] (block_e) {Exposed (E)};

\node[draw,
    rounded rectangle,
    right=of block_e,
    minimum width=4.5cm,
    minimum height=1cm,
    inner sep=0] (block_i2) {Patients in the Hospital ($
    \text{I}_2$)};
    
\node[draw,
    rounded rectangle,
    above=of block_i2,
    minimum width=4.5cm,
    minimum height=1cm,
    inner sep=0] (block_i1) {Mild Infectious ($\text{I}_1$)};

\node[draw,
    rounded rectangle,
    below=of block_i2,
    minimum width=4.5cm,
    minimum height=1cm,
    inner sep=0] (block_i3) {Patients in the ICU ($\text{I}_3$)};

\node[draw,
    rounded rectangle,
    right=of block_i2,
    minimum width=2.5cm,
    minimum height=1cm,
    inner sep=0] (block_r) {Recovered (R)};
    
\node[draw,
    rounded rectangle,
    below=3cm of block_r,
    minimum width=2.5cm,
    minimum height=1cm,
    inner sep=0] (block_d) {Dead (D)};
 
\draw[-latex] (block_s) -- (block_e)
    node[pos=0.4,fill=white]{$\beta$};

\draw[-latex] (block_s) -- (block_v)
    node[pos=0.4,fill=white]{$\alpha$};



\draw[-latex] (block_i1) -- (block_i2)
    node[pos=0.4,fill=white]{$p_1$};

\draw[-latex] (block_i2) -- (block_i3)
    node[pos=0.4,fill=white]{$p_2$};
    

\draw[-latex] (block_i2) -- (block_r)
    node[pos=0.4,fill=white]{$\delta_2$};
    
    

\node[circle, inner sep=0.1pt, fill=black, above=1.3cm of block_s] (inflow_to_s) {};
\node[circle, inner sep=0.1pt, fill=black, below=1.5cm of block_e] (v_to_e) {};
\node[circle, inner sep=0.1pt, fill=black, above=1.5cm of block_e] (e_to_i1) {};
\node[circle, inner sep=0.1pt, fill=black, above=1.5cm of block_e] (e_to_i3) {};
\node[circle, inner sep=0.1pt, fill=black, above=1.5cm of block_r] (i1_to_r) {};
\node[circle, inner sep=0.1pt, fill=black, below=1.5cm of block_r] (i3_to_r) {};
\node[circle, inner sep=0.1pt, fill=black, below=1.5cm of block_i3] (i3_to_d) {};

\node[circle, inner sep=0.1pt, fill=black, below right=0.7cm and 1cm of block_s] (point_s) {};
\node[circle, inner sep=0.1pt, fill=black, below right=0.7cm and 1cm of block_v] (point_v) {};
\node[circle, inner sep=0.1pt, fill=black, below right=0.7cm and 1cm of block_e] (point_e) {};
\node[circle, inner sep=0.1pt, fill=black, below right=0.7cm and 0.9cm of block_i1] (point_i1) {};
\node[circle, inner sep=0.1pt, fill=black, below right=0.7cm and 0.9cm of block_i2] (point_i2) {};
\node[circle, inner sep=0.1pt, fill=black, below right=0.7cm and 0.9cm of block_i3] (point_i3) {};
\node[circle, inner sep=0.1pt, fill=black, below right=0.7cm and 1cm of block_r] (point_r) {};

\draw[->] (block_s) -- (point_s)
    node[pos=0.5,fill=white]{$\zeta$};
\draw[->] (block_v) -- (point_v)
    node[pos=0.5,fill=white]{$\zeta$};
\draw[->] (block_e) -- (point_e)
    node[pos=0.5,fill=white]{$\zeta$};
\draw[->] (block_i1) -- (point_i1)
    node[pos=0.65,fill=white]{$\zeta$};
\draw[->] (block_i2) -- (point_i2)
    node[pos=0.65,fill=white]{$\zeta$};
\draw[->] (block_i3) -- (point_i3)
    node[pos=0.65,fill=white]{$\zeta$};
\draw[->] (block_r) -- (point_r)
    node[pos=0.5,fill=white]{$\zeta$};

\draw[->] (inflow_to_s) -- (block_s)
    node[pos=0.5,fill=white]{$\Lambda$};

\draw[-] (block_v) -- (v_to_e);  
\draw[<-] (block_e) -- (v_to_e)
    node[pos=0.5,fill=white]{$\sigma \beta$};

\draw[-] (block_e) -- (e_to_i1);  
\draw[->] (e_to_i1) -- (block_i1)
    node[pos=0.5,fill=white]{$\gamma$};

\draw[-] (block_i1) -- (i1_to_r);  
\draw[->] (i1_to_r) -- (block_r)
    node[pos=0.5,fill=white]{$\delta_1$};

\draw[-] (block_i3) -- (i3_to_r);  
\draw[<-] (block_r) -- (i3_to_r)
    node[pos=0.5,fill=white]{$\delta_3$};

\draw[-] (block_i3) -- (i3_to_d);  
\draw[->] (i3_to_d) -- (block_d)
    node[pos=0.5,fill=white]{$\mu$};

\end{tikzpicture}
        }
    \caption{Flow diagram of the SVEI3RD model.}
    \label{fig: SVEI3RD flow chart}
\end{figure}
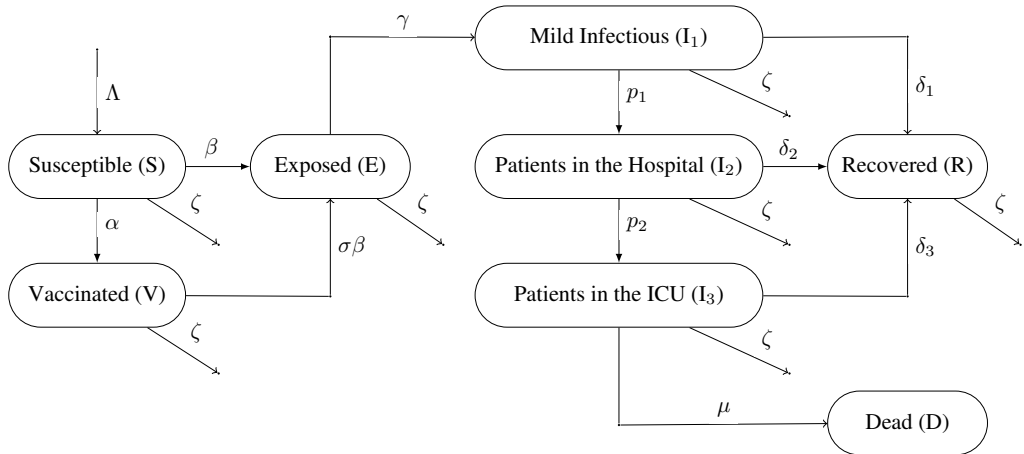

As a feature of this model, the infected population is categorized into three subgroups based on the severity of the illness. In this framework, the initial stage of infection corresponds to individuals experiencing only mild symptoms that do not require hospitalization. These individuals are classified as being in state $\text{I}_1$, which represents the first stage of infection. If symptoms worsen to a level requiring hospitalization, the individual transitions to the second infectious state, $\text{I}_2$. A further progression in severity, requiring admission to the Intensive Care Unit (ICU), moves the individual to the third stage, $\text{I}_3$. The parameter $p_1$ denotes the rate of transitioning from mild symptoms to hospitalization, while $p_2$ represents the probability of requiring ICU care after hospitalization. Additionally, the model takes into account varying recovery rates $\delta_1$, $\delta_2$, and $\delta_3$, which correspond to different infectious stages to better reflect real-world conditions. Finally, it is assumed that only individuals in the ICU are at risk of mortality, provided that all patients receive appropriate medical care as their condition worsens.

Moreover, vital dynamics are also incorporated into the model, where $\Lambda$ denotes the total inflow of population, capturing the effect of both births and net migration, and $\zeta$ denotes the outflow caused by deaths. Using these parameters, the model allows the population size to change over time as it accounts for new individuals entering through birth or immigration and those leaving through death or emigration. 

Additionally, in this study, we incorporate stochasticity into the compartmental model structure. In real-world scenarios, environmental conditions often change unpredictably, leading to random disruptions in population systems. As a result, stochastic epidemic models that account for random noise have been widely developed. Compared to deterministic models, these stochastic approaches provide a more realistic depiction of disease dynamics. Following the methodology outlined in \citet{Gray_2011} and \citet{Din_2020}, we incorporate independent standard Brownian motions into each compartment to represent fluctuations within the state population. The final compartmental model consists of a set of SDEs, depicted in Equation (\ref{eqn: sto SVEI3RD without control}), with its corresponding flow chart illustrated in Figure \ref{fig: SVEI3RD flow chart}. The detailed proof regarding the existence of unique positive solutions to this system of SDEs is provided in Appendix \ref{sec: appendix 1 existence of sde system}.

\begin{equation}
\left\{
\begin{aligned}
dS & = ( \Lambda -(\beta_1 I_1 + \beta_2 I_2 + \beta_3 I_3) S - \alpha S -\zeta S ) dt + \sigma_1 S d W_{1,t},
\\
dV & = (\alpha S - (\beta_1 I_1 + \beta_2 I_2 + \beta_3 I_3) \sigma V -\zeta V) dt + \sigma_2 V d W_{2,t} ,
\\
dE & = ( (\beta_1 I_1 + \beta_2 I_2 + \beta_3 I_3) (S + \sigma V )- \gamma E -\zeta E ) dt + \sigma_3 E d W_{3,t},
\\
dI_1 & = (\gamma E - (\delta_1 + p_1) I_1 -\zeta I_1 ) dt + \sigma_4 I_1 d W_{4,t},
\\
dI_2 & = (p_1 I_1 - (\delta_2 + p_2) I_2 -\zeta I_2) dt + \sigma_5 I_2 d W_{5,t},
\\
dI_3 & = (p_2 I_2 - (\delta_3 + \mu ) I_3 -\zeta I_3 ) dt + \sigma_6 I_3 d W_{6,t},
\\
dR & = (\delta_1 I_1 + \delta_2 I_2  + \delta_3 I_3 -\zeta R) dt + \sigma_7 R d W_{7,t},
\\
dD & = ( \mu I_3 ) dt + \sigma_8 D d W_{8,t}.
\label{eqn: sto SVEI3RD without control}
\end{aligned}\right.
\end{equation}

\subsubsection{The dependence structure between vaccination administration and hospitalization}
\label{subsec in setup: control model setup}

To account for the trade-offs faced by the government, we incorporate the dependence structure between the vaccination administration rate and the hospitalization rate into our model design. Our analysis assumes that vaccination efforts are distributed uniformly among all individuals classified as susceptible during the pandemic period. This assumption ensures that each individual in the Susceptible state has an equal likelihood of transitioning to the Vaccinated state. Furthermore, as the model focuses on a short timeframe, we assume that immunity conferred by vaccination remains effective throughout the model's duration. Consequently, we do not account for the possibility of individuals re-entering the Susceptible state after transitioning to the Recovered state.

Empirical evidence from the literature indicates a significant relationship between the hospitalization rate and the vaccination administration rate, as established by studies such as \citet{Uzun_2022} and \citet{Amato_2020}. Building on the premise that the primary goal of vaccine development is to reduce hospitalizations, we propose that the hospitalization rate, defined as the transition rate from state $\text{I}_1$ to $\text{I}_2$, depends on changes in vaccination administration rates, which are captured by the rate at which susceptible individuals enter the Vaccinated state. This suggests that effective vaccination strategies lead to a decline in hospitalizations associated with the disease. In light of this, we revise our compartmental model to incorporate the hypothesis that the hospitalization rate, $p_1$, is directly influenced by the control variable, vaccination administration rate, $\alpha_t$.

This concludes the section on the epidemic model within our framework, which will serve as a tool to capture the dynamics of pandemic spread as well as the effects of vaccination. By employing this approach together with a subsequent economic cost model, we aim to determine the optimal vaccination strategy that achieves widespread and efficient vaccination coverage while minimizing government expenditures.

\subsection{The expenditure function}
\label{subsec in setup: objective function formulation}

The primary objective of this work is to minimize the total government expenditures resulting from the pandemic. Drawing on the methodology of \citet{Caulkins_2023}, the objective function for our optimization problem encompasses four key components: vaccination costs, quarantine costs, economic costs, and healthcare costs.

Vaccination costs identified as $V_v(\alpha_t)$ represent the first component of government expenditures. According to a report\footnote{Public health response to COVID-19. Available online at: https://www.aph.gov.au/About\_Parliament/Parliamentary\_Departments/

Parliamentary\_Library/pubs/rp/BudgetReview202021/PublicHealthResponseCOVID-19.} by the Australian Government, these costs accounted for a substantial share of public spending during the epidemic. Following the approach of \citet{Lenhart_2007}, vaccination costs can be estimated based on the dose administration rate, expressed as
$$V_v(\alpha_t) = c_1\alpha_t^2,$$
where $c_1$ denotes the cost parameter associated with vaccine administration, and $\alpha_t$ indicates the vaccination administration rate at time $t$.

Furthermore, many countries implement lockdowns or mandatory quarantines for individuals at risk. Regarding these policies, some governments provide quarantine subsidies denoted as $V_q(\alpha_t)$ during lockdown periods, which comprises another segment of government expenditures. We model this type of cost as
$$V_q(\alpha_t) = c_2 E_t(\alpha_t).$$
This expression refers to the financial support provided by the government to individuals exposed to risk, a measure that is also essential during the pandemic.

Another significant component of government expenditures during the pandemic is healthcare expenditures expressed as $V_h(\alpha_t)$. Pandemic-related costs are influenced by the treatment expenses for individuals requiring different levels of care, which include medication, hospitalization, and intensive care. The parameters $c_3$, $c_4$, and $c_5$ are chosen to capture these specific healthcare costs. As a result, the total healthcare cost is given by
$$V_h(\alpha_t) = c_3 I_{1,t}(\alpha_t) + c_4 I_{2,t}(\alpha_t) + c_5 I_{3,t}(\alpha_t). $$

Finally, the last cost component in the expenditure function relates to the economic cost referred to as $V_l (\alpha_t)$, which are primarily associated with the decline in economic output. Since our analysis focuses on the short term, the Cobb-Douglas production function is simplified under the assumption that capital adjustments are not feasible. Based on the premise that human capital is the main driver of production changes during an outbreak, we model the economic cost in terms of labour productivity. During the pandemic, we assume that individuals who are exposed to or infected with the disease cannot work. To capture the working population, we use $\psi$ to represent the percentage of workers in the total population. This is defined as the product of the proportion of the working-age population and the labour force participation rate, which together reflect the share of workers relative to the overall population. Pandemic dynamics result in a reduction in total labour input and this leads to the following estimation of economic loss:
$$V_l (\alpha_t) = c_6 \psi \{ 1 -[S_t(\alpha_t) + V_t(\alpha_t) + R_t(\alpha_t)]\}.$$ 
Throughout this scenario, disease decreases the labour force over time, thus adversely impacting economic output. We estimate the resulting losses to the economy by quantifying these changes in the working population.

To ensure clarity, the overall objective function is defined as:
\begin{equation}
J= \mathbb{E} \bigg[\ \int_{0}^{T}  V_v(\alpha_t) + V_q(\alpha_t) + V_h(\alpha_t) + V_l(\alpha_t) \ \textrm{d}t \bigg].
\label{eqn: objective functn}
\end{equation}
This expression represents the sum of the four cost components over time: vaccination costs, quarantine subsidy costs, healthcare costs, and economic losses. The primary objective of the proposed framework is to minimize cumulative expenditures across these dimensions throughout the pandemic period. The optimization problem is therefore framed as minimizing cumulative expenditures, $J$, by adjusting the control variable, the vaccination administration rate, $\alpha_t$, over time. After formulating this control setup, we demonstrate the existence of the solution to this stochastic control problem, as detailed in Appendix \ref{sec: appendix 8 Existence of control Solution}.

\section{Model solving}
\label{sec: quantitative analysis}

After developing our framework, we turn to the numerical methodologies used to address the proposed problem. In recent years, machine learning has garnered significant attention because of its broad range of potential applications across numerous scientific disciplines. Specifically, neural networks emerge as powerful, reliable, and efficient computational methods, largely because of their role as universal function approximators, as demonstrated by \citet{Hornik_1989}. According to \citet{Balázs_2001}, any nonlinear function can be approximated given a sufficient number of neurons and a well-chosen configuration. This principle underpins the development of neural networks for the study of dynamic systems. One well-known approach to using neural networks for dynamic systems is the Physics Informed Neural Networks (PINN), which enables the model to learn not only from observed data but also from the intrinsic dynamics within those data. PINN provides a robust framework for solving SDE-based models and inferring parameters, as demonstrated by \citet{Raissi_2019}, \citet{Meng_2020}, and \citet{Chen_2021_pinn}.

In the proposed framework, we apply this concept by estimating the compartmental model's parameters using neural networks. Recent work has explored PINNs in the context of SIR-type compartmental models, where \citet{Berkhahn_2022}, \citet{Ning_2023}, and \citet{He_2023} show that PINNs are able to capture the complex dynamics of infectious diseases and calibrate model parameters within the system of equations. Accordingly, we employ this approach in the initial phase of our framework to estimate the unknown parameters in the compartmental model.

Regarding the control problem in the second layer of the proposed framework, we survey the literature on numerical methodologies for solving high-dimensional control problems. It emerges that machine learning emulators can significantly reduce computation time once operational, outperforming traditional local process modules. Consequently, deep neural networks possess powerful nonlinear fitting capabilities for approximating high-dimensional functions and offer strong potential to solve high-dimensional partial differential equations (PDEs). Numerous researchers have explored the use of neural networks for optimal control problems, including \citet{Effati_2013}, \citet{Han_2016}, \citet{Bach_2017}, \citet{Böttcher_2022}, \citet{Ji_2022}, and \citet{Raissi_2024}. Among these methods, each specific neural network structure tends to have its own advantages and drawbacks. Depending on the particular problem setup and the available training dataset, researchers should select the appropriate network architecture that is most suitable for their needs.

Within our framework, we employ the approach developed by \citet{Han_2016} to solve the stochastic optimal control problem. This fundamental method uses neural networks to derive the optimal control solution directly. Although more advanced network structures could also address this problem, they tend to be computationally demanding and require large training datasets to ensure generalization, which would significantly increase data processing time. Given these considerations, our framework is designed to utilize the most recent disease transmission data and provide updated vaccination strategies to social planners in response to changing conditions. The control problem therefore should be solved within a reasonable timeframe. By balancing accuracy and computational efficiency, we have chosen this approach, and a more detailed discussion of the selected network structure is presented in Section \ref{subsec in control solving: nn to solve high dimension problem}.

\subsection{PINN for parameter calibration}

Relying on the high-dimensional structure of our compartmental models, we utilize the PINN approach developed by \citet{Raissi_2019} to numerically calibrate parameters in Equation \ref{eqn: sto SVEI3RD without control}). PINNs offer the advantage of combining data-fitting capabilities with adherence to the underlying physics of the system. This dual capability enables PINNs to accurately describe physical processes, which leads to their application in a broad array of solving differential equation problems. On the basis of this neural network framework, \citet{Shaier_2021} extends the use of PINNs to SIR-type compartmental models by introducing the DINN approach. This approach demonstrates how neural networks can identify unique parameters and capture patterns of disease spread. By incorporating prior knowledge through the DINN framework, the model search space is effectively constrained, thus reducing the amount of data required. Building on this methodology, we also utilize PINNs with the structure shown in Figure \ref{fig: pinn structure}. This approach ensures that the network outputs remain consistent with the measured data and the physical laws described by the system of differential equations when estimating the parameters of the compartmental model. By leveraging this synergy between precise data fitting and compliance with physical laws, PINNs emerge as a powerful platform for addressing complex epidemiological challenges.

To illustrate the application of PINN within our compartmental framework, we first introduce the notation used in designing the neural network for the SVEI3RD model. In this framework, the compartment states $\{S_{t_i}, V_{t_i}, E_{t_i}, I_{1,{t_i}}, I_{2,{t_i}}, I_{3,{t_i}}, R_{t_i}, D_{t_i}\}$ are collectively denoted as $X_{t_i}$, while the set of trainable parameters $\{\alpha, \beta_1, \beta_2, \beta_3, \sigma, \gamma, \delta_1, \delta_2, \delta_3, p_1, p_2, \mu \}$ are referred to as $\Theta$. Each time point is represented by $t_i$, and the data consists of $N$ points, ending at time $t_N$.

During the training process, the neural network is trained using data detailing the spread of the disease over time. As the model learns the underlying system dynamics, it generates estimations for the parameters driving these dynamics. Given that our compartmental model includes a wide range of parameters requiring calibration, we employ the procedure described in \citet{Shaier_2021}, which constrains the parameter search to predefined grids informed by existing literature. In addition, we evaluated the model's performance using a test dataset to validate its prediction accuracy.

In this case, the input of the neural network is time variable $t_i$, and its output is a tensor denoted as $X_{t_i}^{NN}$ which consists of $\{S^{NN}_{t_i}, V^{NN}_{t_i}, E^{NN}_{t_i}, I^{NN}_{1,{t_i}}, I^{NN}_{2,{t_i}}, I^{NN}_{3,{t_i}}, R^{NN}_{t_i}, D^{NN}_{t_i}\}$ and it provides an estimated representation of the disease's compartment states at each time step. The parameters of the differential equations $\Theta$, which are associated with the physically constrained loss component, are iteratively adjusted to minimize the total loss function as part of the training process. The entire network is trained using the backpropagation algorithm introduced by \citet{Hecht-Nielsen_1992}. Appendix \ref{subsec: appendix 5 det model fitting algorithm} summarizes the algorithm used to apply PINNs for deterministic compartmental model parameter estimation.

In relation to the loss function, the governing physical principles of the system of equations are integrated directly into the learning process. Specifically, there are two principal components to the loss function: a data fidelity component and a physics-based component. The data fidelity component represents the extent to which the neural network's predictions match the empirical data, typically measured using the mean squared error (MSE) between the fitted results $X_{t_i}^{{NN}}$ generated by the neural network and the observed values $X_{t_i}$ over time. The physics-based term, in contrast, ensures that the neural network’s predictions conform to the physical laws underlying the system. Since the PINN architecture is designed to calibrate deterministic compartmental models, the physical residual loss is computed by comparing the automated differentiation of the neural network $\dot{X}_{t_i}^{{NN}}= \{\dot{S}^{NN}_{t_i}, \dot{V}^{NN}_{t_i}, \dot{E}^{NN}_{t_i}, \dot{I}^{NN}_{1,{t_i}}, \dot{I}^{NN}_{2,{t_i}}, \dot{I}^{NN}_{3,{t_i}}, \dot{R}^{NN}_{t_i}, \dot{D}^{NN}_{t_i}\}$ with the ordinary differential equations (ODEs) governing compartments over time. 

Accordingly, the aggregate loss function for a general PINN, which is a weighted summation of these two terms, is defined as:

\begin{equation*}
\begin{aligned}
    L &= \lambda_{{Data}} L_{{Data}} + \lambda_{DE} L_{DE}, 
    \label{eqn: pinn loss}
\end{aligned}
\end{equation*}
where
 \begin{equation*}
    \begin{aligned}       
    L_{Data} = \frac{1}{N} \sum_{i=1}^{N}&  (X_{t_i} - X_{t_i}^{{NN}})^2 \\
    =\frac{1}{N} \sum_{i=1}^{N} 
    &\bigg[\ (S_{t_i} - S_{t_i}^{{NN}})^2 + (V_{t_i} - V_{t_i}^{{NN}})^2 
        +(E_{t_i} - E_{t_i}^{{NN}})^2 + (I_{1,t_i} - I_{1, t_i}^{{NN}})^2 \\
        +& (I_{2,t_i} - I_{2, t_i}^{{NN}})^2   + (I_{3,t_i} - I_{3, t_i}^{{NN}})^2 
        + (R_{t_i} - R_{t_i}^{{NN}})^2 + (D_{t_i} - D_{t_i}^{{NN}})^2 \bigg]\
    \end{aligned}
    \end{equation*}
and
    \begin{equation*}
    \begin{aligned} 
    L_{DE} 
    =\frac{1}{N} \sum_{i=1}^{N} 
   & \bigg\{ [\Lambda -(\beta_1 I_{1,t_i} + \beta_2 I_{2,t_i} + \beta_3 I_{3,t_i}) S_{t_i} - \alpha S_{t_i} -\zeta S_{t_i} -  \dot{S}_{t_i}^{NN}]^2 \\
    +& [\alpha S_{t_i} - (\beta_1 I_{1, t_i} + \beta_2 I_{2, t_i} + \beta_3 I_{3, {t_i}}) \sigma V_{t_i} -\zeta V_{t_i}-  \dot{V}_{t_i}^{NN} ]^2 \\
   + &[ (\beta_1 I_{1,t_i} + \beta_2 I_{2,t_i} + \beta_3 I_{3,t_i}) (S_{t_i} + \sigma V_{t_i}) - \gamma E_{t_i} -\zeta E_{t_i} -  \dot{E}_{t_i}^{NN}]^2 \\
    +& [\gamma E_{t_i} - (\delta_1 + p_1) I_{1,t_i} -\zeta I_{1,t_i}-  \dot{I}_{1,t_i}^{NN}]^2 \\
     +& [p_1 I_{1,t_i} - (\delta_2 + p_2) I_{2,t_i} -\zeta I_{2,t_i}-  \dot{I}_{2,t_i}^{NN}]^2 \\
      +& [p_2 I_{2,t_i} - (\delta_3 + \mu ) I_{3,t_i} -\zeta I_{3,t_i} -  \dot{I}_{3,t_i}^{NN}]^2 \\
      +& [\delta_1 I_{1,t_i} + \delta_2 I_{2,t_i}  + \delta_3 I_{3,t_i} -\zeta R_{t_i}-  \dot{R}_{t_i}^{NN}]^2 \\
      + &[\mu I_{3,t_i} - \dot{D}_{t_i}^{NN}]^2 \bigg\}.
    \end{aligned}
    \end{equation*}

In this context, the hyperparameters $\lambda_{{Data}}$ and $\lambda_{{DE}}$ define the relative weights assigned to data fidelity and adherence to physical laws within the overall loss function. By adjusting these parameters, researchers can control the trade-off between the model's fidelity to specific data points and its compliance with governing physical equations, thereby enhancing its predictive capability.
\begin{figure}[htbp]
    \centering
    \includegraphics[width=\textwidth]{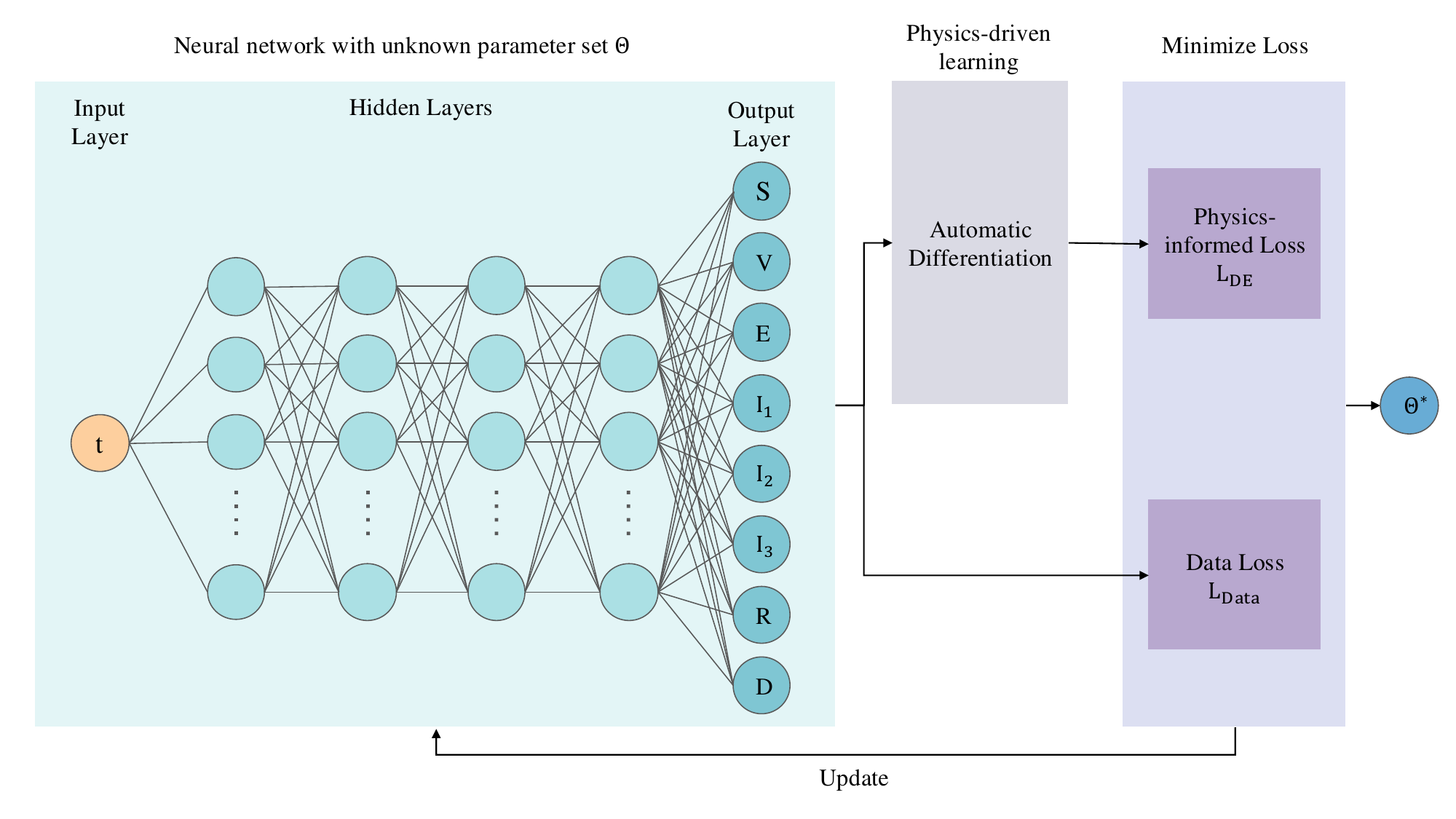}
    \caption{Structural overview of the Physics Informed Neural Network (PINN).}
    \label{fig: pinn structure}
\end{figure}
Moreover, to account for the stochastic nature of our compartmental model, we modify the PINN algorithm by incorporating an inner loop that performs Monte Carlo iterations by iteratively updating the iteration value of $j$, and the total number of such iterations is represented as $N_{MC}$. To capture the stochasticity of the compartmental model, we perform repeated simulations using varying random noise inputs and compute their average effects. For the construction of the loss function in this stochastic setting, we employ the Euler-Maruyama discretization method, which allows the process to be updated based on the SDE terms. The discrete update equations for each compartment are detailed as follows:
\begin{equation*}
    \begin{aligned}
                S_{t_{i+1},j}^{{DE}} &= S_{t_i,j}^{{NN}} + [\Lambda -(\beta_1 I_{1,t_i,j}^{{NN}} + \beta_2 I_{2, t_i,j}^{{NN}} + \beta_3 I_{3,t_i,j}^{{NN}}) S_{t_i,j}^{{NN}} - \alpha_t S_{t_i,j}^{{NN}} -\zeta S_{t_i,j}^{{NN}} ] \Delta t + \sigma_1 \sqrt{\Delta t} \Delta W_{1, t_i,j} ,
                \\
                V_{t_{i+1},j}^{{DE}} &= V_{t_i,j}^{{NN}} + [\alpha_t S_{t_i,j}^{{NN}} - (\beta_1 I_{1, t_i,j}^{{NN}} + \beta_2 I_{2, t_i,j}^{{NN}} + \beta_3 I_{3, t_i,j}^{{NN}}) \sigma V_{t_i,j}^{{NN}} -\zeta V_{t_i,j}^{{NN}}] \Delta t + \sigma_2 \sqrt{\Delta t} \Delta W_{2,{t_i},j},
                \\
                E_{t_{i+1},j}^{{DE}} &= E_{t_i,j}^{{NN}} + [(\beta_1 I_{1,t_i,j}^{{NN}} + \beta_2 I_{2,t_i,j}^{{NN}} + \beta_3 I_{3,t_i,j}^{{NN}}) (S_{t_i,j}^{{NN}} +  \sigma V_{t_i,j}^{{NN}} )- \gamma E_{t_i,j}^{{NN}} -\zeta E_{t_i,j}^{{NN}} ] \Delta t + \sigma_3 \sqrt{\Delta t} \Delta W_{3,t_i,j} ,
                \\
                I_{1, t_{i+1},j}^{{DE}} &= I_{1,t_i,j}^{{NN}} + [\gamma E_{t_i,j}^{{NN}} - (\delta_1 + p_1(\alpha_t)) I_{1,t_i,j}^{{NN}} -\zeta I_{1,t_i,j}^{{NN}} ] \Delta t + \sigma_4 \sqrt{\Delta t} \Delta W_{4,t_i,j} ,
                \\
                I_{2, t_{i+1},j}^{{DE}} &= I_{2,t_i,j}^{{NN}} + [p_1(\alpha_t) I_{1,t_i,j}^{{NN}} - (\delta_2 + p_2) I_{2,t_i,j}^{{NN}} -\zeta I_{2,t_i,j}^{{NN}}] \Delta t + \sigma_5 \sqrt{\Delta t} \Delta W_{5,t_i,j} ,
                \\
                I_{3, t_{i+1},j}^{{DE}} &= I_{3,t_i,j}^{{NN}} + [p_2 I_{2,t_i,j}^{{NN}} - (\delta_3 + \mu ) I_{3,t_i,j}^{{NN}} -\zeta I_{3,t_i,j}^{{NN}} ] \Delta t + \sigma_6 \sqrt{\Delta t} \Delta W_{6,t_i,j} ,
                \\
                R_{t_{i+1},j}^{{DE}} &= R_{t_i,j}^{{NN}} + (\delta_1 I_{1,t_i,j}^{{NN}} + \delta_2 I_{2,t_i,j}^{{NN}}  + \delta_3 I_{3,t_i,j}^{{NN}} -\zeta R_{t_i,j}^{{NN}}) \Delta t + \sigma_7 \sqrt{\Delta t} \Delta W_{7,t_i,j},
                \\
                D_{t_{i+1},j}^{{DE}} &= D_{t_i,j}^{{NN}} +   \mu I_{3,t_i,j}^{{NN}} \Delta t + \sigma_8 \sqrt{\Delta t} \Delta W_{8,t_i,j}.
            \end{aligned}
\end{equation*}

In this set of equations, the time step size is denoted by $\Delta t=t_{i+1}-{t_i}$, and $Z$ is defined as the set of noise intensity parameters given by $\{\sigma_1, \sigma_2, \sigma_3, \sigma_4, \sigma_5, \sigma_6, \sigma_7, \sigma_8\}$, each scaled by the square root of $\Delta t$. The term $\Delta W_{t_i,j}$ represents the increment at each time point during the $j$-th iteration and follows a standard normal distribution. Under this model configuration, the overall loss function is calculated as the average over iterations, denoted as:

\begin{equation*}
    \begin{aligned}
    L = \lambda_{{Data}} L_{{Data}} + \lambda_{{DE}} L_{{DE}}  = \lambda_{{Data}} L_{{Data}} + \frac{1}{N_{MC}} \sum_{j=1}^{N_{MC}} \lambda_{{DE}} L_{{DE, j}} ,
     \end{aligned}
\end{equation*}

where 
\begin{equation*}
    \begin{aligned}
        L_{{Data}} &= \frac{1}{N} \sum_{i=1}^{N} \bigg[\
    (S_{t_i} - \bar{S}_{t_i}^{NN})^2 + (V_{t_i} - \Bar{V}_{t_i}^{{NN}})^2 
            +(E_{t_i} - \Bar{E}_{t_i}^{{NN}})^2 + (I_{1,t_i} - \Bar{I}_{1, t_i}^{{NN}})^2 \\
            &+ (I_{2,t_i} - \Bar{I}_{2, t_i}^{{NN}})^2  + (I_{3,t_i} - \Bar{I}_{3, t_i}^{{NN}})^2  
            + (R_{t_i} - \Bar{R}_{t_i}^{{NN}})^2 + (D_{t_i} - \Bar{D}_{t_i}^{{NN}})^2 \bigg]\
\end{aligned}
\end{equation*}
with 
\begin{equation*}
    \begin{aligned}
    \Bar{S}_{t_i}^{\text{NN}} = \frac{1}{{N}_{MC}} \sum_{j=1}^{{N}_{MC}} {S}_{t_i,j},\
    \Bar{V}_{t_i}^{\text{NN}} = \frac{1}{{N}_{MC}} \sum_{j=1}^{{N}_{MC}} {V}_{t_i,j},\ 
    \Bar{E}_{t_i}^{\text{NN}} = \frac{1}{{N}_{MC}} \sum_{j=1}^{{N}_{MC}} {E}_{t_i,j},\
    \Bar{I}_{1,t_i}^{\text{NN}} = \frac{1}{{N}_{MC}} \sum_{j=1}^{{N}_{MC}} {I}_{1,t_i,j},\ \\
    \Bar{I}_{2,t_i}^{\text{NN}} = \frac{1}{{N}_{MC}} \sum_{j=1}^{{N}_{MC}} {I}_{2,t_i,j},\
    \Bar{I}_{3,t_i}^{\text{NN}} = \frac{1}{{N}_{MC}} \sum_{j=1}^{{N}_{MC}} {I}_{3,t_i,j},\
    \Bar{R}_{t_i}^{\text{NN}} = \frac{1}{{N}_{MC}} \sum_{j=1}^{{N}_{MC}} {R}_{t_i,j},\
    \Bar{D}_{t_i}^{\text{NN}} = \frac{1}{{N}_{MC}} \sum_{j=1}^{{N}_{MC}} {D}_{t_i,j},\ .
\end{aligned}
\end{equation*}
and 
\begin{equation*}
            \begin{aligned}
                 L_{{DE, j}} =\frac{1}{N} \sum_{i=0}^{N-1}  
            &\bigg[\ (S_{t_{i+1}}^{{DE}} - S_{t_{i+1}}^{{NN}})^2 + (V_{t_{i+1}}^{{DE}}- V_{t_{i+1}}^{{NN}})^2 
            +(E_{t_{i+1}}^{{DE}} - E_{i+1}^{{NN}})^2 + (I_{1,t_{i+1}}^{{DE}} - I_{1,t_{i+1}}^{{NN}})^2  \\
            + &(I_{2,t_{i+1}}^{{DE}} - I_{2,t_{i+1}}^{{NN}})^2  + (I_{3,t_{i+1}}^{{DE}} - I_{3,t_{i+1}}^{{NN}})^2  
            + (R_{t_{i+1}}^{{DE}}- R_{t_{i+1}}^{{NN}})^2 + (D_{t_{i+1}}^{{DE}} - D_{t_{i+1}}^{{NN}})^2 \bigg]\ ,
            \end{aligned}
            \end{equation*}
            
The detailed algorithm for this stochastic disease neural network modelling is presented in Algorithm \ref{subsec: appendix 5 sto model fitting algorithm}. Through this iterative process, the method improves the stability of parameter estimates by mitigating the influence of randomness.

\subsection{Deep neural network for high-dimensional stochastic control}
\label{subsec in control solving: nn to solve high dimension problem}

Following the estimation of parameters for the compartmental model, we turn to the methodology for deriving the numerical solution to the control problem. The formulation of our compartmental model results in a stochastic control problem that is inherently high-dimensional. In traditional stochastic control theory, problems are typically addressed using the dynamic programming principle. However, such methods often face significant technical challenges when applied to high-dimensional settings, particularly due to the issue of "the curse of dimensionality", as described in \citet{Bellman_2015}. As a result, it becomes essential to consider modern machine learning techniques as a viable alternative.

\begin{figure}[htbp]
    \centering
    \usetikzlibrary{positioning, shapes, arrows.meta, calc}

\begin{tikzpicture}[auto, scale=0.6, every node/.style={scale=0.6}]

\tikzstyle{block} = [rectangle, draw, text width=8em, text centered, rounded corners, minimum height=3em]
\tikzstyle{line} = [draw, -latex']

\definecolor{green1}{RGB}{227, 245, 246}
\definecolor{green2}{RGB}{171, 224, 228}
\definecolor{green3}{RGB}{127, 198, 211}
\definecolor{green4}{RGB}{209, 226, 239}
\definecolor{color1}{RGB}{104, 172, 213}
\definecolor{color6}{RGB}{220, 223, 242}
\definecolor{color2}{RGB}{184, 168, 207}
\definecolor{color4}{RGB}{171, 224, 228}
\definecolor{color3}{RGB}{253, 207, 158}
\definecolor{color5}{RGB}{227, 245, 246}

\node[block, fill=color1] (control0) {$\alpha_{t_0}$};
\node[block, fill=color1, right=4em of control0] (control1) {$\alpha_{t_1}$};
\node[rounded corners, align=center, minimum height=3em,  minimum width=8em, right=4em of control1] (control2) {$\cdots$};
\node[block, fill=color1, right=4em of control2] (control3) {$\alpha_{t_{N-1}}$};

\node[block, fill=color2, above=2em of control0] (cost0) {$J(t_0, X_{t_0})$};
\node[block, fill=color2, above=2em of control1] (cost1) {$J(t_1, X_{t_1})$};
\node[rounded corners, align=center, minimum height=3em,  minimum width=8em, above=2em of control2] (cost2) {$\cdots$};
\node[block, fill=color2, above=2em of control3] (cost3) {$J(t_{N-1}, X_{t_{N-1}})$};
\node[block, fill=color2, right=4em of cost3] (cost4) {$J(t_N, X_{t_N})$};

\node[block, fill=color4, below=2em of control0] (top1) (final_layer0) {$h_{t_0}^H$};
\node[block, fill=color4, below=2em of control1] (final_layer1) {$h_{t_1}^H$};
\node[rounded corners, align=center, minimum height=3em,  minimum width=8em, below=2em of control2] (final_layer2) {$\cdots$};
\node[block, fill=color4, below=2em of control3] (final_layer3) {$h_{t_{N-1}}^H$};

\node[rounded corners, align=center, minimum height=3em,  minimum width=8em, below=2em of final_layer0] (dots0){$\cdots$};
\node[rounded corners, align=center, minimum height=3em,  minimum width=8em, below=2em of final_layer1] (dots1) {$\cdots$};
\node[rounded corners, align=center, minimum height=3em,  minimum width=8em, below=2em of final_layer2] (dots2) {$\cdots$};
\node[rounded corners, align=center, minimum height=3em,  minimum width=8em, below=2em of final_layer3] (dots3) {$\cdots$};

\node[block, fill=color4, below=2em of dots0] (first_layer0) {$h_{t_0}^0$};
\node[block, fill=color4, below=2em of dots1] (first_layer1) {$h^0_{t_1}$};
\node[rounded corners, align=center, minimum height=3em,  minimum width=8em, below=2em of dots2] (first_layer2) {$\cdots$};
\node[block, fill=color4, below=2em of dots3] (first_layer3) {$h^0_{t_{N-1}}$};

\node[block, fill=color3, below=2em of first_layer0] (X0) {$X_{t_0}$};
\node[block, fill=color5, below=2em of first_layer1] (X1) {$X_{t_1}$};
\node[rounded corners, align=center, minimum height=3em,  minimum width=8em, below=2em of first_layer2] (X2) {$\cdots$};
\node[block, fill=color5, below=2em of first_layer3] (X3) {$X_{t_{N-1}}$};
\node[block, fill=color5, right=4em of X3] (X4) {$X_{t_{N}}$};

\node[rounded corners, align=center, right=1em of X1] (dot1) {$ $};

\node[block, fill=color3, below=2em of X1] (W1) {$W_{t_1}-W_{t_0}$};
\node[rounded corners, align=center, minimum height=3em,  minimum width=8em, below=2em of X2] (W2) {$\cdots$};
\node[block, fill=color3, below=2em of X3] (W3) {$W_{t_{N-1}} - W_{t_{N-2}}$};
\node[block, fill=color3, below=2em of X4] (W4) {$W_{t_N} - W_{t_{N-1}}$};

\foreach \i in {1,3}
{
    \draw[<-] (cost\i) -- (control\i);
    \draw[<-] (control\i) -- (final_layer\i);
    \draw[<-] (final_layer\i) -- (dots\i);
    \draw[<-] (dots\i) -- (first_layer\i);
    \draw[<-] (first_layer\i) -- (X\i);
}

\draw[<-] (cost0) -- (control0);
\draw[<-] (control0) -- (final_layer0);
\draw[<-] (final_layer0) -- (dots0);
\draw[<-] (dots0) -- (first_layer0);
\draw[<-] (first_layer0) -- (X0);

\draw[->] (cost0) -- (cost1);
\draw[->] (cost1) -- (cost2);
\draw[->] (cost2) -- (cost3);
\draw[->] (cost3) -- (cost4);

\node[circle, inner sep=0.1pt, fill=black, right=1em of control0] (control_mid0) {};
\node[circle, inner sep=0.1pt, fill=black, right=1em of control1] (control_mid1) {};
\node[circle, inner sep=0.1pt, fill=black, right=1em of control3] (control_mid3) {};

\node[circle, inner sep=0.1pt, fill=black, right=1em of X0] (X_mid0) {};
\node[circle, inner sep=0.1pt, fill=black, right=1em of X1] (X_mid1) {};
\node[circle, inner sep=0.1pt, fill=black, right=1em of X2] (X_mid2) {};
\node[circle, inner sep=0.1pt, fill=black, right=1em of X3] (X_mid3) {};

\draw[-] (control0) -- (control_mid0);
\draw[->] (control_mid0) -- (X_mid0);
\draw[-] (control1) -- (control_mid1);
\draw[->] (control_mid1) -- (X_mid1);
\draw[-] (control3) -- (control_mid3);
\draw[->] (control_mid3) -- (X_mid3);

\node[circle, inner sep=0.1pt, fill=black, left=3em of W1] (W_mid0) {};
\node[circle, inner sep=0.1pt, fill=black, left=3em of W3] (W_mid2) {};
\node[circle, inner sep=0.1pt, fill=black, left=3em of W4] (W_mid3) {};

\draw[-] (W1) -- (W_mid0);
\draw[->] (W_mid0) -- (X_mid0);
\draw[-] (W3) -- (W_mid2);
\draw[->] (W_mid2) -- (X_mid2);
\draw[-] (W4) -- (W_mid3);
\draw[->] (W_mid3) -- (X_mid3);

\node[circle, inner sep=0.1pt, fill=black, right=2.5em of cost0] (cost_mid0) {};
\node[circle, inner sep=0.1pt, fill=black, right=2.5em of cost1] (cost_mid1) {};
\node[circle, inner sep=0.1pt, fill=black, right=2.5em of cost3] (cost_mid3) {};

\node[circle, inner sep=0.1pt, fill=black, right=2.5em of X0] (X_midmid0) {};
\node[circle, inner sep=0.1pt, fill=black, right=2.5em of X1] (X_midmid1) {};
\node[circle, inner sep=0.1pt, fill=black, right=2.5em of X2] (X_midmid2) {};
\node[circle, inner sep=0.1pt, fill=black, right=2.5em of X3] (X_midmid3) {};

\draw[->] (X_midmid0) -- (cost_mid0);
\draw[->] (X_midmid1) -- (cost_mid1);
\draw[->] (X_midmid3) -- (cost_mid3);

\draw[->] (X0) -- (X1);
\draw[->] (X1) -- (X2);
\draw[->] (X2) -- (X3);
\draw[->] (X3) -- (X4);

\draw[->] (X4) -- (cost4);

\node[below=0em of W1] (lblk1) {$t = t_1$};
\node[below=0em of W3] (lblkNm1) {$t = t_{N-1}$};
\node[below=0em of W4] (lblkNy) {$t = t_{N}$};
\end{tikzpicture}
    \caption{Neural network architecture for solving the high-dimensional control problem.}
    \label{fig: control nn}
\end{figure}
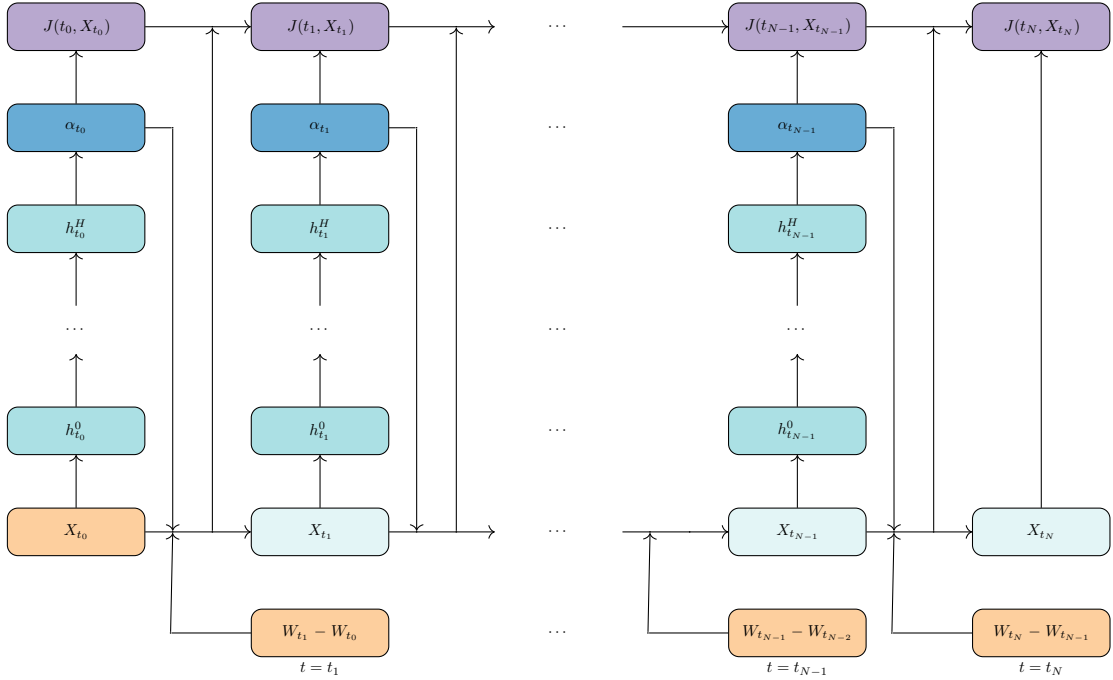

Recent advancements in artificial intelligence have demonstrated remarkable progress across various complex challenges, highlighting that deep neural networks can be highly useful in addressing high-dimensional problems. Accordingly, we utilize the network proposed by \citet{Han_2016} to solve the optimal control problem. Rather than estimating the value function, this approach directly approximates the control at each time step. Specifically, this neural network employs multiple feedforward subnetworks, each tasked with approximating the control variable at an individual time step, and these subnetworks are collectively trained to form the deep neural network.

According to our model, the aggregate government expenditure serves as the loss function for the neural network, with the vaccination administration rate functioning as the control variable. The proposed problem is formulated as:
\begin{equation*}
\min_{\alpha_{t_n},n=0,...,N-1 } E [J_{t_N}|X_{t_0}] = \min_{\alpha_{t_n},n=0,...,N-1 } E[\sum_{n=0}^{N-1} j_{t_n}(X_{t_n},a_{t_n}(X_{t_n})) +j_{t_N}(X_{t_N})|X_{t_0}],
\end{equation*}
where $j_{t_n}(X_{t_n},\alpha_{t_n})$ denotes the intermediate cost, and $j_{t_N}(X_{t_N})$ denotes the final cost, which results in $J_{t_N}$, representing the total cost. Also, we define the cumulative cost as $J_{t_n} = \sum_{\tau=0}^{n} j_{t_\tau} (X_{t_\tau}, \alpha_{t_\tau}(X_{t_\tau}))$ with $n<N$.

In this case, adapting the weights and bias parameters of the subnetwork allows us to directly approximate the control variable while minimizing the objective function. Accordingly, the optimization problem arising from our control framework is stated as follows:
\begin{equation*}
\min_{\{ \phi_{t_n}\}_{n=0}^{N-1} } \sum_{n=0}^{N-1}  j_{t_n} ( X_{t_n}, \alpha_{t_n}(X_{t_n} | \phi_{t_n}) )  + j_{t_N} ( X_{t_N} ),
\label{eqn: loss for high-dimension nn}
\end{equation*}
where $X_{t_n}$ is the set of compartment state variables, $\phi_{t_n}$ represents the parameters inside each subnetwork we aim to optimize, and $\alpha_{t_n}$ denotes the control variable.

The general network structure is illustrated in Figure \ref{fig: control nn} and is characterized by three types of connections. The core component of the network is a set of multilayer feedforward neural networks, which are used to approximate the control variables at each time point $t_n$, represented as $X_{t_n} \rightarrow h_{t_n}^1 \rightarrow \cdots \rightarrow h_{t_n}^H \rightarrow \alpha_{t_n}$. The parameters $\phi_{t_n}$ within each subnetwork are adjusted during the training process. The second type of connection concerns the transition of combined state and noise variables over time. This process is denoted by $(X_{t_n}, \alpha_{t_n}, W_{t_{n+1}} - W_{t_{n}}) \rightarrow X_{t_{n+1}}$, allowing for the linkage of subnetworks across different time points. Finally, the third type of connection directly contributes to the network's final output, described as $(X_{t_n}, \alpha_{t_n}, J_{t_n}) \rightarrow J_{t_{n+1}}$. Based on this structural design, the operation of the proposed deep neural network begins with sampling the noise terms $\Delta W_{t_n} = W_{t_{n+1}} - W_{t_{n}}$ at each time step. These noise terms are then combined with state proportion data and a feedforward neural network is subsequently applied to approximate the control variable at each point in time. Following this step, the total expenditure for the modelling period is calculated, serving as the deep neural network's cost function. For implementation, the neural network is constructed and trained using the TensorFlow library. Parameter optimization is conducted by employing the Adam optimizer to minimize the overall loss function.

\section{Numerical case study}
\label{sec: numerical results}
In this section, we present a numerical case study that utilizes data from Victoria, Australia, drawn from the COVID-19 pandemic. We analyze the optimal vaccination strategy within the economic epidemiology framework. This case study demonstrates the practicality and applicability of the proposed approach.

\subsection{The Victoria dataset}
\label{subsec in fitting: dataset}
To conduct the numerical analysis, historical COVID-19 data for Victoria were obtained from the COVID-19 Australia dataset\footnote{COVID-19 Data for Australia. Available online at the address https://github.com/M3IT/COVID-19 Data.}. This dataset includes a variety of variables, which are detailed in Table \ref{table: dataset variable names}. The dataset provides daily COVID-19 records for all Australian states. For this study, records specific to Victoria, Australia, during the COVID-19 pandemic, are extracted to analyze the disease transmission process in the modelling phase.

\begin{table}[htbp]
    \captionsetup{skip=5pt}
	\centering
	\begin{tabular}{cccc}
		\toprule
		Date & State & Confirmed & Confirmed\_cum \\
            Deaths & Deaths\_cum & Tests & Tests\_cum \\
            Recovered & Recovered\_cum & Hosp & Hosp\_cum \\
            Vaccines & Vaccines\_cum  & ICU	& ICU\_cum  \\
		\bottomrule
	\end{tabular}
        \caption{Key variables from the Australian COVID-19 dataset.}
	\label{table: dataset variable names}
\end{table}

An exploratory data analysis is performed before fitting the theoretical models to the observed data. The number of confirmed cases is interpreted as the total count of individuals infected with COVID-19. Death and recovery records are directly incorporated into the model framework to calculate values for the Dead and Recovered states. The dataset also classifies infected individuals by symptom severity. Individuals requiring hospitalization are categorized as "Hosp" and "Hosp\_cum", while those admitted to the Intensive Care Unit are listed as "ICU" and "ICU\_cum". The number of mildly infected individuals was estimated by subtracting the total number of hospitalized and intensive care patients from the overall infection count at each point in time. Additionally, testing data is used to approximate the number of exposed individuals. In accordance with Victorian Government guidelines, testing is offered to individuals exhibiting symptoms indicative of COVID-19; thus, those tested are assumed to be exposed to the virus.

\begin{table}[htbp]
    \captionsetup{skip=5pt}
	\centering
	\begin{tabular}{cccccc}
		\toprule
		Dataset & State  &  Initial Popn Proportion  & Dataset  &  State & Initial Popn Proportion \\
		\midrule
		Training Set  & S  &  0.554181  & Test Set  & S & 0.191591 \\
            Training Set  & V  &  0.429185  & Test Set & V & 0.779949 \\
            Training Set  & E  &  0.010225  & Test Set & E & 0.009535 \\
            Training Set  & $\text{I}_1$  &  0.001827  & Test Set & $\text{I}_1$ & 0.001896 \\
            Training Set  &  $\text{I}_2$  &  0.000075  & Test Set & $\text{I}_2$ & 0.000044 \\
            Training Set  &  $\text{I}_3$  &  0.000014  & Test Set & $\text{I}_3$ & 0.000006 \\
            Training Set  & R  &  0.004361  & Test Set & R & 0.016773 \\
            Training Set  & D  &  0.000132  & Test Set & D & 0.000206 \\
		\bottomrule   
	\end{tabular}
        \caption{Initial population proportions in training and test sets.}
	\label{table: initial state}
\end{table}

This dataset is limited because it records only the total number of vaccinations administered per day without distinguishing between injection types, such as the first dose, the second dose, and booster doses. To address this limitation and support data modelling, we obtain a supplementary dataset containing detailed vaccination records from the Department of Health and Aged Care\footnote{COVID-19 Vaccination Data. Available online at: https://www.health.gov.au/resources/publications.}. Using this dataset, we calculate the vaccination administration rates for each dose and derive the relevant coverage percentages based on the official reports.

Furthermore, we partition the dataset into training and test sets, selecting the starting point of the training set to coincide with a phase when vaccination coverage in Victoria reaches a relatively stable stage. The training dataset spans two months, from October 4, 2021, to December 2, 2021, while the test dataset encompasses a three-week period starting on December 3, 2021. To calculate the state proportions, the number of individuals in each compartment is divided by the total population. Table \ref{table: initial state} displays the initial compartmental proportions for both the training and test sets. For consistency, individuals are categorized as vaccinated only if they have received two doses of the vaccine. Figure \ref{fig: vic vaccination rollout} provides a visualization of the cumulative number of vaccine doses administered in Victoria, highlighting the training and test periods. In this figure, the cyan segment corresponds to the training period, while the deep blue segment represents the test period\footnote{Victoria COVID-19 Breakdown. Available online at: https://covidbaseau.com/vic/.}.

\subsection{Modelling parameters}

\subsubsection{Regression relationship}
\label{subsec in fitting: regression result}

We initiate our numerical analysis by investigating the relationship between hospitalization rates and vaccination administration rates. Studying this relationship prior is crucial for formulating the control problem in a well-founded manner. To identify this correlation using empirical evidence, we perform a regression analysis with the real-life dataset described in Section \ref{subsec in fitting: dataset}.

\begin{figure}[htbp]
\centering
\begin{minipage}[t]{0.48\textwidth}
\centering
\includegraphics[width=0.9\textwidth]{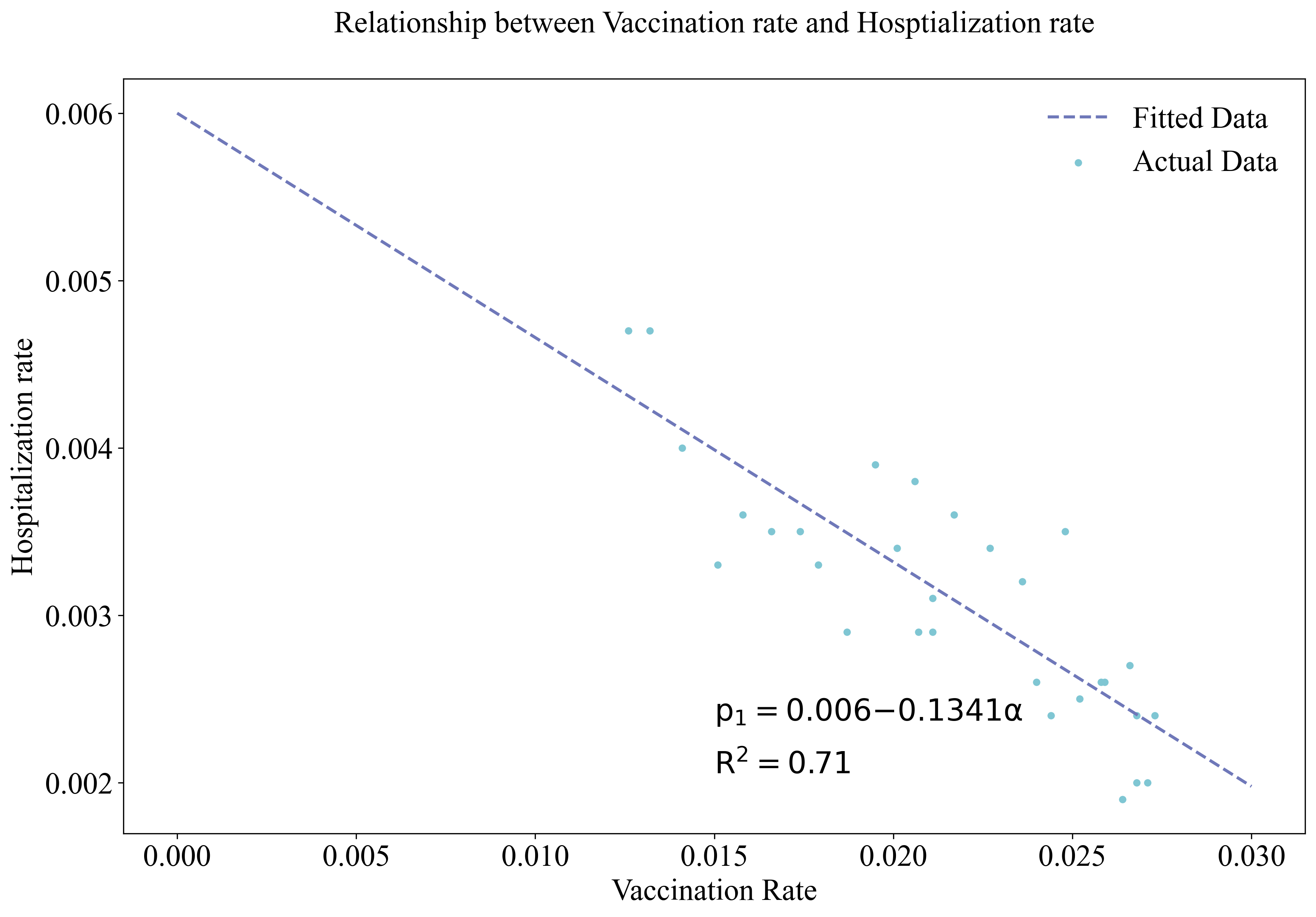}
\captionsetup{width=.9\textwidth} 
\caption{Correlation between hospitalization rates and vaccination administration rates.}
\label{fig: regression}
\end{minipage}
\begin{minipage}[t]{0.48\textwidth}
\centering
\includegraphics[width=0.92\textwidth]{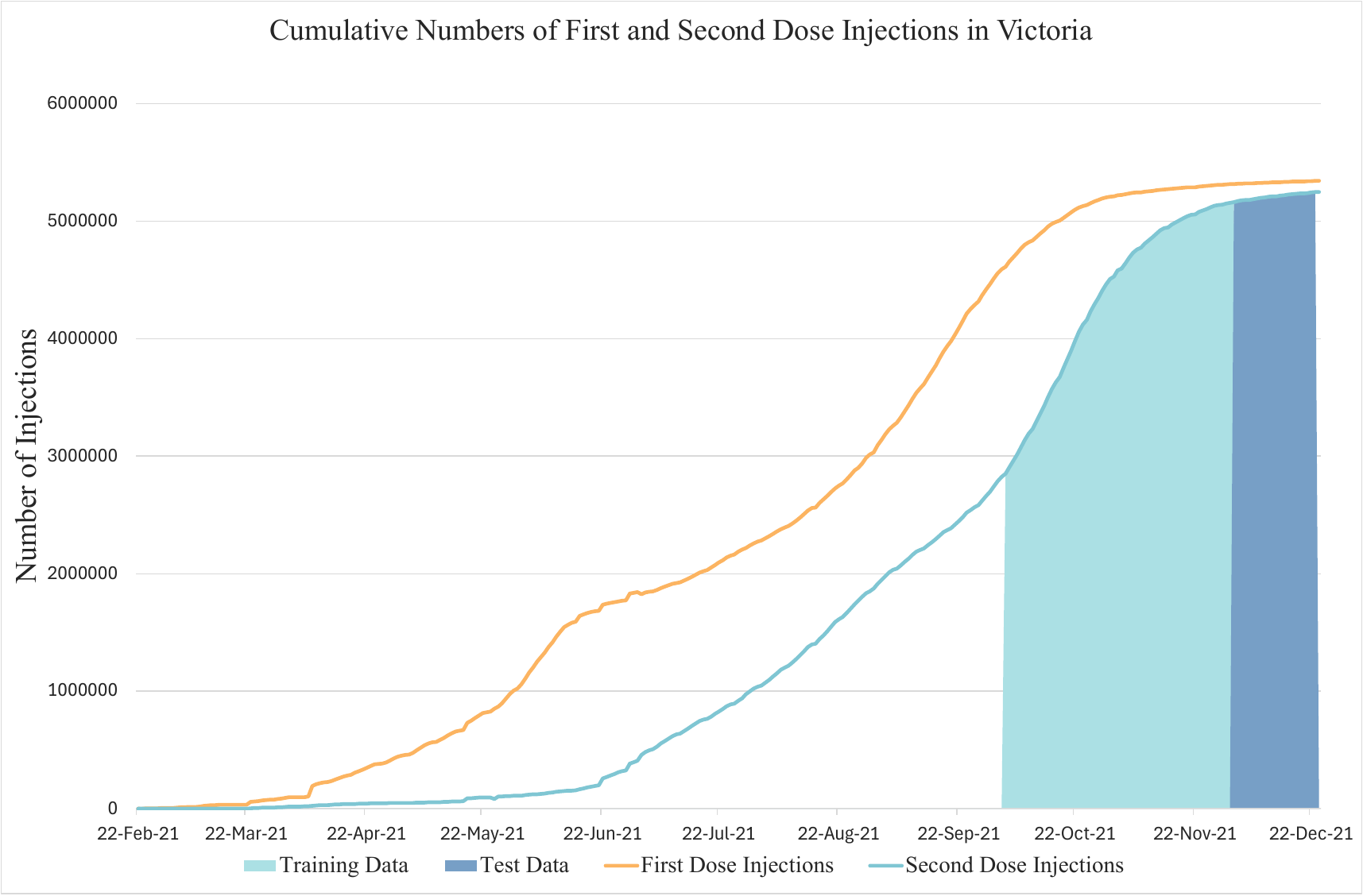}
\captionsetup{width=.9\textwidth} 
\caption{Trajectory of first and second dose vaccination rollout in Victoria, Australia.}
\label{fig: vic vaccination rollout}
\end{minipage}
\end{figure}

It is essential to first disaggregate the total changes in each infectious state into distinct inflow and outflow components to facilitate the analysis. This is because both entries into and exits from compartments contribute to the state population changes over time. Using the change in recovered individuals, denoted by $\Delta {R}_t$, and weighting it by the proportion of mildly infected individuals to the total infected population at that time, we quantify the number of individuals leaving the state $\text{I}_1$, represented as $\Delta_{-} I_{1,t}$. From this, the inflows into state $\text{I}_1$, represented by $\Delta_{+} {I}_{1,t}$, are derived by subtracting the outflows from the total observed change in the state, as expressed in Equation (\ref{eqn: I1 derivation}). A similar methodology is applied to separate the inflows and outflows from states $\text{I}_2$ and $\text{I}_3$, with all calculations ensuring adherence to the principle of non-negative population movements.

\begin{equation}
\left\{
\begin{aligned}
    \Delta_+{ {I}_{1, t}} &= \Delta{{I}_{1, t}} - \Delta_-{{I}_{1, t}} =  \text{max} (0,\Delta{{I}_{1, t}}+\frac{{I}_{1, t}}{{I}_{1, t}+ {I}_{2, t} + {I}_{3, t}} \Delta {R}_t ) \\
    \Delta_+{{I}_{2, t}} &= \Delta{{I}_{2, t}} - \Delta_-{{I}_{2, t}} =  \text{max} (0,\Delta {I}_{2, t}+\frac{{I}_{2, t}}{{I}_{1, t} + {I}_{2, t} + {I}_{3, t}} \Delta {R}_t ) \\
    \Delta_+{{I}_{3, t}} &= \Delta{{I}_{3, t}} - \Delta_-{{I}_{3, t}} = \text{max} (0,\Delta {I}_{3, t}+ \frac{{I}_{3, t}}{{I}_{1, t} + {I}_{2, t} + {I}_{3, t}} \Delta {R}_t )
\end{aligned}\right.
\label{eqn: I1 derivation}
\end{equation}

The vaccination administration rate is determined by dividing the number of vaccination injections by the number of individuals who can be vaccinated. 
Moreover, the rate at which mild infections progress to a severity requiring hospitalization is quantified using the hospital transmission rate, calculated as:
$$p_{1, t} = \frac{\Delta_+ {I}_{2, t}}{{I}_{1, t-1}}.$$

To estimate the empirical parameters, we conducted a regression analysis using historical data from Victoria, Australia. Our results indicate that the relationship between vaccination administration rates and hospitalization rates can be expressed as:
$$
p_1(\alpha_t) = 0.0060 - 0.1341 \alpha_{t}
$$ 
with a p-value of less than $10^{-8}$, underscoring the statistical significance of this relationship. As illustrated in Figure \ref{fig: regression}, hospitalization rates decrease as vaccination administration rates increase. This finding aligns with the results of \citet{Chen_2021}, who documents a negative linear relationship between vaccination rates and transmission rates among symptomatic individuals requiring hospitalization.

The observed relationship is also intuitively plausible, as rising vaccination rates protect a larger proportion of the population, resulting in a decline in hospitalization rates. This analysis underscores the important economic trade-off. Although increasing vaccination rates entails higher public expenditures on immunization programs, it simultaneously reduces healthcare budgets by decreasing hospitalization rates. Further, while expanding vaccination increases policy costs, it produces fiscal benefits through reduced healthcare spending. The empirically observed inverse relationship between vaccination and hospitalization rates confirms the validity of the control problem design, which seeks to minimize total government expenditures.

\subsubsection{Vital dynamic parameters}
\label{subsec in fitting: para calibration}

For the parameter calibration of the compartmental model in this case study, we first estimate the inflow parameter $\Lambda$ and the outflow parameter $\zeta$, which are linked to vital dynamics. These parameters can be directly estimated using data from public sources, and we rely on literature-based approximations to determine these values. The overall population inflow into the system occurs through the Susceptible state, considering the effects of both births and net migration. For the birth rate, we utilize statistics from the ABS report\footnote{Deaths, Australia. Available online at: https://www.abs.gov.au/statistics/people/population/deaths-australia/2022}, calculating it by dividing the total number of births (75,363 in 2021) in Victoria by the overall population and converting this value into a daily rate. This yields a birth rate of $0.0000315$. Additionally, the net migration rate is also derived from ABS statistics
, which indicates 13,100 net migrants in the fourth quarter of 2021. Based on this, the net migration rate for Victoria is estimated as $0.0000217$ on a daily basis. Combining these two rates gives us an overall population inflow rate as 
$$\Lambda =0.0000217+0.0000315=0.000053.$$ 
Moreover, we calculate the death rate based on life expectancy, following a similar methodology to \citet{Adak_2021}. According to Victoria's sex ratio in 2021, the average life expectancy is calculated as the weighted average of male and female life expectancies, and the corresponding daily mortality rate is calculated as: $$\zeta=\frac{1}{(81.7 \times \frac{0.98}{1.98} + 85.7 \times \frac{1}{1.98}) \times 365} = 0.000033.$$

\begin{table}[ht]
	\centering
	\begin{tabular}{cccccc}
		\toprule
		Parameters  &  Values          & Source         & Parameters  &  Values & Source \\
		\midrule
		$\Lambda$ & 0.000053 & Assumed & $\zeta$ & 0.000033 & Assumed \\
            $\beta_1$  & 0.28120  & Calibrated & $\sigma_1 $ & 0.09275 & Calibrated\\
            $\beta_2$  & 0.15838  & Calibrated & $\sigma_2 $ & 0.03887 & Calibrated \\
            $\beta_3$  & 0.03880 & Calibrated & $\sigma_3 $ & 0.07517 & Calibrated\\
            $\sigma$  & 0.06352 & Calibrated & $\sigma_4 $ & 0.06302 & Calibrated\\
            $\gamma$  & 0.30954  & Calibrated  & $\sigma_5 $ & 0.07878 & Calibrated\\
            $\delta_1$  & 0.28505 & Calibrated & $\sigma_6 $ & 0.06123 & Calibrated\\
            $\delta_2$ & 0.28269 & Calibrated  & $\sigma_7 $ & 0.06110 & Calibrated\\
            $\delta_3$  & 0.14206 & Calibrated & $\sigma_8 $ & 0.06199 & Calibrated\\
            $p_2$  & 0.14310 & Calibrated &   $\mu$  & 0.00420  & Calibrated \\
		\bottomrule   
	\end{tabular}
        \caption{Description of parameters in the baseline setting.}
	\label{table: fitted para assumpn}
\end{table}

\subsubsection{Deterministic VS stochastic SVEI3RD model fitting}
\label{subsec in fitting: det vs sto}

In this section, we compare the fitted results for the proposed stochastic model with its deterministic version, as specified in Equation (\ref{eqn: det SVEI3RD without control}), to evaluate the necessity of incorporating stochastic elements in the framework. We employ the PINN approach, as described in Section \ref{subsec in control solving: nn to solve high dimension problem}, to estimate the parameter set $\Theta$ in the compartmental system using our training data. Because the PINN method is highly data-intensive, as highlighted by \citet{Karniadakis_2021}, we implement data augmentation to improve its learning performance. Following the recommendations of \citet{Wen_2020}, we apply cubic splines as a robust method to interpolate training data points, thus transforming daily frequency data into higher-frequency datasets. Specifically, we augment the data points to five, ten, and twenty times their original daily frequency. Using the augmented data, we train our neural network with the training dataset to calibrate parameter values tailored to this case study. In order to assess the effectiveness of the model, no changes are made to the test dataset.

\begin{equation}
\left\{
\begin{aligned}
\frac{dS}{dt} & =  \Lambda -(\beta_1 I_1 + \beta_2 I_2 + \beta_3 I_3) S - \alpha S -\zeta S ,
\\
\frac{dV}{dt} & = \alpha S - (\beta_1 I_1 + \beta_2 I_2 + \beta_3 I_3) \sigma V -\zeta V,
\\
\frac{dE}{dt} & =  (\beta_1 I_1 + \beta_2 I_2 + \beta_3 I_3) S + (\beta_1 I_1 + \beta_2 I_2 + \beta_3 I_3) \sigma V - \gamma E -\zeta E ,
\\
\frac{dI_1}{dt} & = \gamma E - (\delta_1 + p_1) I_1 -\zeta I_1 ,
\\
\frac{dI_2}{dt} & = p_1 I_1 - (\delta_2 + p_2) I_2 -\zeta I_2 ,
\\
\frac{dI_3}{dt} & = p_2 I_2 - (\delta_3 + \mu ) I_3 -\zeta I_3,
\\
\frac{dR}{dt} & = \delta_1 I_1 + \delta_2 I_2  + \delta_3 I_3 -\zeta R,
\\
\frac{dD}{dt} & =  \mu I_3 .
\label{eqn: det SVEI3RD without control}
\end{aligned}\right.
\end{equation} 

We utilize the Adam optimizer from the Tensorflow package, developed by \citet{Kingma_2014}, to update weights, biases, and dynamic parameters in our neural network design. Through iterative parameter updates, the algorithm works to minimize the loss function and improve the accuracy of model predictions. In this case study, our neural network architecture comprises four hidden layers, each with 128 neurons. The tanh activation function is applied to all hidden layers, while the sigmoid function is used for the output layer. The learning rate is set to $10^{-6}$, and the model is trained for $100,000$ epochs. We set the number of Monte Carlo iterations $N_{MC}$ to five to average out the noisy effect. Furthermore, the regularization parameters are set equally, with $\lambda_{{Data}} = \lambda_{{DE}} = 1$ for data loss and residual loss.

\begin{table}[htbp]
    \captionsetup{skip=5pt}
	\centering
	\begin{tabular}{cccccc}
		\toprule
		\multicolumn{2}{c}{} & \multicolumn{2}{c}{Test MSE} & \multicolumn{2}{c}{Test MAE} \\
		\cmidrule(r){3-6}
		  Number of Data points ($N$) & Search grid & Deterministic & Stochastic & Deterministic & Stochastic \\
		\midrule
		Ndays$\times 1$ & 10\% &  0.00544343  &  0.00427258  &  0.00433145   &  0.00306000   \\
		    & 50\% &   0.00506920   &  0.00427108  &   0.00395384    &  0.00305833    \\
		    & 90\% &   0.00505704   &  0.00427115  &   0.00394915   &  0.00305878   \\
            \midrule
		Ndays$\times 5$ & 10\% &  0.00544348  &  0.00418813  &  0.00433146   &  0.00297853  \\
		    & 50\% &  0.00507398  &  0.00418457  &  0.00395876   &  0.00296288  \\
		    & 90\% &  0.00506203  &  0.00418470  &  0.00395400 &  0.00296323   \\
            \midrule
		Ndays$\times 10$  & 10\% &  0.00544343  &  0.00418795  &  0.00433145   &  0.00297842  \\
		       & 50\% &  0.00507392  &  0.00418471  &  0.00395866   &  0.00296296  \\
		       & 90\% &  0.00505700  &  0.00418476  &  0.00394918  &  0.00296320  \\
            \midrule
		Ndays$\times 20$  & 10\% &  0.00544343 &  0.00418805  &  0.00433145  &  0.00297849  \\
		       & 50\% &  0.00506919  &  0.00418478  &  0.00395387   &  0.00296319  \\
		       & 90\% &  0.00505700  &  0.00418480  &  0.00394928  &  0.00296316  \\
		\bottomrule
	\end{tabular}
        \caption{Evaluation of model calibration outcomes using test set data.}
	\label{table: calibration test mse results}
\end{table}

\begin{figure}[htbp]
    \centering
    \includegraphics[width=\textwidth]{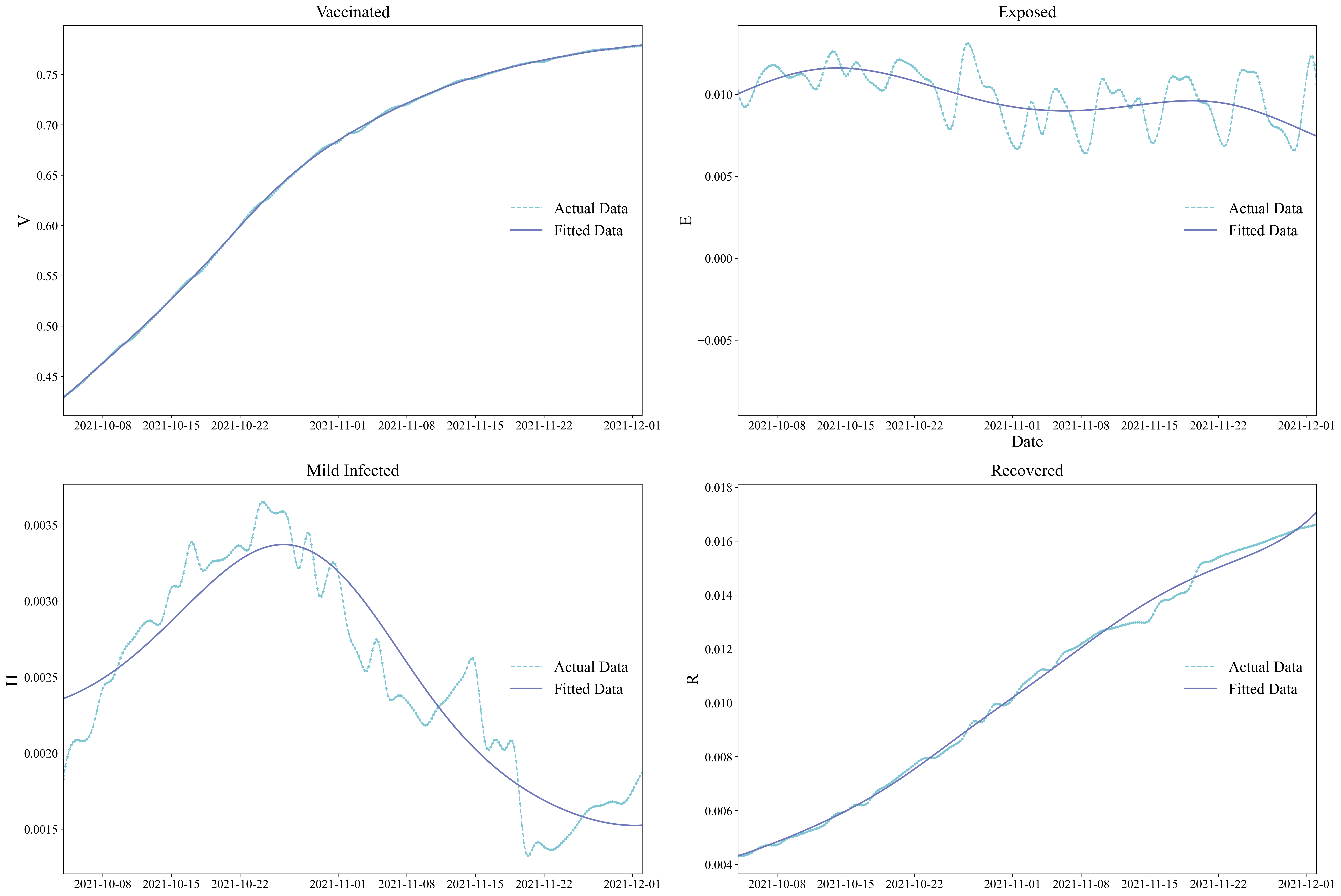}
    \vspace{0.2\baselineskip}
    \caption{Results of model fitting with PINN.}
    \label{fig: training dataset fitting}
\end{figure}
To evaluate the model's performance, we compare the fitted results of the deterministic and stochastic models and use the test MSE and mean absolute error (MAE) to assess the quality of the fit. The results of our model fitting procedure, presented in Table \ref{table: calibration test mse results}, show that across all cases, the stochastic model significantly outperforms the deterministic model. For the search ranges of parameters, we employ a search grid of 50\% around the initial parameter assumptions, which is appropriate for our dataset. Moreover, selecting the data frequency for the stochastic model is also essential. Generally, increasing the number of data points used in training the model enhances the representation of the training dataset's performance. However, overfitting the neural network can occur when there are too many interpolated data points. When balancing predictability and interpretability, we find that the test MSE and MAE increase when the number of data points rises from fivefold to tenfold within a 50\% search grid. Therefore, we select a model with fivefold data points. The calibrated parameters are shown in Table \ref{table: fitted para assumpn}, along with a graphical representation of the model fitting performance on key compartments in Figure \ref{fig: training dataset fitting}. In this context, the Vaccinated and Recovered states have relatively large population proportions compared to the other compartments. Due to these relatively large proportions, the neural network assigns greater weight to aligning these two states, resulting in a closer fit to the corresponding observable data points. In the second and third panels of Figure \ref{fig: training dataset fitting}, the Exposed and Mild Infected compartments exhibit periodic patterns. According to \citet{Soukhovolsky_2021}, cyclical fluctuations with a period of approximately seven days may occur without obvious biological causes, potentially reflecting social effects. One hypothesis is that the so-called “weekend effect” influences these cycles, because individuals presenting initial disease symptoms often seek medical attention only after the weekend, at the start of the following week. Since this cyclical behavior does not affect the overall trajectory of disease transmission over longer time horizons, we consider the proposed neural network sufficiently robust to capture the primary trends in the Exposed and Mild Infected populations.

\subsubsection{Expenditure function parameters}
\label{subsec in fitting: hyperparameter definition}

In this section, we examine how the parameters for the pandemic-related cost function are selected. Upon formulating our objective function in Section \ref{subsec in setup: objective function formulation}, it is imperative to appropriately weigh each component of total government expenditure, as variations in these weight factors can impact the optimal vaccination strategy derived. Firstly, according to the report by Informed Decision\footnote{Australia Community Profile. Available online at https://profile.id.com.au/australia/employment-status?BMID=50.}, the working-age population in Victoria is 81.8\%, and the participation rate is 61.1\%, resulting in the labour proportion parameter $\psi$ being approximately 50\%. Subsequently, we estimate the other parameters in the cost function. By analyzing historical data from the Australian government, we calibrate the weights of each component to reflect the actual cost profile, thus enhancing the model's ability to apply to real-life scenarios.

According to the Parliament of Australia's report\footnote{Budget Review 2020-21. Available online at: https://parlinfo.aph.gov.au/parlInfo/search/display/display.w3p;query=Id \\
\%3A\%22library\%2Fprspub\%2F7622081\%22.}, expenditures relating to COVID-19 are mainly classified as economic response expenses, as well as costs associated with policy campaigns and healthcare. Within the economic sector, a significant portion of the government's expenditure was allocated to the JobKeeper program and the JobMaker Hiring Credit. Drawing on official records from the Australian Government, we compile total government expenditures in this category across the financial years 2019-20, 2020-21, and 2021-22 to assess the total economic assistance provided by the government in relation to the past pandemic.

Furthermore, we categorize the rest of government spending into policy-related costs and healthcare system costs based on the Australian Institute of Health and Welfare report\footnote{Health system spending on the response to COVID-19 in Australia 2019-20 to 2021-22. Available online at: https://www.aihw.gov.au/reports/health-welfare-expenditure/health-system-spending-on-the-response-to-covid-19/data}, which details government's expenditures in response to COVID-19 from 2019–20 to 2021–22. We filter the data to include only expenditures related to the national and state governments and assume that direct healthcare costs encompass spending in the categories of "public health", "private hospitals", and "public hospitals". Thus, all other expenditures are classified as policy-related expenses in our case study.

The analysis indicates that economic losses account for a substantial share of COVID-19-related costs in Australia, comprising approximately 75\%. In comparison, policy-related costs constitute 9\%, while healthcare system costs represent 16\%. Figure \ref{fig: two cost breakup pie} illustrates a financial year breakdown of policy expenditures, highlighting a significant increase in 2021–22 compared to the initial phase of the pandemic. This growth reflects the Australian government’s intensified efforts to expand vaccine administration beginning in mid-2021, resulting in substantial spending during the third financial year following the onset of COVID-19.

\begin{table}[htbp]
    \captionsetup{skip=5pt}
	\centering
	\begin{tabular}{lcc}
		\toprule
		Cost components  &  Parameters  & Values \\
		\midrule
             Vaccination cost & $c_1$ & 100  \\
             Quarantine cost  & $c_2$ & 20 \\
		Medicine cost  & $c_3$ & 50  \\
             Hospial cost & $c_4$ & 200 \\
             ICU cost  & $c_5$ & 1000 \\
             Economic cost  & $c_6$ & 100 \\
		\bottomrule
	\end{tabular}
        \caption{Parameters of the cost objective function for pandemic response.}
	\label{table: hyperparameter objective functn}
\end{table}

We base our parameter assumptions for the expenditure expression on the actual expenditures incurred by the Australian government during the COVID-19 pandemic, as summarized in Table \ref{table: hyperparameter objective functn}. In practice, payments for medical expenses vary among infected individuals and reflect the severity of their symptoms. Patients with severe symptoms are expected to incur higher medical costs. In the context of the recent Coronavirus pandemic, \citet{Kaier_2020} reports that costs per non-ventilated ICU day range between 924 and 1074, and \citet{Ohsfeldt_2021} demonstrates that treating patients in the ICU is nearly five times as expensive as treating other patients in a public hospital. Accordingly, we assume that the cost of regular hospital care is $200$ per patient per day, represented as $c_4 = 200$, while the expense for intensive care amounts to $1,000$ per patient per day, expressed as $c_5 = 1,000$. Furthermore, we compare our model's expenditure pattern with the government's actual expenditure structure. With the actual vaccination administration rate from the modelling period, we calculate the summation of vaccination administration costs and quarantine subsidies, $\mathbb{E} [ \int_o^T V_v(\alpha_t) + V_q(\alpha_t)  \textrm{d}t ] $, as described in Section \ref{subsec in setup: objective function formulation}, to represent the aggregate policy-related costs in our model. In addition, $\mathbb{E} [ \int_o^T V_l(\alpha_t) \textrm{d}t ] $ is used to represent the economic loss, and $\mathbb{E} [ \int_o^T V_h(\alpha_t) \textrm{d}t ] $ represents the direct expenditures related to the healthcare system. This approach ensures our cost model closely reflects the actual expenditure proportions for each category by the local government, as shown in Figure \ref{fig: two cost breakup pie}. Accordingly, the parameters are selected based on the specific expenditure profile for the particular modelling period of this case study.

\begin{figure}[htbp]
\centering
\begin{minipage}[t]{0.48\textwidth}
\centering
\definecolor{new_red}{RGB}{184, 168, 207}
\definecolor{new_orange}{RGB}{253, 207, 158}
\definecolor{new_purple}{RGB}{111, 120, 185}

\definecolor{green1}{RGB}{227, 245, 246}
\definecolor{green2}{RGB}{171, 224, 228}
\definecolor{green3}{RGB}{127, 198, 211}
\definecolor{green4}{RGB}{209, 226, 239}

\begin{tikzpicture}[auto, scale=0.6, every node/.style={scale=0.6}]

  \pie[
    radius=3,
    color={new_red, new_orange, green4},
    text=inside,
    explode=0.1,
    font=\scriptsize 
  ]{
    74.94/Economic Cost,
    16.02/Healthcare Cost,
    9.04/Policy Cost
  }

  \node[above, font=\small] at (0, 3.5) {Australian Government COVID-19 Expenditure};

  \begin{scope}[xshift=5cm, yshift=-2cm] 
    \draw[fill=green3] (0,0) rectangle (2,2) node[midway, anchor=center, font=\scriptsize] { 
      2021-22: 
      4.99\%
    };
    \draw[fill=green2] (0,2) rectangle (2,3) node[midway, anchor=center, font=\scriptsize] {
      2020-21: 
      2.82\%
    };
    \draw[fill=green1] (0,3) rectangle (2,3.5) node[midway, anchor=center, font=\scriptsize] {
      2019-20: 
      1.24\%
    };
  \end{scope}

  \draw[dashed, gray] (3.5, 0) -- (4.5, 0.5); 

\end{tikzpicture}
\captionsetup{width=.9\textwidth} 
\end{minipage}
\begin{minipage}[t]{0.48\textwidth}
\centering
\usetikzlibrary{patterns}
\definecolor{new_red}{RGB}{184, 168, 207}
\definecolor{new_orange}{RGB}{253, 207, 158}
\definecolor{new_purple}{RGB}{111, 120, 185}

\definecolor{green1}{RGB}{227, 245, 246}
\definecolor{green2}{RGB}{171, 224, 228}
\definecolor{green3}{RGB}{127, 198, 211}
\definecolor{green4}{RGB}{209, 226, 239}
\begin{tikzpicture}[auto, scale=0.6, every node/.style={scale=0.6}]

  \pie[
    radius=3,
    color={new_red, new_orange, green4},
    text=inside,
    explode=0.1,
    font=\scriptsize
  ]{
    73.53/Economic Cost,
    14.83/ Healthcare Cost,
    11.63/Policy Cost
  }

  \node[above, font=\small] at (0, 3.5) {Model Expenditure};

\end{tikzpicture} 
\captionsetup{width=.9\textwidth} 
\end{minipage}
\caption{Comparison of actual (left) and modelled (right) government expenditures by component across vaccination strategies.}
\label{fig: two cost breakup pie}
\end{figure}
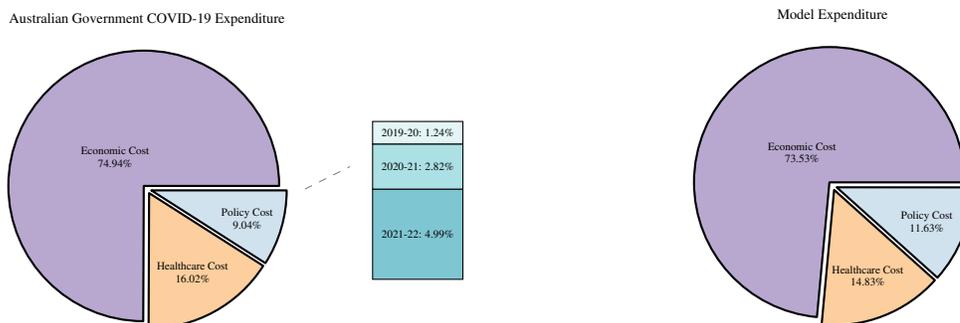

\subsection{Control results}
\subsubsection{The optimal vaccination strategy}
\label{subsec in control solving: optimal vacc in vic}

After determining the parameters of the economic epidemiological model, we apply the neural network detailed in Section \ref{subsec in control solving: nn to solve high dimension problem} to derive the numerical solution for our case study. For this case study, the subnetwork architecture consists of an input layer of initial compartmental proportions, three hidden layers, and an output layer illustrating the vaccination administration rate at each time point. In this setup, the tanh function is chosen as the activation function for the input and all hidden layers, and the number of iterations is set at 10,000. Given that neural network outputs depend on the seed settings, we perform this process five times using different seed values and compute the average result to obtain the final output.

\begin{figure}[htbp]
\centering
\begin{minipage}[t]{0.48\textwidth}
\centering
\includegraphics[width=0.9\textwidth]{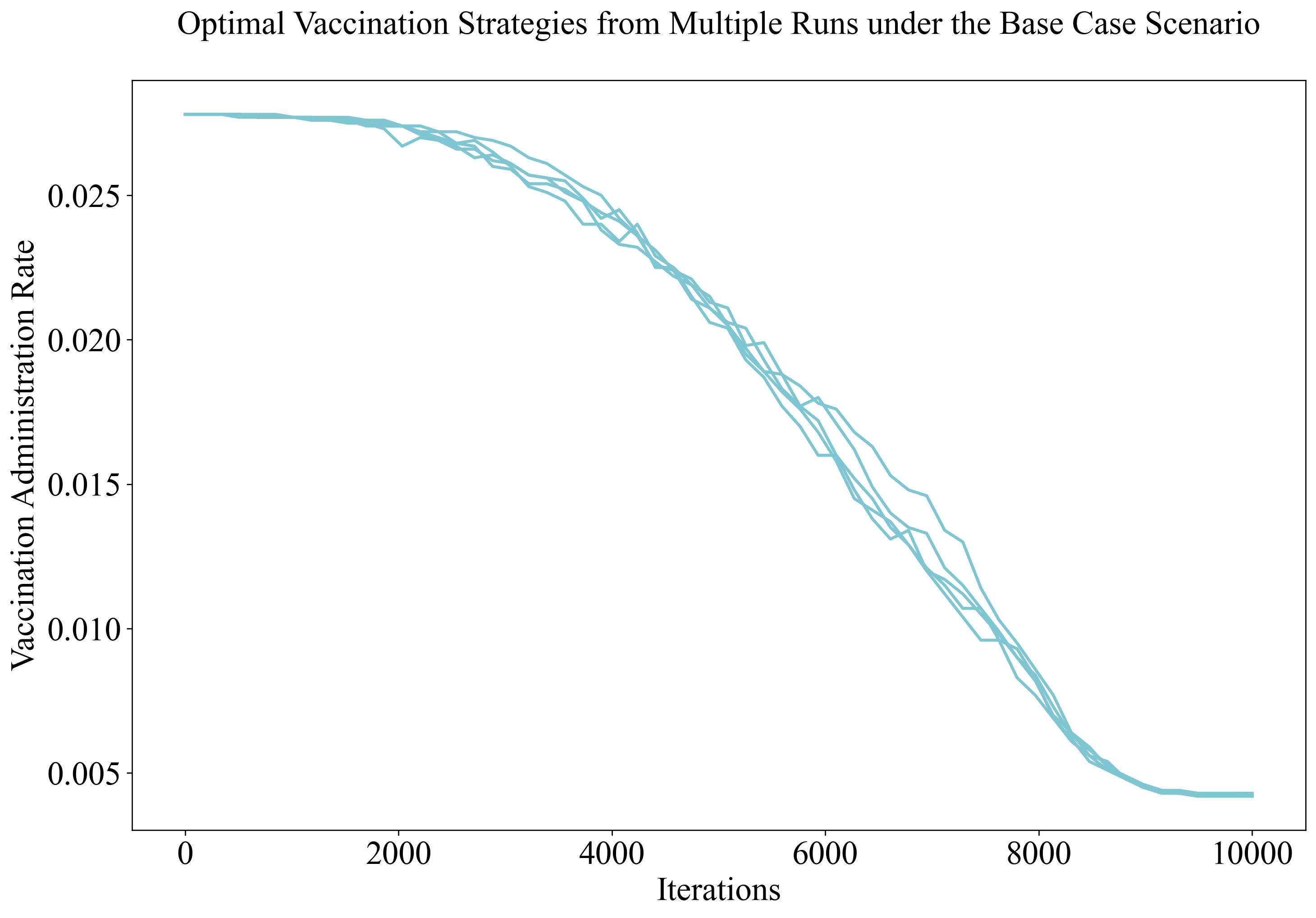}
\captionsetup{width=.9\textwidth} 
\caption{Optimal vaccination administration rates from multiple runs in the base case scenario.}
\label{fig: 5 runs opt vacc}
\end{minipage}
\begin{minipage}[t]{0.48\textwidth}
\centering
\includegraphics[width=0.9\textwidth]{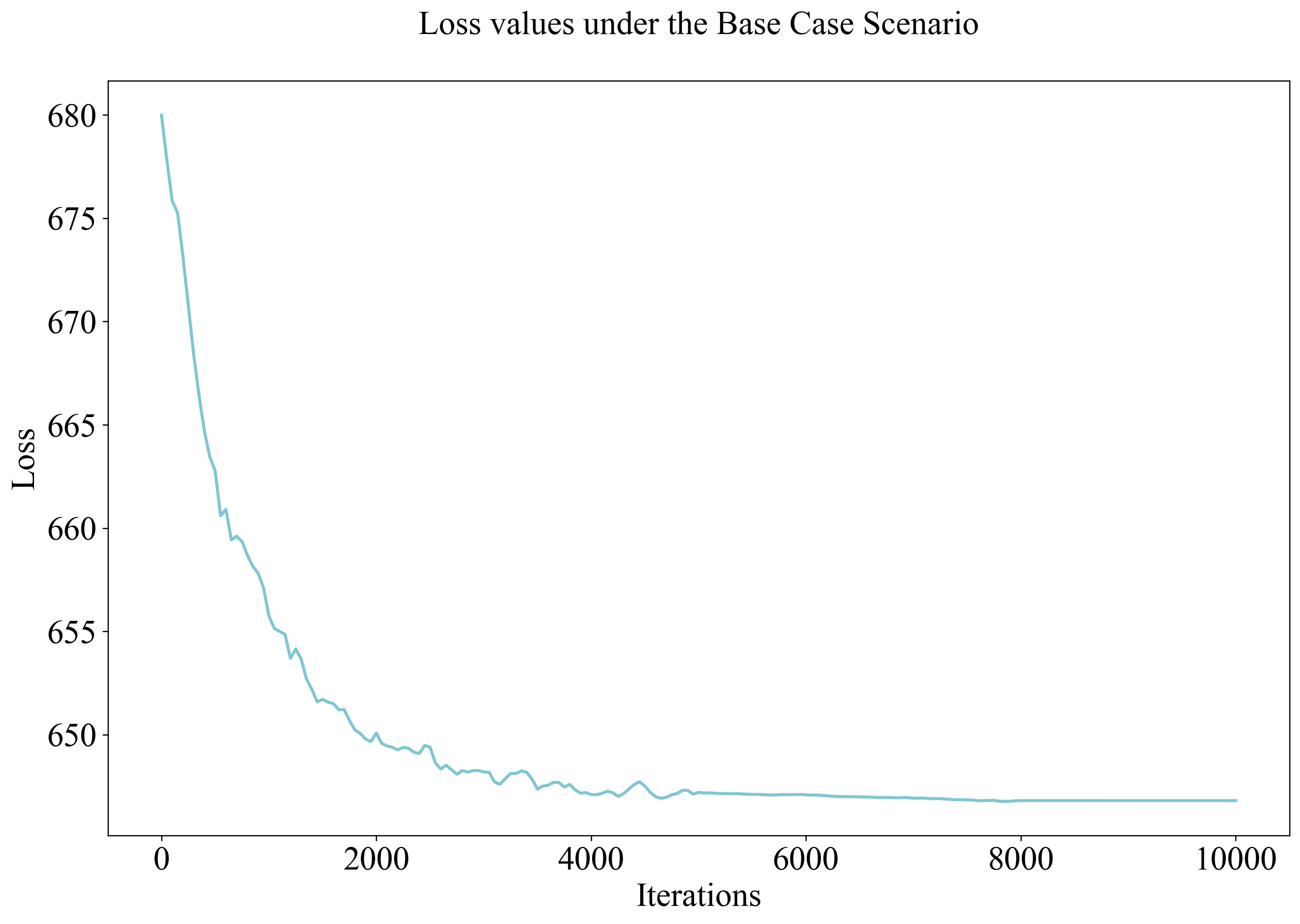}
\captionsetup{width=.9\textwidth} 
\caption{Average loss values from multiple runs in the base case scenario.}
\label{fig: 5 runs avg loss}
\end{minipage}
\end{figure}

As a first step, we evaluate the stability of the numerical results before considering the rationale for the optimal solution. Given the nature of neural networks, Figure \ref{fig: 5 runs opt vacc} displays the optimal control solutions across five runs, demonstrating consistent results that support the stability of our numerical outcomes. In addition, we also examine changes in the loss function to verify convergence. As illustrated in Figure \ref{fig: 5 runs avg loss}, the loss function value decreases as the number of iterations increases and it eventually converges. Therefore, we consider the numerical solution to our control problem to be reliable.

Under our baseline setting, the numerical solution corresponding to the optimal vaccination administration rate is shown as the cyan line in Figure \ref{fig: vacc strategies base case}. According to this optimal vaccination strategy, the government should initially implement a large number of vaccinations. Subsequently, the dose injection rate should gradually decrease as the virus becomes controlled. As a result, the proposed strategy can effectively reduce government expenditures by managing a gradual decrease in vaccination administration rates.

\subsubsection{Alternative strategies}
\label{subsec: comparison of strategies}

Once we have determined the optimal vaccination strategy within our control framework, we conduct an analysis to compare our strategy with three alternative scenarios. The first comparison is with the government's historical vaccination rate, depicted as the purple line in Figure \ref{fig: vacc strategies base case} over the modelling period. Second, we consider a constant vaccination rate, which is equal to the average of actual vaccination rates over the period, regardless of the disease's progression. This is represented by the orange line, which shows the fixed vaccination rate throughout the modelling period. The final scenario examines the case where there is no vaccination, capturing the natural progression of the virus without intervention, corresponding to the red line at the bottom of the figure.

\begin{figure}[htbp]
\centering
\begin{minipage}[t]{0.48\textwidth}
\centering
\includegraphics[width=1.02\textwidth]{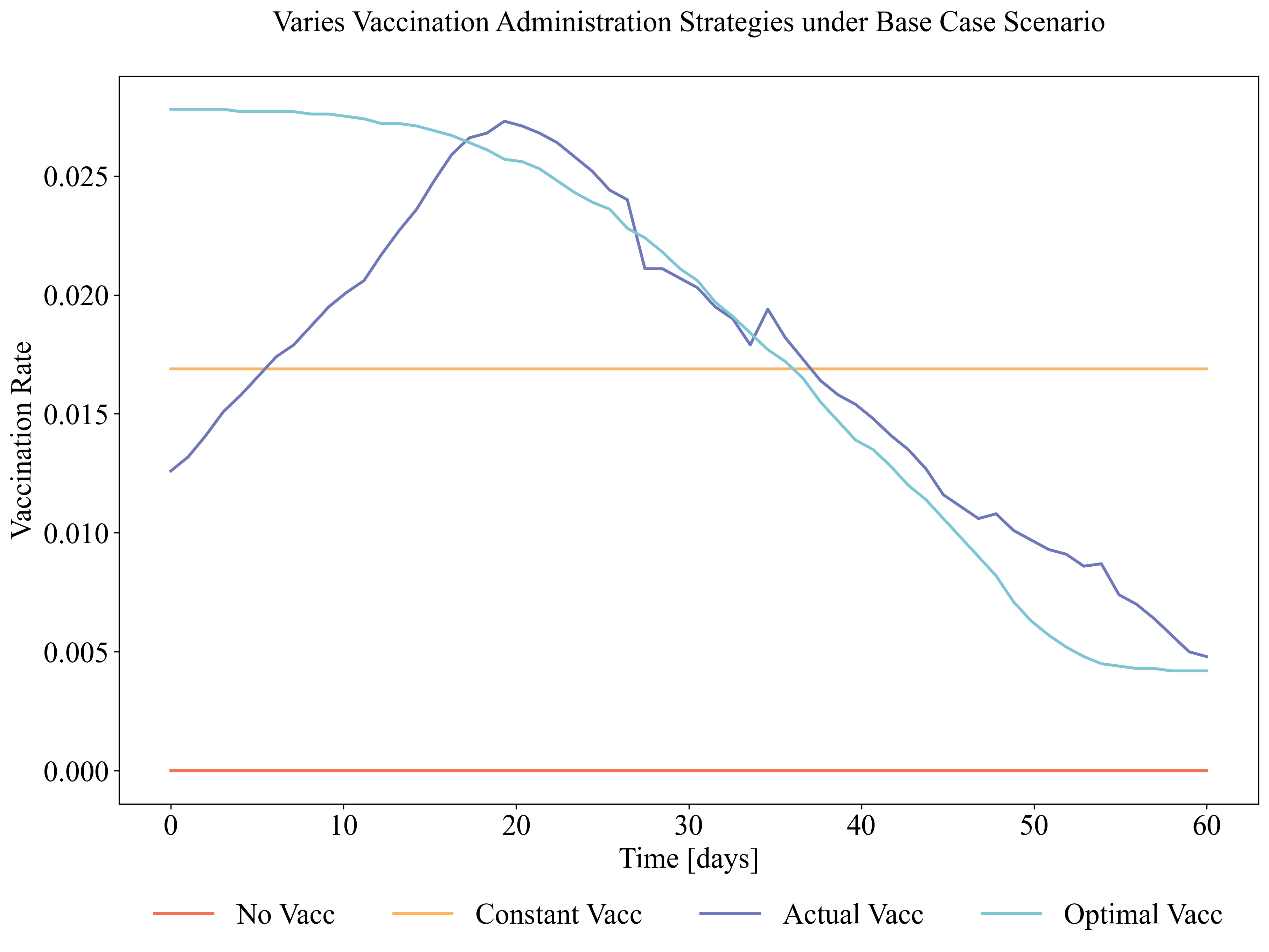}
\captionsetup{width=.9\textwidth} 
\caption{Comparison of optimal vaccination administration rates with alternative strategies.}
\label{fig: vacc strategies base case}
\end{minipage}
\begin{minipage}[t]{0.48\textwidth}
\centering
\usetikzlibrary{patterns, patterns.meta}
\usetikzlibrary{pgfplots.groupplots}
\pgfplotsset{compat=1.17}

\begin{tikzpicture}[auto, scale=0.52]
\pgfdeclarelayer{background}
\pgfdeclarelayer{foreground}
\pgfsetlayers{background,main,foreground}

\definecolor{color1}{RGB}{243, 114, 82}
\definecolor{color2}{RGB}{252, 180, 97}
\definecolor{color4}{RGB}{111, 120, 185}
\definecolor{color3}{RGB}{127, 198, 211}
\begin{axis}[
    ybar stacked,
    ymin=0, ymax=50,
    width=14cm,
    height=11cm,
    ylabel=Cost,
    xtick=data,
    symbolic x coords={No Vaccination, Constant Vaccination, Actual Vaccination, Optimal Vaccination},
    x tick label style={align=center, text width=3cm},
    enlarge x limits=0.2,
    legend style={at={(0.5,-0.1)}, anchor=north, legend columns=-1, 
      /tikz/column sep=2ex,draw=none},
    ymajorgrids = false,
    grid style={dashed, gray!30},
    title={Government Expenditures under Various Vaccination Strategies},
    nodes near coords,
    every node near coord/.style={
        font=\scriptsize,
        black,
        /pgf/number format/.cd,
        fixed,
        precision=2,
    },
    bar width=1cm,
]

\addplot[draw=white, pattern=dots, pattern color=black] coordinates {
    (No Vaccination, 0)
    (Constant Vaccination, 0)
    (Actual Vaccination, 0)
    (Optimal Vaccination, 0)
};

\addplot[draw=white, pattern=grid, pattern color=black] coordinates {
    (No Vaccination, 0)
    (Constant Vaccination, 0)
    (Actual Vaccination, 0)
    (Optimal Vaccination, 0)
};

\addplot[draw=white, pattern=north east lines, pattern color=black] coordinates {
    (No Vaccination, 0)
    (Constant Vaccination, 0)
    (Actual Vaccination, 0)
    (Optimal Vaccination, 0)
};

\addplot[draw=white, fill=color1,
    postaction={
        pattern=dots,
        pattern color = white
    }] coordinates {
    (No Vaccination, 2.11)
    (Constant Vaccination, 0)
    (Actual Vaccination, 0)
    (Optimal Vaccination, 0)
};

\addplot[draw=white, fill=color2,
    postaction={
        pattern=dots,
        pattern color = white
    }] coordinates {
    (No Vaccination, 0)
    (Constant Vaccination, 3.33)
    (Actual Vaccination, 0)
    (Optimal Vaccination, 0)
};

\addplot[draw=white, fill=color4,
    postaction={
        pattern=dots,
        pattern color = white
    }] coordinates {
    (No Vaccination, 0)
    (Constant Vaccination, 0)
    (Actual Vaccination, 3.60)
    (Optimal Vaccination, 0)
};

\addplot[draw=white, fill=color3,
    postaction={
        pattern=dots,
        pattern color = white
    }] coordinates {
    (No Vaccination, 0)
    (Constant Vaccination, 0)
    (Actual Vaccination, 0)
    (Optimal Vaccination, 3.99)
};

\addplot[draw=white, fill=color1,
    postaction={
        pattern=grid,
        pattern color = white
    }] coordinates {
    (No Vaccination, 10.95)
    (Constant Vaccination, 0)
   (Actual Vaccination, 0)
    (Optimal Vaccination, 0)
};

\addplot[draw=white, fill=color2,
    postaction={
        pattern=grid,
        pattern color = white
    }] coordinates {
    (No Vaccination, 0)
    (Constant Vaccination, 8.61)
   (Actual Vaccination, 0)
    (Optimal Vaccination, 0)
};

\addplot[draw=white, fill=color4,
    postaction={
        pattern=grid,
        pattern color = white
    }] coordinates {
(No Vaccination, 0)
    (Constant Vaccination, 0)
    (Actual Vaccination, 5.67)
    (Optimal Vaccination, 0)
};

\addplot[draw=white, fill=color3,
    postaction={
        pattern=grid,
        pattern color = white
    }] coordinates {
(No Vaccination, 0)
    (Constant Vaccination, 0)
    (Actual Vaccination, 0)
    (Optimal Vaccination, 5.08)
};

\addplot[draw=white, fill=color1,
    postaction={
        pattern=north east lines,
        pattern color = white
    }] coordinates {
    (No Vaccination, 34.83)
    (Constant Vaccination, 0)
    (Actual Vaccination, 0)
    (Optimal Vaccination, 0)
};

\addplot[draw=white, fill=color2,
    postaction={
        pattern=north east lines,
        pattern color = white
    }] coordinates {
    (No Vaccination, 0)
    (Constant Vaccination, 27.09)
    (Actual Vaccination, 0)
    (Optimal Vaccination, 0)
};

\addplot[draw=white, fill=color4,
    postaction={
        pattern=north east lines,
        pattern color = white
    }] coordinates {
    (No Vaccination, 0)
    (Constant Vaccination, 0)
    (Actual Vaccination, 27.26)
    (Optimal Vaccination, 0)
};

\addplot[draw=white, fill=color3,
    postaction={
        pattern=north east lines,
        pattern color = white
    }] coordinates {
(No Vaccination, 0)
    (Constant Vaccination, 0)
    (Actual Vaccination, 0)
    (Optimal Vaccination, 25.20)
};

\legend{Policy Cost, Healthcare Cost, Economic Cost}

\end{axis}
\end{tikzpicture}
\captionsetup{width=.9\textwidth} 
\caption{Allocation of government expenditures by component under various vaccination strategies.}
\label{fig: cost comparison base case}
\end{minipage}
\end{figure}

We begin by comparing the optimal vaccination administration rate with the actual rate implemented by the government. Our analysis shows a divergence between the government's approach and the proposed optimal strategy during the first half of the modelling period. According to our optimal strategy, the government should have launched vaccination programs at a high rate. This finding aligns with \citet{Acuna_2021}, which advocates for a high initial vaccination rate to achieve broad coverage and curb the virus's spread early. Notably, the actual rate implemented by the government begins slowly and then increases over time, differing from the recommended strategy. This initial slow rollout is likely due to uncertainties surrounding the newly introduced vaccine, leading us to consider vaccination hesitancy, discussed further in Section \ref{subsec in discussion: hesitancy range}. Furthermore, our analysis reveals a close alignment between the recommended vaccination administration rates and the Australian government's strategy during the latter half of the modelling period. As a result, the Victorian government effectively determines the strategy for controlling virus spread during the later stages of the pandemic, when they have gained sufficient experience to decide the best course of action to be taken.

\subsubsection{Population dynamics comparsion}
\label{subsubsec in vacc strategy comparsion: popn comparsion of strategies under base case}

As a subsequent step, we examine the population dynamics across different compartments over time in response to various vaccination strategies. Our focus is on the states $ \text{E} $, $ \text{I}_1 $, $ \text{I}_2 $, and $ \text{I}_3 $, analyzing how different vaccination administration rates influence both the number of individuals exposed to the virus and the progression of infections of varying severities over time. This analysis aims to investigate the differential impact of these strategies on virus progression through the evolution of state populations. By contrasting the development pathways of the virus under diverse vaccination regimes, we seek to gain insight into the effectiveness and limitations of each strategy in mitigating and controlling the spread of the virus.

\begin{figure}[htbp]
    \centering
    \includegraphics[width=\textwidth]{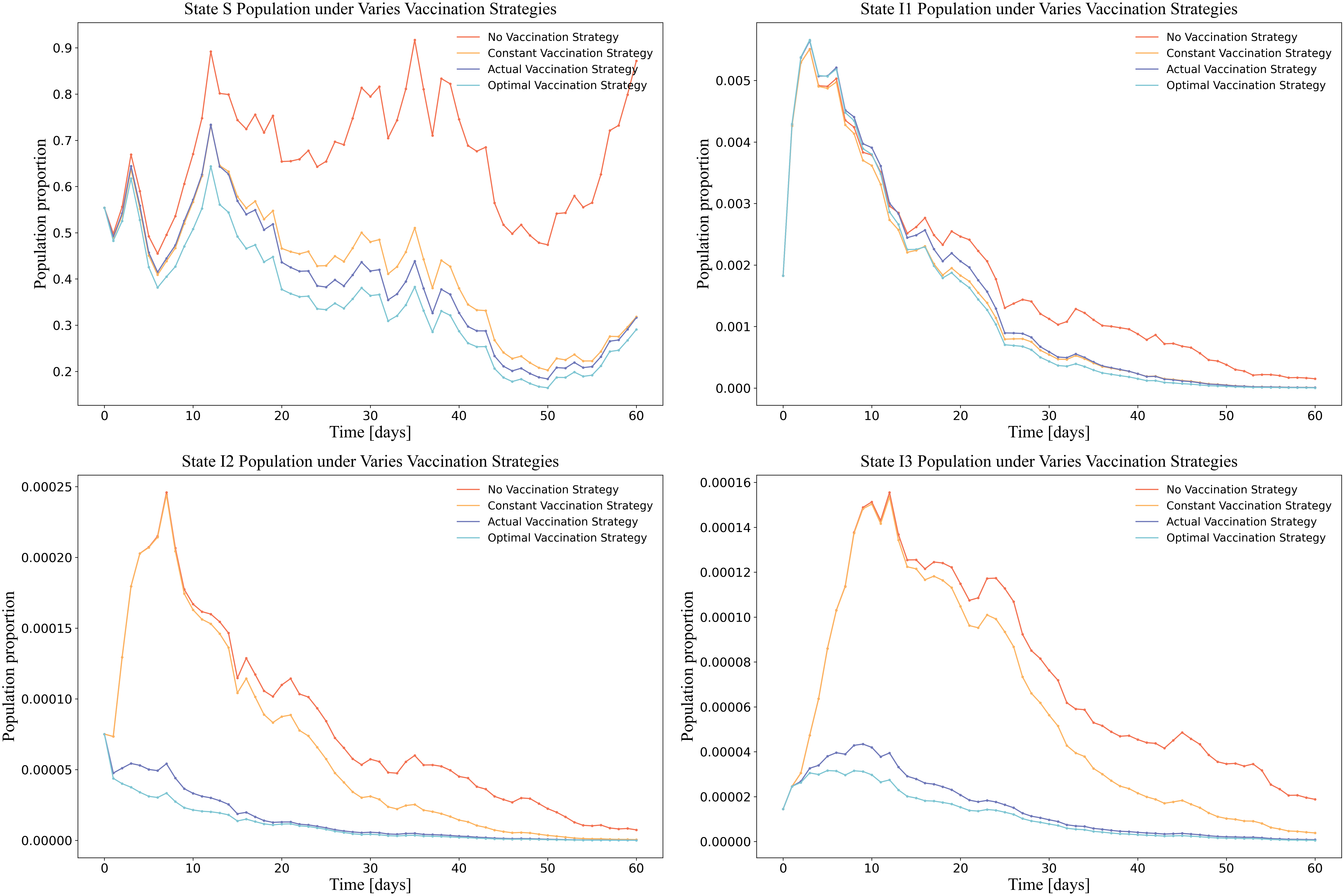}
    \vspace{0.2\baselineskip}
    \caption{Comparison of policy impacts on exposed, mildly infected, hospitalized, and ICU populations.}
    \label{fig: popn comparsion base case}
\end{figure}

When examining the number of individuals exposed to the virus, the top left graph in Figure \ref{fig: popn comparsion base case} illustrates that, without vaccination, a substantial portion of the population remains at risk. This is evidenced by the red line appearing at the top of the data series. In contrast, the other three vaccination strategies lead to a significant reduction in the number of exposed individuals over time. Notably, the optimal vaccination strategy, which features a high initial injection rate, achieves a more rapid decline in exposure than the other two strategies. This is shown by the cyan line consistently being the lowest.

Following an analysis of the exposed population, we move to the changes within the infected compartments, specifically those who are mildly infected, those requiring hospitalization, and those with severe symptoms that require intensive care. The last three graphs in Figure \ref{fig: popn comparsion base case} illustrate these changes. As in the exposed population graph, the scenario without vaccination experiences significant increases in the number of infected individuals across all symptom levels compared to scenarios that utilize vaccination. Comparing all three strategies with vaccination involved, the constant vaccination administration rate proves to be less effective in reducing the number of individuals infected across all severity levels, especially patients in hospitals or ICUs. It is evident from the graphs in the second row that the yellow lines are substantially higher than the purple and cyan lines. These results suggest that maintaining a constant vaccination administration rate throughout the period would not be effective in preventing the spread of the disease.

Furthermore, in the case of the actual vaccination strategy implemented by the government, it is evident that the number of infected patients significantly decreases when compared to the no and constant vaccination administration approaches. This decrease highlights the effectiveness of the past Victorian government's vaccination campaign, which was launched with an initial upswing followed by a gradual decline during the COVID-19 pandemic. Such an approach played a significant role in controlling the spread of the disease and alleviating societal pressures.

In addition, the optimal vaccination strategy developed in Section \ref{subsec in control solving: optimal vacc in vic} demonstrates further potential to reduce the number of infections more rapidly and sustain the lowest proportion of infected individuals over a longer duration. Consequently, it is regarded as the most effective strategy evaluated in this case study. While the government's existing initiatives are commendable, it is considered that further enhancements can be made to control the disease through the proposed optimal strategy under our framework.

\subsubsection{Expenditures comparsion}
\label{subsubsec in vacc strategy comparsion: cost comparsion of strategies under base case}

After analyzing changes in population dynamics, we turn to the costs associated with various government strategies. Table \ref{table: cost breakdown base case} presents a comparison of total expenditures under the four scenarios, with the corresponding graphical representation in Figure \ref{fig: cost comparison base case}. In this case, we break down total government expenditures into categorized sectors and compare expenditures within each group separately. The results are displayed in Figure \ref{fig: cost breakup comparison base case}. For clarity in this analysis, we consolidate the vaccination cost and quarantine subsidy expenses as outlined in Section \ref{subsec in fitting: hyperparameter definition}, referring to them collectively as the aggregate policy cost.

\begin{figure}[htbp]
    \centering
    \includegraphics[width=\textwidth]{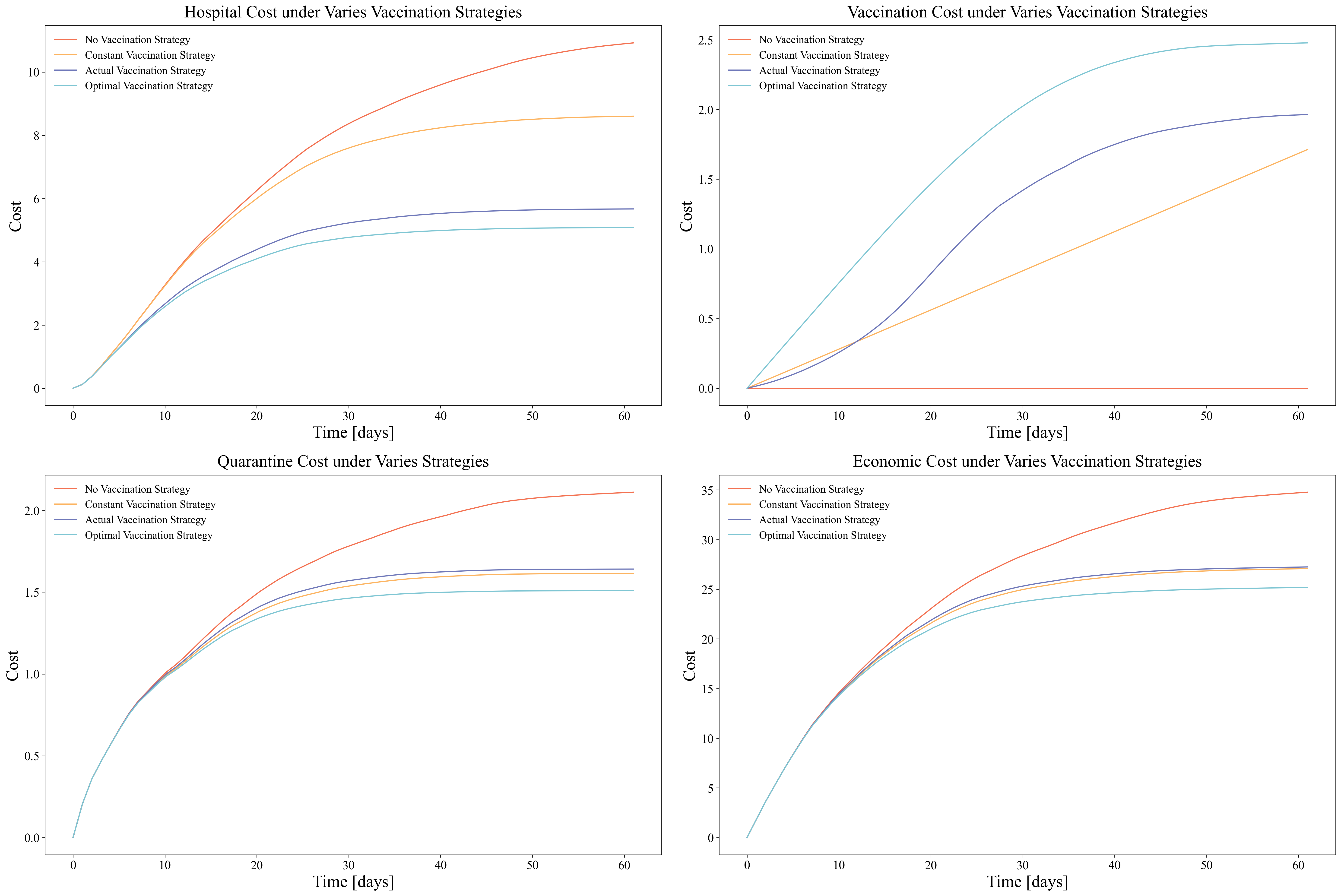}
    \vspace{0.2\baselineskip}
    \caption{Comparison of policy expenditures on cost components during disease progression.}
    \label{fig: cost breakup comparison base case}
\end{figure}

\begin{table}[htbp]
    \captionsetup{skip=5pt}
	\centering
	\begin{tabular}{ccccc}
		\toprule
		Strategy & Policy cost & Healthcare cost & Economic cost & \textbf{Total} \\ 
		\midrule
		No vaccination & 2.11 & 10.95 & 34.89 & \textbf{47.89}    \\
            \midrule
		Constant vaccination & 3.33 & 8.61 & 27.09 & \textbf{39.03}  \\
            \midrule
		  Actual vaccination  & 3.60 & 5.67 & 27.26 & \textbf{36.53}   \\
            \midrule
		Optimal vaccination & 3.99 & 5.08 & 25.20 &  \textbf{34.27}  \\
		\bottomrule
	\end{tabular}
 	\caption{Comparison of government expenditures under different strategies.}
	\label{table: cost breakdown base case}
\end{table}

We begin by considering the scenario with no vaccination during the modelling period. In this case, overall expenditures significantly exceed the other three vaccination strategies, as reflected in the highest red bar on the left in Figure \ref{fig: cost comparison base case}. Although no costs are incurred for vaccination administration in the first scenario, increased exposure to the disease results in higher quarantine costs, thereby increasing the aggregate policy costs. In terms of healthcare system costs, the proportion of infected individuals requiring hospitalization and ICU is considerably higher in the absence of vaccination, which drives up healthcare expenditures. As a result of the widespread transmission of the virus, substantial economic losses have been incurred. Consequently, the total governmental expenditure reaches its highest level in the absence of vaccines.

Following this, we examine the constant vaccination approach and the actual rollout of the government. Under the government's actual strategy, policy costs are higher than they would be under the uniform vaccination approach. This occurs because the government significantly increases vaccination administration rates at the beginning of the period, which leads to a rise in policy costs. Meanwhile, healthcare expenditures under the actual vaccination strategy are lower compared to a constant vaccination approach, as fewer individuals require hospitalization or intensive care. In terms of economic costs, the actual vaccination strategy is intermediate between the zero vaccination strategy and the optimal strategy. The results of this study suggest that, under pressure from social and healthcare system capacities, the Victorian government effectively reduces the number of severe cases through vaccination, but at the expense of vaccination deployment costs and reduced economic productivity. This rationale prompts our study on optimal strategy analysis, aiming to reduce the overall governmental cost while simultaneously lowering the burden on the healthcare system.

Based on the analysis of all vaccination plans, we find that the optimal vaccination strategy achieves the lowest overall cost, as indicated by the cyan bar being the lowest in Figure \ref{fig: cost comparison base case}. However, regarding policy costs, this optimal approach incurs the highest vaccination expense compared to other scenarios, as depicted by the highest curve of the cyan line in the top right graph of Figure \ref{fig: cost breakup comparison base case}. The reason for this substantial vaccination cost lies in the high injection rate recommended by the optimal strategy. Nevertheless, excluding the rollout expense, the government could achieve significant reductions in other cost areas, with the cyan lines being the lowest in the remaining graphs of Figure \ref{fig: cost breakup comparison base case}. Specifically, the optimal vaccination strategy reduces healthcare costs by more than 50\% compared to the zero vaccination strategy. The results align with the previously observed population data for state $\text{I}_1$, $\text{I}_2$, and $\text{I}_3$ in Figure \ref{fig: popn comparsion base case}, suggesting that by adopting the appropriate strategy, the government could also alleviate the hospital burden. Moreover, this optimal vaccination strategy minimizes economic losses. Thus, by optimally regulating the vaccination administration rate, the government can minimize total expenditures while effectively reducing both social and hospital pressures by reducing the number of infections across all levels of illnesses. In this manner, the proposed strategy is expected to balance costs and benefits, resulting in a viable compromise during the pandemic.

\section{Sensitivity Analysis}
\label{sec: discussion}

In this section, we examine how uncertainty affects the optimal vaccination policy by utilizing parameter settings that differ from the policymaker's previous beliefs. Based on the works of \citet{Knight_1921} and \citet{Arrow_1951}, we first differentiate between risk and uncertainty. When model parameters are known, risk represents the range of possible outcomes. Conversely, uncertainty arises when model parameters are unknown or potentially misspecified. Within our framework, we introduce risk by allowing the disease to spread in a non-deterministic manner. Furthermore, uncertainty emerges when the true parameters governing the transmission and severity of a disease are obscured. In light of these uncertainties, the policymakers should consider the implications of different parameter settings when making decisions. Consequently, we conduct a range of sensitivity tests to evaluate the key parameters' impacts embedded in the framework.


\subsection{Uncertainty inside the compartmental model}
\subsubsection{Effect of noise intensities}
\label{subsec in discussion: noise intensity}

Based on recent studies by \citet{Adak_2021}, \citet{Barnett_2023}, and \citet{Gunasekaran_2023}, it is essential to examine the impact of risk in the stochastic system by modifying the set of noise intensity parameters. Following the concept of \citet{La_2024}, to formulate vaccination strategies under varying external risks, policymakers should evaluate the effect of differing noise levels on the proposed model design. In relation to this numerical case study, we examine three additional sets of noise intensity values compared to our baseline. Using the values from Table \ref{table: fitted para assumpn} as a reference level, we multiply the set of calibrated noise intensity levels $Z=\{ \sigma_1, \sigma_2, \sigma_3, \sigma_4, \sigma_5, \sigma_6, \sigma_7, \sigma_8 \}$ by factors of 0.1, 0.5, and 2. These correspond to scenarios with tiny, small, and large noise levels, respectively.

\begin{figure}[htbp]
\centering
\begin{minipage}[t]{0.48\textwidth}
\centering
\includegraphics[width=1\textwidth]{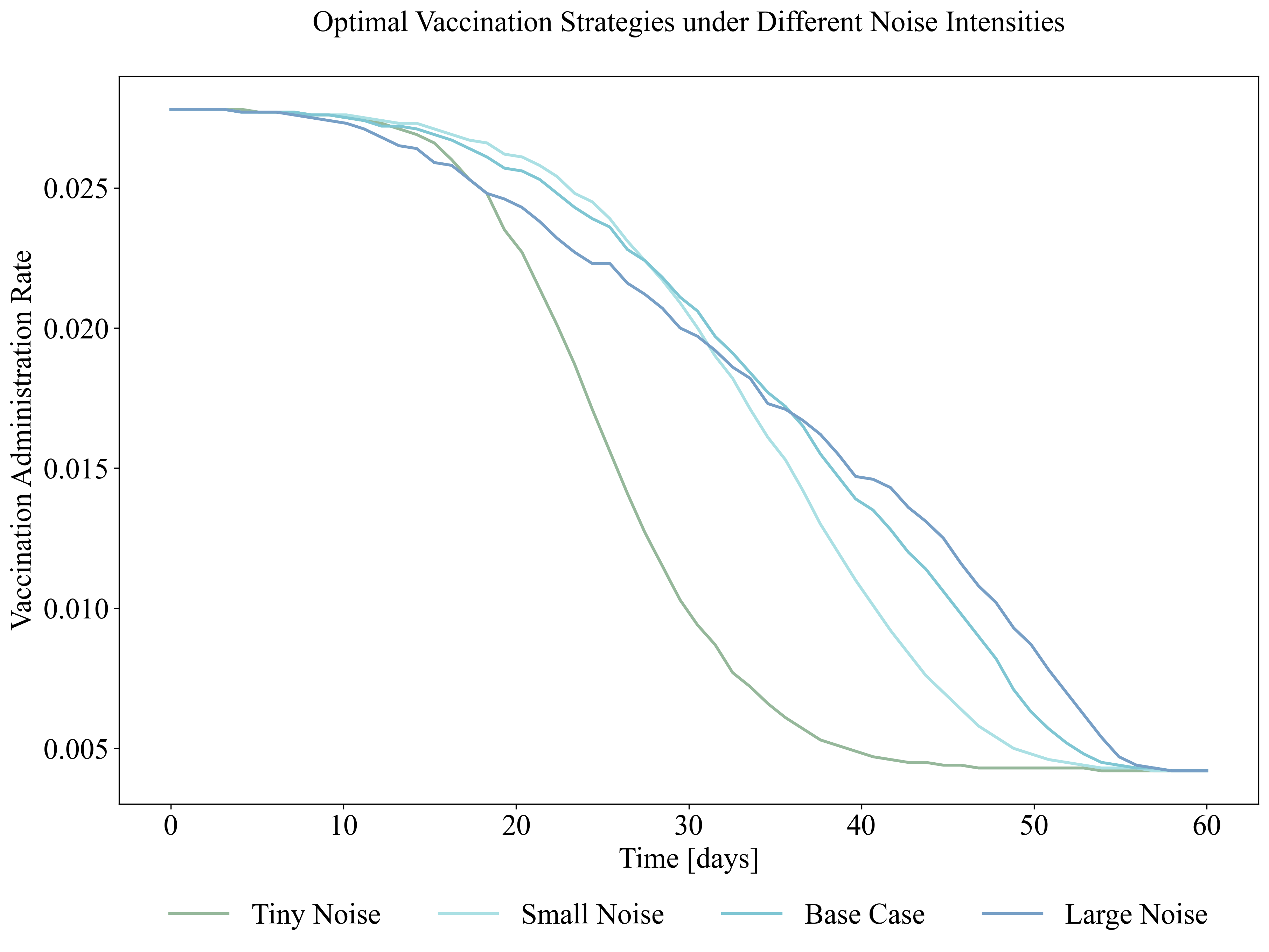}
\captionsetup{width=.9\textwidth} 
\caption{Comparison of optimal vaccination strategies under base case and different noise intensity levels.}
\label{fig: optimal vacc of different noise}
\end{minipage}
\begin{minipage}[t]{0.48\textwidth}
\centering
\usetikzlibrary{patterns, patterns.meta}
\usetikzlibrary{pgfplots.groupplots}
\pgfplotsset{compat=1.17}

\begin{tikzpicture}[auto, scale=0.56]
\pgfdeclarelayer{background}
\pgfdeclarelayer{foreground}
\pgfsetlayers{background,main,foreground}

\definecolor{color1}{RGB}{150, 184, 155}
\definecolor{color2}{RGB}{171, 224, 228}
\definecolor{color4}{RGB}{127, 198, 211}
\definecolor{color3}{RGB}{119, 159, 198}

\begin{axis}[
    ybar stacked,
    ymin=0, ymax=40,
    width=14cm,
    height=9.8cm,
    ylabel=Cost,
    xtick=data,
    symbolic x coords={Tiny Noise, Small Noise, Base Case, Large Noise},
    x tick label style={align=center, text width=3cm},
    enlarge x limits=0.2,
    legend style={at={(0.5,-0.1)}, 
    anchor=north, 
    legend columns=4, 
    /tikz/column sep=2ex,draw=none},
    ymajorgrids = false,
    grid style={dashed, gray!30},
    title={Government Expenditures under Various Vaccination Strategies},
    nodes near coords,
    every node near coord/.style={
        font=\scriptsize,
        black,
        /pgf/number format/.cd,
        fixed,
        precision=2,
    },
    bar width=1cm
]

\addplot[draw=white, pattern=dots, pattern color=black] coordinates {
    (Tiny Noise, 0)
    (Small Noise, 0)
    (Base Case, 0)
    (Large Noise, 0)
};

\addplot[draw=white, pattern=grid, pattern color=black] coordinates {
    (Tiny Noise, 0)
    (Small Noise, 0)
    (Base Case, 0)
    (Large Noise, 0)
};

\addplot[draw=white, pattern=north east lines, pattern color=black] coordinates {
    (Tiny Noise, 0)
    (Small Noise, 0)
    (Base Case, 0)
    (Large Noise, 0)
};

\addplot[draw=white, fill=color1,
    postaction={
        pattern=dots,
        pattern color = white
    }] coordinates {
    (Tiny Noise, 3.17)
    (Small Noise, 0)
    (Base Case, 0)
    (Large Noise, 0)
};

\addplot[draw=white, fill=color2,
    postaction={
        pattern=dots,
        pattern color = white
    }] coordinates {
    (Tiny Noise, 0)
    (Small Noise, 3.81)
    (Base Case, 0)
    (Large Noise, 0)
};

\addplot[draw=white, fill=color4,
    postaction={
        pattern=dots,
        pattern color = white
    }] coordinates {
    (Tiny Noise, 0)
    (Small Noise, 0)
    (Base Case, 3.99)
    (Large Noise, 0)
};

\addplot[draw=white, fill=color3,
    postaction={
        pattern=dots,
        pattern color = white
    }] coordinates {
    (Tiny Noise, 0)
    (Small Noise, 0)
    (Base Case, 0)
    (Large Noise, 3.98)
};

\addplot[draw=white, fill=color1,
    postaction={
        pattern=grid,
        pattern color = white
    }] coordinates {
    (Tiny Noise, 4.75)
    (Small Noise, 0)
   (Base Case, 0)
    (Large Noise, 0)
};

\addplot[draw=white, fill=color2,
    postaction={
        pattern=grid,
        pattern color = white
    }] coordinates {
    (Tiny Noise, 0)
    (Small Noise, 4.91)
   (Base Case, 0)
    (Large Noise, 0)
};

\addplot[draw=white, fill=color4,
    postaction={
        pattern=grid,
        pattern color = white
    }] coordinates {
(Tiny Noise, 0)
    (Small Noise, 0)
    (Base Case, 5.08)
    (Large Noise, 0)
};

\addplot[draw=white, fill=color3,
    postaction={
        pattern=grid,
        pattern color = white
    }] coordinates {
(Tiny Noise, 0)
    (Small Noise, 0)
    (Base Case, 0)
    (Large Noise, 5.14)
};

\addplot[draw=white, fill=color1,
    postaction={
        pattern=north east lines,
        pattern color = white
    }] coordinates {
    (Tiny Noise, 23.16)
    (Small Noise, 0)
    (Base Case, 0)
    (Large Noise, 0)
};

\addplot[draw=white, fill=color2,
    postaction={
        pattern=north east lines,
        pattern color = white
    }] coordinates {
    (Tiny Noise, 0)
    (Small Noise, 24.18)
    (Base Case, 0)
    (Large Noise, 0)
};

\addplot[draw=white, fill=color4,
    postaction={
        pattern=north east lines,
        pattern color = white
    }] coordinates {
    (Tiny Noise, 0)
    (Small Noise, 0)
    (Base Case, 25.20)
    (Large Noise, 0)
};

\addplot[draw=white, fill=color3,
    postaction={
        pattern=north east lines,
        pattern color = white
    }] coordinates {
(Tiny Noise, 0)
    (Small Noise, 0)
    (Base Case, 0)
    (Large Noise, 25.60)
};

\legend{Policy Cost, Healthcare Cost, Economic Cost}

\end{axis}
\end{tikzpicture}
\captionsetup{width=.9\textwidth} 
\caption{Comparison of expenditures on cost components under base case and different noise intensity levels.}
\label{fig: cost of different noise}
\end{minipage}
\end{figure}

It is crucial to examine the evolution of optimal vaccination strategies under different levels of noise intensity. Figure \ref{fig: optimal vacc of different noise} illustrates how vaccination rollout paths change with varying noise levels, while Figure \ref{fig: cost of different noise} presents the corresponding expenditures for each scenario. The results indicate that the reduction in vaccination administration rates midway through the period is less pronounced when noise intensity is high, as shown by the dark blue line. This finding demonstrates that heightened uncertainty compels risk-averse policymakers to adopt a more conservative vaccination strategy, sustaining a high vaccination administration rate over a prolonged period. Consequently, this approach incurs higher costs than the other scenarios, with the dark blue bar in Figure \ref{fig: cost of different noise} representing the highest total expense. In contrast, under conditions of reduced stochastic noise, the proposed vaccination strategy, depicted by the green line, declines more rapidly over time, ultimately lowering aggregate expenditure levels.

We then break down the aggregate costs into components and analyze them in detail. First, as the bottom bar in Figure \ref{fig: cost of different noise} appears to be trending upward with increasing noise intensity, this suggests that the policy expenditure increases with an increase in noise intensity. At the same time, other costs are likely to increase with increasing levels of stochastic noise as well. Observations from this indicate that environmental fluctuations have a significant impact on both the choice of vaccination rollout plans and the overall expenditures of the government. Consequently, policymakers should exercise caution when determining the noise intensity parameters for compartmental models pertaining to the pandemic.

\subsubsection{Effect of infection rates}
\label{subsec in discussion: infection rate}

As the infection rate parameters within the compartmental model, represented by $\beta_1$, $\beta_2$ and $\beta_3$, are the key driving factors of the spread of infectious diseases, we modify these parameters in our configurations and determine the optimal vaccination strategies based on the varying infection rates. To account for more severe diseases, we increase all $\beta$s by 50\%, setting them to 1.5 times the values in Table \ref{table: fitted para assumpn}. Conversely, for a less severe virus, all $\beta$ parameters are halved to account for a mild infection. Furthermore, we reduce the infection rates by 90\% in an extreme scenario, corresponding to a situation with a minimal number of infections. The effectiveness of our optimal policy is evaluated by analyzing the cost savings that the proposed strategy achieved over the scenario in which the government maintains a constant vaccination administration rate, irrespective of any changes in the epidemic over time.

\begin{figure}[htbp]
\centering
\begin{minipage}[t]{0.48\textwidth}
\centering
\usetikzlibrary{patterns, patterns.meta}
\usetikzlibrary{pgfplots.groupplots} \usetikzlibrary{patterns}

\pgfplotsset{compat=1.17}

\begin{tikzpicture}[auto, scale=0.54]
\pgfdeclarelayer{background}
\pgfdeclarelayer{foreground}
\pgfsetlayers{background,main,foreground}

\definecolor{color1}{RGB}{150, 184, 155}
\definecolor{color2}{RGB}{171, 224, 228}
\definecolor{color4}{RGB}{127, 198, 211}
\definecolor{color3}{RGB}{119, 159, 198}

\begin{axis}[
    ybar stacked,
    ymin=0, ymax=60,
    width=14cm,
    height=9.8cm,
    ylabel=Cost,
    xtick=data,
    symbolic x coords={Tiny infection, Mild infection, Base Case, Large Infection},
    x tick label style={align=center, text width=3cm},
    enlarge x limits=0.2,
    legend style={at={(0.5,-0.1)}, anchor=north, legend columns=-1, 
      /tikz/column sep=1ex,draw=none},
    ymajorgrids = false,
    grid style={dashed, gray!30},
    title={Government Expenditures under Various Vaccination Strategies},
    nodes near coords,
    every node near coord/.style={
        font=\scriptsize,
        black,
        /pgf/number format/.cd,
        fixed,
        precision=2,
    },
    bar width=1cm,
]

\addplot[draw=white, pattern=dots, pattern color=black] coordinates {
    (Tiny infection, 0)
    (Mild infection, 0)
    (Base Case, 0)
    (Large Infection, 0)
};

\addplot[draw=white, pattern=grid, pattern color=black] coordinates {
    (Tiny infection, 0)
    (Mild infection, 0)
    (Base Case, 0)
    (Large Infection, 0)
};

\addplot[draw=white, pattern=north east lines, pattern color=black] coordinates {
    (Tiny infection, 0)
    (Mild infection, 0)
    (Base Case, 0)
    (Large Infection, 0)
};

\addplot[draw=white, fill=color1,
    postaction={
        pattern=dots,
        pattern color = white
    }] coordinates {
    (Tiny infection, 2.79)
    (Mild infection, 0)
    (Base Case, 0)
    (Large Infection, 0)
};

\addplot[draw=white, fill=color2,
    postaction={
        pattern=dots,
        pattern color = white
    }] coordinates {
    (Tiny infection, 0)
    (Mild infection, 3.51)
    (Base Case, 0)
    (Large Infection, 0)
};

\addplot[draw=white, fill=color4,
    postaction={
        pattern=dots,
        pattern color = white
    }] coordinates {
    (Tiny infection, 0)
    (Mild infection, 0)
    (Base Case, 5.08)
    (Large Infection, 0)
};

\addplot[draw=white, fill=color3,
    postaction={
        pattern=dots,
        pattern color = white
    }] coordinates {
    (Tiny infection, 0)
    (Mild infection, 0)
    (Base Case, 0)
    (Large Infection, 8.24)
};

\addplot[draw=white, fill=color1,
    postaction={
        pattern=grid,
        pattern color = white
    }] coordinates {
    (Tiny infection, 3.22)
    (Mild infection, 0)
   (Base Case, 0)
    (Large Infection, 0)
};

\addplot[draw=white, fill=color2,
    postaction={
        pattern=grid,
        pattern color = white
    }] coordinates {
    (Tiny infection, 0)
    (Mild infection, 3.46)
   (Base Case, 0)
    (Large Infection, 0)
};

\addplot[draw=white, fill=color4,
    postaction={
        pattern=grid,
        pattern color = white
    }] coordinates {
(Tiny infection, 0)
    (Mild infection, 0)
    (Base Case, 3.99)
    (Large Infection, 0)
};

\addplot[draw=white, fill=color3,
    postaction={
        pattern=grid,
        pattern color = white
    }] coordinates {
(Tiny infection, 0)
    (Mild infection, 0)
    (Base Case, 0)
    (Large Infection, 5.07)
};

\addplot[draw=white, fill=color1,
    postaction={
        pattern=north east lines,
        pattern color = white
    }] coordinates {
    (Tiny infection, 13.27)
    (Mild infection, 0)
    (Base Case, 0)
    (Large Infection, 0)
};

\addplot[draw=white, fill=color2,
    postaction={
        pattern=north east lines,
        pattern color = white
    }] coordinates {
    (Tiny infection, 0)
    (Mild infection, 17.01)
    (Base Case, 0)
    (Large Infection, 0)
};

\addplot[draw=white, fill=color4,
    postaction={
        pattern=north east lines,
        pattern color = white
    }] coordinates {
    (Tiny infection, 0)
    (Mild infection, 0)
    (Base Case, 25.20)
    (Large Infection, 0)
};

\addplot[draw=white, fill=color3,
    postaction={
        pattern=north east lines,
        pattern color = white
    }] coordinates {
(Tiny infection, 0)
    (Mild infection, 0)
    (Base Case, 0)
    (Large Infection, 41.44)
};

\legend{Policy Cost, Health System Cost, Economic Cost}
\end{axis}
\end{tikzpicture}
\captionsetup{width=.9\textwidth} 
\caption{Comparison of expenditures under base case and different infection rate levels.}
\label{fig: optimal cost of different infection}
\end{minipage}
\begin{minipage}[t]{0.48\textwidth}
\centering
\usetikzlibrary{patterns}

\definecolor{color1}{RGB}{150, 184, 155}
\definecolor{color2}{RGB}{171, 224, 228}
\definecolor{color3}{RGB}{127, 198, 211}
\definecolor{color4}{RGB}{119, 159, 198}

\begin{tikzpicture}[auto, scale=0.54]
\begin{axis}[
    ybar,
    enlarge x limits=0.2,
    legend style={at={(0.5,-0.1)},
      anchor=north,legend columns=4, 
      /tikz/column sep=1ex,draw=none},
    ylabel={},
    width=14cm,
    height=9.8cm,
    symbolic x coords={Policy Savings, Healthcare Savings, Economic Savings},
    xtick=data,
    ymajorgrids = true,
    grid style={dashed, white},
    ymin = -2,
    ymax = 9,
    title={Optimal Policy Savings Compared to constant Vaccination Policies},
    nodes near coords,
    every node near coord/.append style={font=\tiny},
    bar width=16pt
]

\addplot[fill=color1] coordinates {(Policy Savings, -0.76) (Healthcare Savings, 1.62) (Economic Savings, 0.16)};
\addplot[fill=color2] coordinates {(Policy Savings, -0.75) (Healthcare Savings, 2.14) (Economic Savings, 0.44)};
\addplot[fill=color3] coordinates {(Policy Savings, -0.66) (Healthcare Savings, 3.53) (Economic Savings, 1.90)};
\addplot[fill=color4] coordinates {(Policy Savings, -0.33) (Healthcare Savings, 7.15) (Economic Savings, 7.75)};
\legend{Tiny Infection, Mild Infection, Base Case, Large Infection}

\end{axis}

\end{tikzpicture}
\captionsetup{width=.9\textwidth} 
\caption{Comparison of cost savings between optimal and constant vaccination strategies under base case and different infection rates.}
\label{fig: cost of different infection}
\end{minipage}
\end{figure}

To begin with, we examine how optimal government expenditure is affected by different infection rates. As shown in Figure \ref{fig: optimal cost of different infection}, when the infection rate increases by 50\%, overall expenditure nearly doubles. The fourth bar illustrates a significant increase in spending across all components of the cost function compared to the third bar, which represents our base case optimal expenditure. Furthermore, when the infection rate is halved, or reduced to only 10\% of the base case setting, overall pandemic-related costs decrease significantly, with the economic expenditure sector being the most impacted.

Additionally, following a similar approach to \citet{Acemoglu_2021}, we analyze the government savings achievable through the implementation of the optimal control approach compared to the constant vaccination strategy. Our findings indicate that the government can save more money by applying the proposed framework to obtain the optimal administration strategy during a severe disease outbreak compared to the case of a mild virus. The dark blue bar, corresponding to the high infection scenario in Figure \ref{fig: cost of different infection}, illustrates significant savings in the healthcare and economic categories. Conversely, using this control strategy for a less severe disease tends to save a small percentage of funds compared to a uniform injection strategy over time. Therefore, during severe pandemics, it is imperative that the government prioritizes control strategies and makes rapid adjustments to plans in response to changing conditions, as this approach would result in significant savings for the country.

\subsection{Uncertainty inside the expenditure function}
\subsubsection{Effect of vaccination costs}
\label{subsec in discussion: vacc cost}

Subsequently, we examine the uncertainty associated with the parameters of the expenditure function. To analyze the sensitivity of cost parameters related to various components, we employ an approach similar to that described in \citet{Pike_2014}. With respect to vaccination policy costs, we consider three alternative levels. The first two scenarios explore reductions in current policy spending, achieved by multiplying the vaccination cost parameter $ c_1 $ by factors of 0.1 and 0.5, respectively. The third scenario involves an increase in current policy spending by doubling the value of $ c_1 $.

Taking into account different levels of policy costs, we analyze how the recommended strategy evolves over time and find that government spending on vaccination plays a pivotal role in shaping the optimal vaccination rollout. Figure \ref{fig: optimal vacc of different vacc cost} demonstrates that the optimal vaccination rate declines at a faster pace as government expenditures for vaccination implementation increase. The dark blue line, which corresponds to higher vaccination costs, remains consistently below the lines representing lower-cost scenarios. In cases where vaccination is more expensive, governments are likely to slow down vaccination efforts more rapidly. This deceleration allows resources to be redirected toward post-pandemic recovery and reconstruction initiatives. On the other hand, when the cost of vaccination implementation has only a marginal impact on overall expenditures, governments can pursue a more gradual reduction in the vaccination rate.

\begin{figure}[htbp]
\centering
\begin{minipage}[t]{0.48\textwidth}
\centering
\includegraphics[width=\textwidth]{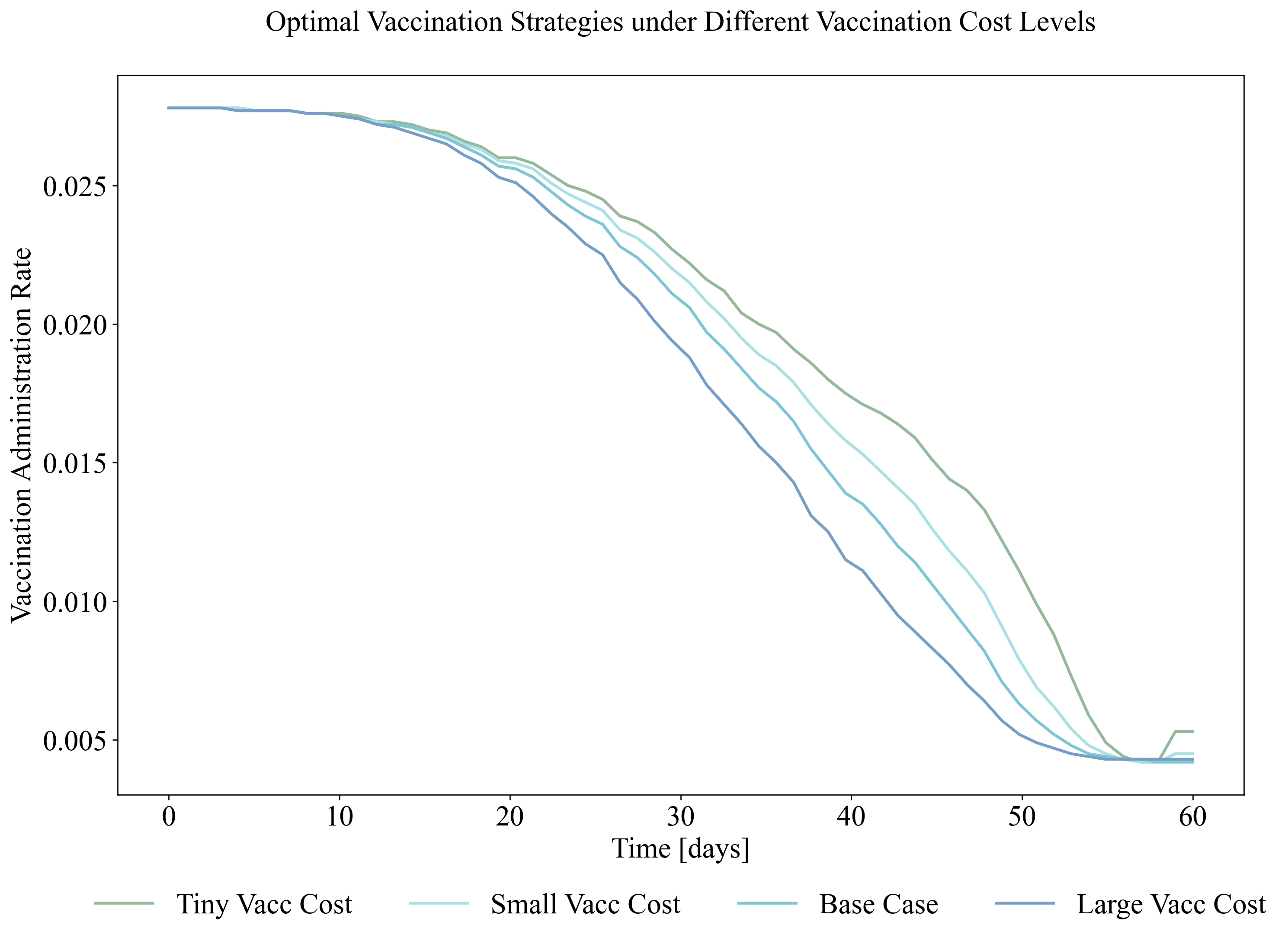}
\captionsetup{width=.9\textwidth} 
\caption{Comparison of optimal vaccination strategies under base case and different vaccination cost levels.}
\label{fig: optimal vacc of different vacc cost}
\end{minipage}
\begin{minipage}[t]{0.48\textwidth}
\centering
\includegraphics[width=\textwidth]{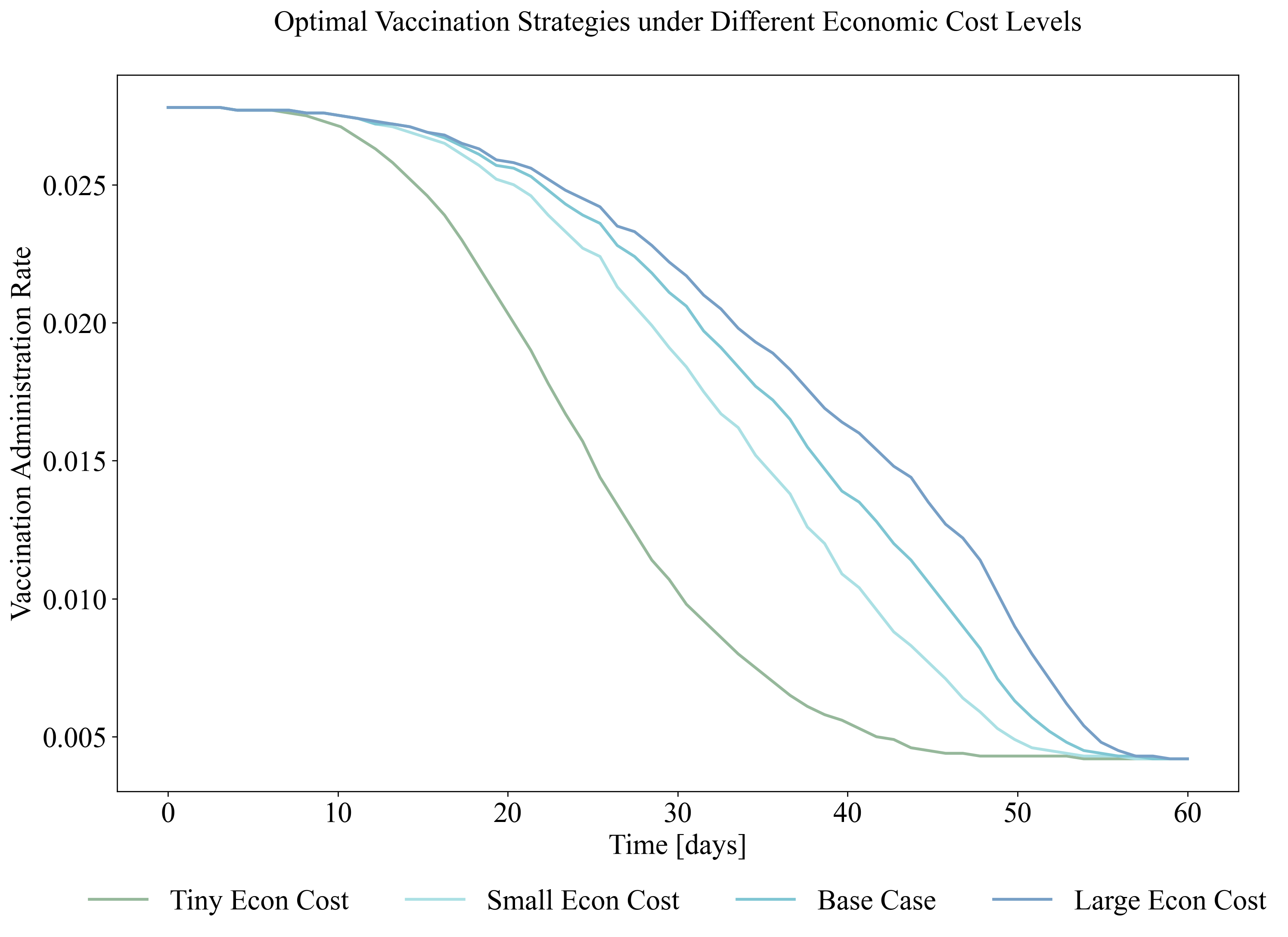}
\captionsetup{width=.9\textwidth} 
\caption{Comparison of optimal vaccination strategies under base case and different economic cost levels.}
\label{fig: cost of different econ cost}
\end{minipage}
\end{figure}


\subsubsection{Effect of economic costs}
\label{subsec in discussion: econ cost}

Considering that the government's expenditures on financial support constitute the most significant component of overall expenses, we also examine how variations in economic costs impact the outcomes of our optimal vaccination strategy. According to \citet{Jackson_2020}, the Australian Government provides substantial financial support during the COVID-19 pandemic. Similar to Section \ref{subsec in discussion: vacc cost}, we examine three additional values of the economic cost parameter $c_6$ relative to our baseline setting, using multiplying factors of 0.1, 0.5, and 2. These correspond to tiny, small, and large economic costs compared to our baseline setting.

Accordingly, the optimal vaccination time paths for each scenario are depicted in Figure \ref{fig: cost of different econ cost}. In situations where the government provides substantial support during a pandemic, the optimal vaccination rate decreases at a slower pace, as illustrated by the dark blue line. Conversely, when government support is significantly diminished, represented by the green line, the vaccination rate declines more rapidly, leading to a reduction in overall spending. This outcome aligns with expectations, given that policymakers are likely to aim to curb disease spread more aggressively when economic losses are substantial.

\subsection{Other scenarios}
\subsubsection{Varies vaccination hesitancy}
\label{subsec in discussion: hesitancy range}

This section examines the impact of changes in vaccination hesitancy on optimal vaccination strategies within our framework. As mentioned in Section \ref{subsec: comparison of strategies}, a crucial factor influencing the optimal vaccination administration rates is the maximum feasible rollout rate achievable by the government. This is connected to the concept of vaccination hesitancy, which, according to \citet{Macdonald_2015}, is defined by the World Health Organization (WHO) as a “delay in acceptance or refusal of safe vaccines despite the availability of vaccination services”. To establish effective search grids for the control variable, it is important to consider this concept. In our baseline analysis, the control variable's range is determined by the maximum and minimum daily vaccination administration rates observed during the modelling period. However, individual preferences regarding vaccination can change rapidly depending on the severity of the viral infection. Thus, it is more common for vaccination acceptance rates to fluctuate over time rather than remain constant. Generally, acceptance rates are initially low at the outset of a vaccine campaign, but as the public becomes more aware of the vaccine's efficacy, hesitancy tends to decrease over time. Furthermore, \citet{Edwards_2021} suggests that the demand for vaccination tends to increase following a pandemic outbreak, with individuals more eager to receive vaccinations after the emergence of a severe outbreak.

\begin{figure}[htbp]
\centering
\begin{minipage}[t]{0.48\textwidth}
\centering
\includegraphics[width=1.08\textwidth]{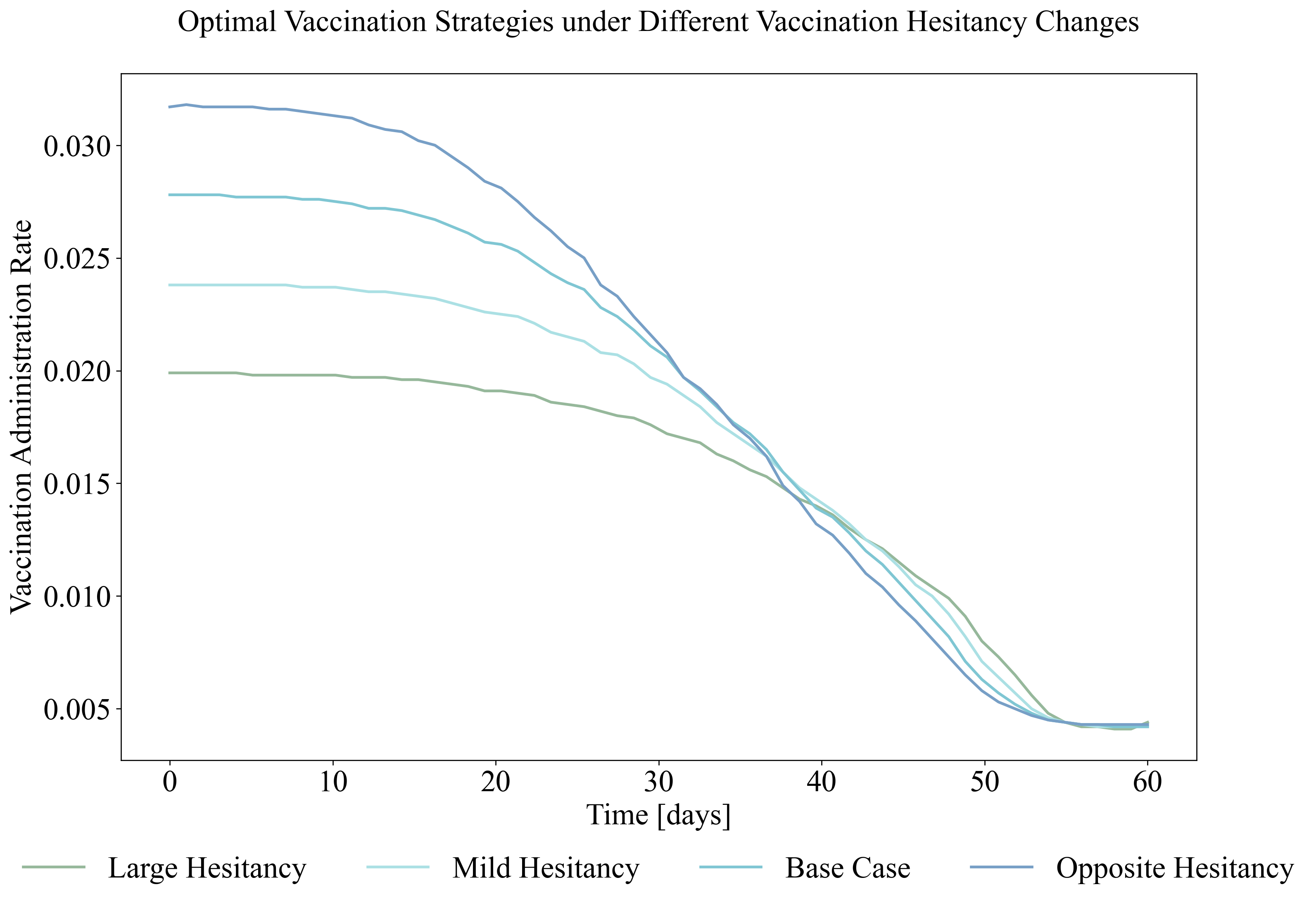}
\captionsetup{width=.9\textwidth} 
\caption{Comparison of optimal vaccination strategies under base case and different vaccination hesitancy levels.}
\label{fig: optimal vacc of different hesitancy}
\end{minipage}
\begin{minipage}[t]{0.48\textwidth}
\centering
\usetikzlibrary{patterns}

\definecolor{color1}{RGB}{227, 245, 246}
\definecolor{color1}{RGB}{150, 184, 155}
\definecolor{color2}{RGB}{171, 224, 228}
\definecolor{color3}{RGB}{127, 198, 211}
\definecolor{color4}{RGB}{119, 159, 198}

\begin{tikzpicture}[auto, scale=0.54]
\begin{axis}[
    ybar,
    enlarge x limits=0.2,
    y tick label style={/pgf/number format/fixed, /pgf/number format/fixed zerofill, /pgf/number format/precision=2},
    legend style={at={(0.5,-0.15)},
      anchor=north,legend columns=0, 
      /tikz/column sep=1ex,draw=none},
    ylabel={},
    width=14cm,
    height=9.8cm,
    symbolic x coords={Policy Savings, Healthcare Savings, Economic Savings},
    xtick=data,
    ymajorgrids = true,
    grid style={dashed, white},
    ymin = -1,
    ymax = 3,
    title={Optimal Policy Savings Compared to constant Vaccination Policies},
    nodes near coords,
    every node near coord/.append style={font=\tiny, /pgf/number format/fixed, /pgf/number format/fixed zerofill, /pgf/number format/precision=2},
    bar width=16pt
]

\addplot[fill=color1] coordinates {(Policy Savings, 0.48) (Healthcare Savings, 0.08) (Economic Savings, 0.43)};
\addplot[fill=color2] coordinates {(Policy Savings, 0.05) (Healthcare Savings, 0.35) (Economic Savings, 1.31)};
\addplot[fill=color3] coordinates {(Policy Savings, -0.38) (Healthcare Savings, 0.59) (Economic Savings, 2.06)};
\addplot[fill=color4] coordinates {(Policy Savings, -0.82) (Healthcare Savings, 0.80) (Economic Savings, 2.70)};
\legend{Large Hesitancy, Mild Hesitancy, Base Case, Opposite Hesitancy}

\end{axis}

\end{tikzpicture}
\captionsetup{width=.9\textwidth} 
\caption{Comparison of cost savings between optimal and constant vaccination strategies under base case and different vaccination hesitancy levels.}
\label{fig: cost of different hesitancy}
\end{minipage}
\end{figure}

According to the proposed strategy obtained in Section \ref{subsec in control solving: optimal vacc in vic}, the optimal vaccination administration rate is highest at the beginning of the period. However, achieving such an elevated initial rate may be impractical due to significant hesitancy at the outset. To address this challenge, we consider the temporal evolution of attitudes toward vaccination. \citet{Biddle_2021} finds a 10\% reduction in hesitancy during the COVID-19 pandemic between August 2020 and January 2021. In light of this, we examine scenarios with hesitancy ratios of 15\% and 30\%. These adjustments reduce the upper limit of our control variable to 70\% and 85\% of the baseline value. Moreover, we examine the scenario of reduced vaccination hesitancy, which may be linked to the increased popularity of vaccination in post-outbreak periods. This change increases the maximum injection rate by a factor of 1.15 and results in an expanded control range.

Changes in individual attitudes toward vaccination impact the optimal vaccination policy in our case study, as illustrated in Figure \ref{fig: optimal vacc of different hesitancy}. In scenarios of high vaccination hesitancy, the green line indicates a compression in the range of the control variable and a decrease in the maximum possible injection rate. Furthermore, we also analyze the benefits of implementing an optimal vaccination policy by comparing the aggregate expenditure level achieved through the control framework to those under the constant vaccination policy. In this case, the government realizes significant savings in scenarios with reduced vaccination hesitancy, as shown in Figure \ref{fig: cost of different hesitancy}. Accordingly, from the perspective of the government, promoting the vaccine campaign and emphasizing its benefits in the early stages of vaccination is crucial for enhancing the effectiveness of the vaccination strategy. Increasing positive public perceptions of vaccination fosters high levels of acceptance, thereby significantly reducing aggregate expenditures.

\subsubsection{Various stages of time}
\label{subsec in discussion: initial stages}

\begin{figure}[htbp]
    \centering
    \includegraphics[width=\textwidth]{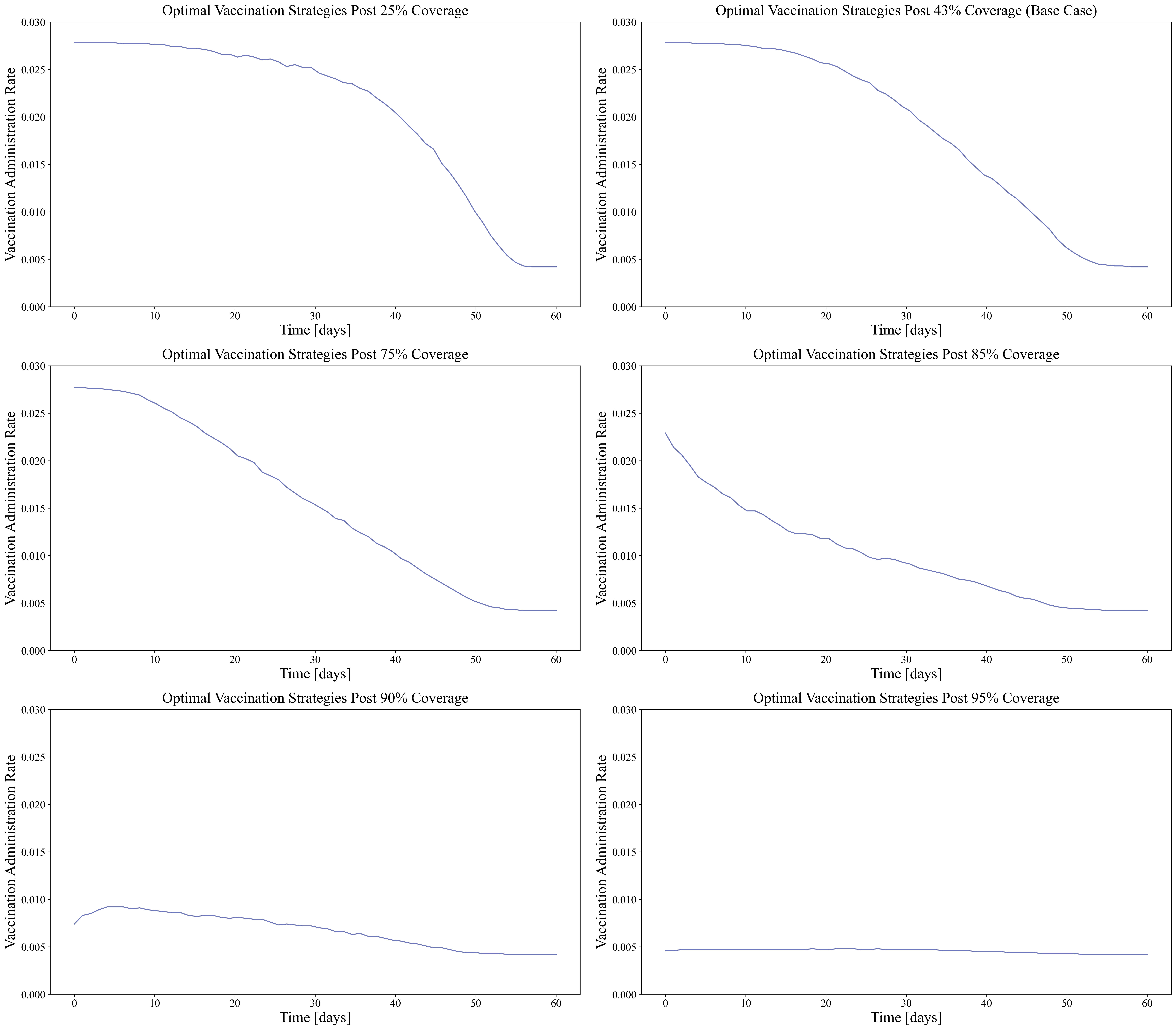}
    \vspace{0.2\baselineskip}
    \caption{Comparison of optimal vaccination strategies across phases of vaccination coverage.}
    \label{fig: optimal vacc of different initial stages}
\end{figure}

In practice, the government's vaccination rollout plan is expected to evolve over time as external environmental conditions change. Social planners should continuously update the vaccination administration campaign to ensure alignment with the latest situation of the virus. In this section, we examine the optimal vaccination plan with different initial vaccinated population proportions, as compared to Table \ref{table: initial state}, relating this to different phases of the vaccination rollout.

In the case of Victoria, the government develops an overall vaccination strategy\footnote{Australia’s COVID-19 Vaccine Rollout. Available online at: https://www.anao.gov.au/work/performance-audit/australia-covid-19-vaccine-rollout} prior to the administration of the first dose, which is published in late 2020. Despite this, the vaccination plan experiences slight delays due to supply-side issues, with the first dose being administered in February 2021. During the initial months, the actual rollout rate of the first dose is relatively low, likely due to the healthcare system not being fully operational during that period. Following the maturation of the system and increased public acceptance of vaccination, the rate of administration rises substantially. In the intermediate phase of first-dose distribution, the Australian government prioritizes promoting the administration of the second dose. As a result of the healthcare system's improved efficiency and experience, the coverage rate for the second dose increases at a faster pace than that of the first dose, as illustrated in Figure \ref{fig: vic vaccination rollout}.

Considering our aim to provide policy recommendations to the government over various time periods, we conduct an analysis of our model across different stages of the vaccination process. These stages are represented by varying percentages of vaccinated individuals. This analysis uses a baseline scenario that occurs during the middle stages of second-dose injections, when most settings have stabilized, and about 43\% of the population has received vaccinations. We examine changes in optimal policy by adjusting the percentage of vaccinated individuals to 25\%, 75\%, 80\%, 85\%, 90\%, and 95\%. As shown in Figure \ref{fig: optimal vacc of different initial stages}, in the early stages, maintaining a higher vaccination rate is essential for effective control. However, when the vaccinated proportion reaches approximately 90\%, the optimal administration rates gradually decrease, indicating that a lower ongoing rate suffices to contain the virus. Thus, by analyzing these different stages of time, our framework enables policymakers to develop adequate vaccination strategies tailored to the prevailing conditions at each phase of the rollout.

\section{Conclusion}
\label{sec: conclusion}

Pandemics of emerging infectious diseases pose significant threats to global health and the economy. This paper aims to provide policymakers with a comprehensive framework in economic epidemiology that assists in developing optimal vaccination strategies during pandemics, thereby reducing government expenditures. This framework comprises three phases: modelling, optimizing, and analyzing. By employing this structured approach, governments can leverage the most recent data and continuously update policy responses over time in accordance with the progression of the virus. In the modelling phase, we utilize observed data to train the model and calibrate parameters to accurately reflect the conditions under investigation. This initial phase incorporates both epidemiological and economic data records. We implement the comprehensive stochastic SVEI3RD compartmental model to capture the dynamics of the pandemic and employ PINNs to effectively estimate the parameters of the epidemic model. For the government's pandemic-related expenditures, we construct an overall cost function that encompasses vaccination implementation costs, quarantine subsidies, healthcare expenditures, and economic losses. By breaking down aggregate expenditures into various components, social planners can adjust parameter assumptions in the cost function based on their beliefs and align them with specific local economic conditions. Moving to the optimization stage, we address the government's trade-off during a pandemic, where simultaneously achieving low hospitalization rates and minimizing vaccination campaign expenditures is challenging. Thus, our focus is on identifying the optimal vaccination administration strategy that lowers the government's overall pandemic-related costs over time. Given the complexity of this high-dimensional stochastic control problem, we apply deep neural networks to determine the optimal vaccination strategy within our model framework. In the final phase of our framework, we analyze the impacts of key parameters through a series of sensitivity tests.

Furthermore, to demonstrate the implementation of our framework, we construct a numerical case study using real-world data from Victoria. Our case study examines the optimal vaccination administration rates for Victoria during the COVID-19 pandemic. By comparing our proposed solution to the actual vaccination rollout plan, we find that the Victorian government made significant progress in managing the pandemic during its later phases. However, our framework has the potential to further reduce the burden on the public healthcare system throughout the entire modelling period. The proposed optimal approach effectively suppresses the spread of the disease while reducing aggregate costs to local governments. Additionally, we conduct a sensitivity analysis of our framework for the case study, which provides policymakers with insights into how the driving factors influence final optimal vaccination campaign decisions. Our analysis reveals that factors such as increased noise intensity levels, higher infection rates, reduced vaccination rollout costs, and enhanced governmental economic support can induce social planners to maintain a persistently high vaccination administration rate over an extended period. We also consider the impact of vaccine hesitancy and different time stages. In this case, policymakers can gain a comprehensive understanding of how to adjust rollout plans over time, enabling effective planning as external conditions change. In summary, our study aims to develop guidelines for policymakers by providing valuable strategic insights on managing pandemics and preparing for future outbreaks.

While our framework offers valuable insights, there are several ways in which the modelling can be improved to enhance accuracy. For instance, because we focus on short-term vaccination strategies, our model does not account for the possibility of reinfection after recovery. Incorporating the transition from the Recovered state back to the Susceptible state would enhance the model if the modelling period were extended. Furthermore, regarding the neural network employed to solve the optimal control problem, we utilize a foundational deep neural network to address this stochastic optimal control challenge. Given the rapid advancement in machine learning, recent studies (e.g., \citet{Han_2018} and \citet{Ji_2020}) have built upon this approach by transforming the optimal control problem into a forward-backward stochastic differential equation (FBSDE) problem. Thus, we could consider using alternative deep neural network architectures to solve our problem, potentially enhancing the efficiency and accuracy of our modelling. Additionally, since the results of the neural network depend on the range of the control variable, further research could be conducted to analyze this issue, which would be beneficial for planning future vaccination programs.

As a final remark, the recent outbreak has posed significant challenges for both the public and governments worldwide. While we cannot change what has already occurred, learning from the past remains crucial. To enable policymakers to prepare for future pandemics, it is essential to understand how pandemics spread and to develop appropriate management strategies over time. If this difficult period prompts a re-evaluation of pandemic responses and facilitates the development of optimal policies through analytical and empirical modelling, we glean valuable lessons from past experience. In this spirit, we write this paper to stimulate further research by economists on epidemic-related social planner problems. Given the uncertainty surrounding future pandemics, researchers are urged to develop effective frameworks to assist policymakers in both preparing for and responding to potential outbreaks. 

\newpage
\appendix 
\section{Prove of existence to the stochastic system} 
\label{sec: appendix 1 existence of sde system}
\begin{theorem}
For any initial values $(S_0,V_0,E_0,I_{1,0},I_{2 ,0},I_{3,0},R_0,D_0) \in \mathbb{R}^8_+$, there exists a unique positive solution $(S_t,V_t,E_t,I_{1,t},I_{2,t},I_{3,t},R_t,D_t)$ of Equation \ref{eqn: sto SVEI3RD without control} on $t \geq 0$
and the solution will remain in $\mathbb{R}^8_+$ with probability 1.
\end{theorem}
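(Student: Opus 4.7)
The plan is to follow the standard Mao-type localization argument for existence, uniqueness, and positivity of solutions to stochastic epidemic systems, as used in \citet{Gray_2011} and \citet{Din_2020}. First, observe that the drift and diffusion coefficients of the system in Equation (\ref{eqn: sto SVEI3RD without control}) are polynomial in the state variables and therefore locally Lipschitz continuous on $\mathbb{R}^8_+$. Standard SDE theory then yields a unique local strong solution $X_t = (S_t, V_t, E_t, I_{1,t}, I_{2,t}, I_{3,t}, R_t, D_t)$ on an interval $[0, \tau_e)$, where $\tau_e$ is the explosion time; the linear multiplicative noise $\sigma_i X_i \, dW_{i,t}$ attached to each equation further prevents any component initialized strictly positive from hitting zero in finite time, so the remaining task is to rule out explosion to $+\infty$ or collapse to $0$ in finite time.

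To that end, choose an integer $k_0$ so large that every $X_i(0) \in (1/k_0, k_0)$, and for each $k \geq k_0$ introduce the stopping time
\[
\tau_k = \inf\bigl\{ t \in [0, \tau_e) : \min_i X_i(t) \leq 1/k \ \text{or} \ \max_i X_i(t) \geq k \bigr\},
\]
with the convention $\inf \emptyset = \infty$. The sequence $(\tau_k)$ is nondecreasing; set $\tau_\infty = \lim_{k \to \infty} \tau_k \leq \tau_e$. Showing that $\tau_\infty = \infty$ almost surely will imply both $\tau_e = \infty$ and that the solution stays in $\mathbb{R}^8_+$. The standard device is the $C^2$ Lyapunov function
\[
\mathcal{V}(x) = \sum_{i=1}^{8} (x_i - 1 - \ln x_i),
\]
which is non-negative on $\mathbb{R}^8_+$ since $u - 1 - \ln u \geq 0$ for all $u > 0$, and which diverges both as any $x_i \to 0^+$ and as any $x_i \to \infty$.

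Applying It\^o's formula yields $d\mathcal{V}(X_t) = L\mathcal{V}(X_t)\, dt + dM_t$, where $M_t$ is a local martingale and $L\mathcal{V}$ aggregates, compartment by compartment, contributions of the form $(1 - 1/x_i) \cdot (\text{drift of } X_i) + \tfrac{1}{2}\sigma_i^2$. The goal is to establish a bound
\[
L\mathcal{V}(x) \leq K\bigl(1 + \mathcal{V}(x)\bigr) \quad \text{on } \mathbb{R}^8_+
\]
for some constant $K > 0$. Once this holds, integrating up to $t \wedge \tau_k$, taking expectations, and applying Gronwall's inequality gives $\mathbb{E}[\mathcal{V}(X_{t \wedge \tau_k})] \leq (\mathcal{V}(X_0) + Kt)\, e^{Kt}$ for every $t$ and $k$. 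On the event $\{\tau_k \leq t\}$ the state $X_{\tau_k}$ has at least one component equal to $k$ or $1/k$, so $\mathcal{V}(X_{\tau_k}) \geq \min(k - 1 - \ln k,\, 1/k - 1 + \ln k)$, and this minimum tends to $+\infty$ as $k \to \infty$. Combining the two estimates forces $\mathbb{P}(\tau_\infty \leq t) = 0$ for every $t \geq 0$, hence $\tau_\infty = \infty$ almost surely, which completes the localization argument.

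The main obstacle will be the derivation of the Lyapunov bound $L\mathcal{V} \leq K(1 + \mathcal{V})$. The logarithmic part of $\mathcal{V}$ produces ratios such as $\Lambda/S$, $(\beta_1 I_1 + \beta_2 I_2 + \beta_3 I_3)(S + \sigma V)/E$, $\gamma E/I_1$, $p_1 I_1/I_2$, $p_2 I_2/I_3$, and $\mu I_3/D$, none of which is a priori dominated by $1 + \mathcal{V}(x)$. Progress relies on two observations: first, the bilinear incidence terms together with the vaccination flux $\alpha S$ enter the $S$, $V$, and $E$ equations with opposite signs, so several dangerous cross terms cancel when one sums $(1 - 1/x_i)\cdot(\text{drift}_i)$ across the eight compartments; and second, the remaining residual terms are dominated by affine functions of the state, which can in turn be absorbed into $1 + \mathcal{V}(x)$ via the elementary bound $x \leq 2(x - 1 - \ln x) + 2$, valid for all $x > 0$. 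Carrying out this bookkeeping carefully for each of the eight compartments is the technical core of the argument, after which the localization scheme above delivers the global positive solution.
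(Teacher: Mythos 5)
Your proposal follows essentially the same route as the paper's proof: the paper uses the identical Lyapunov function $H(x)=\sum_{i=1}^{8}(x_i-1-\log x_i)$, the same family of stopping times $\tau_k$ confining the state to $[1/k,k]^8$, and the same contradiction obtained by letting $k\to\infty$ against the lower bound $(k-1-\log k)\wedge(1/k-1+\log k)$. The only substantive difference is how the generator estimate is closed: the paper bounds $LH\le K$ by a pure constant (which silently uses that the surviving linear terms such as $\beta_1 I_1+\beta_2 I_2+\beta_3 I_3$ are bounded because the compartments are population proportions), whereas you propose the affine bound $L\mathcal{V}\le K(1+\mathcal{V})$ absorbed via $x\le 2(x-1-\ln x)+2$ followed by Gronwall, which reaches the same conclusion without assuming the state is a priori bounded and is, if anything, the more careful variant.
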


\begin{proof}
As the coefficients of the considered equation are locally defined and satisfy the Lipschitz condition with respect to the available initial data for various compartments, namely 
$(S_0,V_0,E_0,I_{1,0},I_{2,0},I_{3,0}, R_0,D_0) \in \mathbb{R}_+^8$, there exists one local solution for $(S_t,V_t,E_t,I_{1,t},I_{2,t},I_{3,t},R_t,D_t)$ on the time interval $t \in [0,\tau_e)$, 
where $\tau_e$ represent the occurrence time. To establish that the solution is global, we demonstrate that $\tau_e=\infty$ almost surely.
Define a sufficiently large non-negative real constant $k_0$, ensuring that the initial values of all states are contained within the interval $[\frac{1}{k_0}, k_0]$.

Then, we define the stopping time 
\begin{equation*}
\begin{aligned}
\tau_k= \bigg\{ t \in [0,\tau_k) \colon \frac{1}{k}
\geq \min \{ S_t, V_t, E_t, I_{1, t}, I_{2, t}, I_{3, t}, R_t, D_t \}
\\
\text{ or } 
\max \{ S_t,V_t,E_t,I_{1, t},I_{2, t},I_{3, t},R_t,D_t \} \leq k \bigg\} \text{ for } \forall k \geq k_0.
\end{aligned}
\end{equation*}

In this case, we assume that $\inf \phi=\infty$ when $\phi$ represents the void set. The stopping time $\tau_k$ grows monotonically as $k \to \infty$. 
Take $\lim_{k \rightarrow \infty} \tau_k  = \tau_\infty$ 
with $\tau_e \geq \tau_\infty$ almost surely.

If $\forall t \in [0, \tau_k)$, we demonstrate that $\tau_\infty = \infty$ almost surely. Furthermore, we establish that $\tau_e = \infty$ almost surely, with the state variables $(S_t,V_t,E_t,I_{1, t},I_{2, t},I_{3, t},R_t,D_t)\in \mathbb{R}_+^8 $.
Consequently, it follows that $\tau_e=\infty$ almost surely.

If the stated assumption is not valid, then there exist two constants $0<T$ and $\varepsilon \in (0,1)$ for 
$P \left( \tau_\infty \leq T \right) > \varepsilon$.
Further, we define an operator $H: \mathbb{R}^8_+ \rightarrow \mathbb{R}_+$, as 
\begin{equation*}
\begin{aligned}
H(S,V,E,I_1,I_2,I_3,R,D)=(S+V+E+I_1+I_2+I_3+R+D)-8
\\
-(logS+logV+logE+logI_1+logI_2+logI_3+logR+logD) 
\end{aligned}
\end{equation*}

Applying the basic inequality, which states that $\forall x>0$, $-\log x + x-1 \geq 0$, we write $H \geq 0$. Assuming that $k \geq k_0$ and $T>0$, we then use Ito’s result to obtain:
\begin{equation*}
\begin{aligned}
dH(S, V, E, I_1, I_2, I_3, R, D) &=LH(S+V+E+I_1+I_2+I_3+R+D) dt + \sigma_1 (S-1)dW_{1,t}+\sigma_2 (V-1)dW_{2,t}  
\\
&+ \sigma_3 (E-1)dW_{3,t}+\sigma_4 (I_1-1)dW_{4,t}+\sigma_5 (I_2-1)dW_{5,t} + \sigma_6 (I_3-1)dW_{6,t}
\\
&+\sigma_7 (R-1)dW_{7,t}+\sigma_8 (D-1)dW_{8,t}
\end{aligned}
\end{equation*}
where LH: $\mathbb{R}_+^8 \rightarrow \mathbb{R}_+$ is given by the following definition: 
\begin{equation*}
\begin{aligned}
&LH(S+V+E+I_1+I_2+I_3+R+D)=(1-\frac{1}{S})[\Lambda -(\beta_1 I_1+\beta_2 I_2+\beta_3 I_3 )S-\alpha S-\zeta S]
\\
&+(1-\frac{1}{V})[\alpha S-(\beta_1 I_1+\beta_2 I_2+\beta_3 I_3 )\sigma V-\zeta V]
+(1-\frac{1}{E})[(\beta_1 I_1+\beta_2 I_2+\beta_3 I_3 )S
+(\beta_1 I_1+\beta_2 I_2+\beta_3 I_3 )\sigma V-\gamma E-\zeta E]
\\
&+(1-\frac{1}{I_1} )[\gamma E-(\delta_1+p_1 ) I_1-\zeta I_1]
+(1-\frac{1}{I_2} )[p_1 I_1-(\delta_2+p_2 ) I_2-\zeta I_2 ]
+(1-\frac{1}{I_3} )[p_2 I_2-(\delta_3+\mu) I_3 - \zeta I_3 ]
\\
&+(1-\frac{1}{R})(\delta_1 I_1+\delta_2 I_2+\delta_3 I_3-\zeta R)
+(1-\frac{1}{D}) \mu I_3 +\frac{1}{2} (\sigma_1^2+\sigma_2^2+\sigma_3^2+\sigma_4^2+\sigma_5^2+\sigma_6^2+\sigma_7^2+\sigma_8^2 )
\end{aligned}
\end{equation*}

\begin{equation*}
\begin{aligned}
LH&= \Lambda-(\beta_1 I_1 + \beta_2 I_2 + \beta_3 I_3 )S-\alpha S-\zeta S
+ \alpha S-(\beta_1 I_1+\beta_2 I_2+\beta_3 I_3 )\sigma V-\zeta V 
\\
&+(\beta_1 I_1+\beta_2 I_2+\beta_3 I_3 )(S+\sigma V)-\gamma E-\zeta E
\\
&+ \gamma E-(\delta_1+p_1 ) I_1-\zeta I_1
\\
&+p_1 I_1-(\delta_2+p_2 ) I_2-\zeta I_2
\\
&+ p_2 I_2-(\delta_3+\mu) I_3-\zeta I_3
\\
&+\delta_1 I_1+\delta_2 I_2+\delta_3 I_3-\zeta R
\\
&+ \mu I_3+\frac{1}{2}(\sigma_1^2+\sigma_2^2+\sigma_3^2+\sigma_4^2+\sigma_5^2+\sigma_6^2+\sigma_7^2+\sigma_8^2 )
\\
&- [\frac{\Lambda}{S} +(\beta_1 I_1+\beta_2 I_2+\beta_3 I_3 )+\alpha+\zeta]
\\
&- [\frac{\alpha S}{V} +(\beta_1 I_1+\beta_2 I_2+\beta_3 I_3 )\sigma + \zeta]
\\
&-[(\beta_1 I_1+\beta_2 I_2+\beta_3 I_3 ) \frac{S+\sigma V}{E} + \gamma + \zeta]
\\
&- [\gamma \frac{E}{I_1}  +(\delta_1+p_1 )+\zeta]
\\
&-[p_1 \frac{I_1}{I_2} +(\delta_2+p_2 )+\zeta
-p_2 \frac{I_2}{I_3} +(\delta_3+\mu)+\zeta]
\\
&-[(\delta_1 I_1+\delta_2 I_2+\delta_3 I_3 )  \frac{1}{R}+\zeta ]
-\mu \frac{I_3}{D} 
\end{aligned}
\end{equation*}

\begin{equation*}
\begin{aligned}
LH& \leq \frac{1}{2}  (\sigma_1^2+\sigma_2^2+\sigma_3^2+\sigma_4^2+\sigma_5^2+\sigma_6^2+\sigma_7^2+\sigma_8^2)
\\
&+\Lambda+\beta_1+\beta_2+\beta_3+\alpha+\sigma+\gamma+\delta_1+p_1+\delta_2+p_2+\delta_3 +\mu+7\zeta=K
\end{aligned}
\end{equation*}

This implies that the mathematical form of $K$ is non-negative, and it does not depend on either the state variable or the independent variable.
\begin{equation*}
\begin{aligned}
dH(S,V,E,I_1,I_2,I_3,R,D) &\leq Kdt 
+ \sigma_1 (S-1) dW_{1, t}
+ \sigma_2 (V-1) dW_{2, t}
+ \sigma_3 (E-1) dW_{3, t}
\\
&+ \sigma_4 (I_1-1)dW_{4, t}
+ \sigma_5 (I_2-1)dW_{5, t}
+ \sigma_6 (I_3-1)dW_{6, t}
\\
&+ \sigma_7 (R-1) dW_{7, t}
+ \sigma_8 (D-1) dW_{8, t}
\end{aligned}
\end{equation*}

Integrating the equation above, we obtain
\begin{equation*}
\begin{aligned}
E& \bigg[\ H ( S(\tau_k \wedge T),V(\tau_k \wedge T),E(\tau_k \wedge T),I_1 (\tau_k \wedge T),I_2 (\tau_k \wedge T),I_3 (\tau_k \wedge T),R(\tau_k \wedge T),D(\tau_k \wedge T) ) \bigg]\
\\
\leq & H(S_0,V_0,E_0,I_{1,0},I_{2,0},I_{3,0},R_0,D_0)
+E \bigg[\ \int_0^{\tau_k \wedge T}K dt \bigg]\
\\
\leq & H(S_0, V_0, E_0, I_{1,0}, I_{2,0}, I_{3,0},R_0,D_0) + TK
\end{aligned}
\end{equation*}

Set $\Omega_k=\{\tau_k \leq T\}$ for $k \geq k_1$, we obtain $P(\Omega_k) \geq \epsilon$. In this case, for each $\omega$ from $\Omega(\omega)$, there must exist one or more than one $S(\tau_k, \omega ),V(\tau_k, \omega ),E(\tau_k, \omega ),I_1 (\tau_k, \omega ),I_2 (\tau_k, \omega ),I_3 (\tau_k, \omega ),R(\tau_k, \omega ),D(\tau_k, \omega )$ which equals $\frac{1}{k}$ or $k$. 

As a result, $H ( S(\tau_k ),V(\tau_k),E(\tau_k ),I_1 (\tau_k),I_2 (\tau_k),I_3 (\tau_k),R(\tau_k),D(\tau_k) ) $ is no less than $(\frac{1}{k}-1+\log k)$ and $(k-1-\log k)$, and we obtain
$$H(S(\tau_k ),V(\tau_k),E(\tau_k ),I_1 (\tau_k),I_2 (\tau_k),I_3 (\tau_k),R(\tau_k),D(\tau_k))\geq (\frac{1}{k}-1+\log k) \wedge (k-1-\log k) $$

Hence, we get 
\begin{equation*}
\begin{aligned}
&H(S_0,V_0,E_0,I_{1,0},I_{2,0},I_{3,0},R_0,D_0)+TK
\\
&\geq E[\mathbbm{1}_{\Omega(\omega)}  H(S(\tau_k ),V(\tau_k ),E(\tau_k ),I_1 (\tau_k ),I_2 (\tau_k ),I_3 (\tau_k ),R(\tau_k ),D(\tau_k ))]
\\
&\geq \epsilon [(k-\log k-1)\wedge(\frac{1}{k} +\log k-1)]
\end{aligned}
\end{equation*}

where $\mathbbm{1}_{\Omega(\omega)}$  represents the indicating operator of $\omega$. By letting $k \rightarrow \infty$, this leads to the contradiction $\infty > H(S_0,V_0,E_0,I_{1,0},I_{2,0},I_{3,0},R_0,D_0)+TK =\infty$. This contradiction implies that $\tau_\infty = \infty$ almost surely, thereby completing the proof.
\end{proof}	

\section{The existence of control solution} 
\label{sec: appendix 8 Existence of control Solution}

\label{subsec in setup: existence of sto control}

\begin{theorem}
For any $X \in \mathbb{R}^8_+$, if $J(\alpha^*)$ is finite, then the stochastic optimal control problem admits an optimal control.
\end{theorem}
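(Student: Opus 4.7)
The plan is to establish existence by the direct method of the calculus of variations adapted to the stochastic control setting, rather than by characterising the value function through a Hamilton--Jacobi--Bellman equation. The strategy rests on three structural features of the problem: the admissible controls $\alpha_t$ are constrained to the compact interval $[0,\alpha_{\max}]$ set by the vaccine--hesitancy upper bound (as used in Section \ref{subsec in discussion: hesitancy range}); the diffusion coefficients in Equation (\ref{eqn: sto SVEI3RD without control}) do not depend on the control; and the running cost contains the strictly convex term $c_1 \alpha_t^2$, which provides coercivity and weak lower semicontinuity in $\alpha$.

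First, because $J(\alpha^*) < \infty$ the infimum $J^\star = \inf_{\alpha} J(\alpha)$ is finite, so I would extract a minimising sequence $\{\alpha^n\}$ of admissible controls with $J(\alpha^n) \downarrow J^\star$. Uniform boundedness of $\alpha^n$ in $L^\infty(\Omega\times[0,T])$ together with the Banach--Alaoglu theorem on $L^2(\Omega\times[0,T])$ yields a subsequence (not relabelled) converging weakly to a progressively measurable limit $\bar\alpha$; a Mazur-type argument shows $\bar\alpha$ remains $[0,\alpha_{\max}]$-valued almost surely, hence admissible.

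Second, let $X^n$ and $\bar X$ denote the state trajectories driven by $\alpha^n$ and $\bar\alpha$ respectively. The existence and positivity of these trajectories are guaranteed by Appendix \ref{sec: appendix 1 existence of sde system}, and the drift is Lipschitz in $(X,\alpha)$ on the (almost surely bounded) domain in which the state remains. A combination of Gronwall's inequality with the Burkholder--Davis--Gundy inequality then yields a stability estimate of the form
$$\mathbb{E}\Bigl[\sup_{t\le T}\lvert X^n_t - \bar X_t\rvert^2\Bigr] \;\le\; C\,\mathbb{E}\int_0^T\lvert \alpha^n_t - \bar\alpha_t\rvert^2\,dt.$$
Since weak convergence alone does not deliver convergence of the right-hand side, I would invoke Mazur's lemma to pass to convex combinations $\tilde\alpha^n \to \bar\alpha$ strongly in $L^2$, and use convexity of $J$ in $\alpha$ (the drift is affine in $\alpha$ and the running cost is convex in $\alpha$) to preserve the minimising property.

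Third, I would combine weak lower semicontinuity of $\alpha \mapsto \mathbb{E}\int_0^T c_1\alpha_t^2\,dt$ with continuous dependence of the state-dependent cost components $V_q, V_h, V_l$ on $X$ (and thus on $\alpha$ through the stability estimate) to conclude $J(\bar\alpha) \le \liminf_n J(\alpha^n) = J^\star$, so $\bar\alpha$ is optimal. The main obstacle I anticipate is the passage to the limit inside the nonlinear drift term $(\beta_1 I_1 + \beta_2 I_2 + \beta_3 I_3)(S+\sigma V)$, which couples through the affine hospitalisation response $p_1(\alpha)$ and mixes controls with products of state variables. The cleanest route to handle this is to verify Roxin's convexity condition on the extended velocity--cost set $\{(b(t,x,u),\,g(t,x,u)+r) : u\in[0,\alpha_{\max}],\, r\ge 0\}$, which holds in our model because $b$ is affine in $u$ and $g$ is convex in $u$; should that fail, the alternative is to recast the problem in the space of relaxed (Young-measure) controls, establish existence of a relaxed optimiser by standard compactness, and then exploit the affine-convex structure to show that the relaxed optimum is in fact attained at a strict control.
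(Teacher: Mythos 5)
Your proposal ultimately lands on the same key structural fact as the paper, but reaches it by a different and longer road. The paper's proof is a direct verification of the Filippov--Cesari-type hypotheses of Theorem 5.3 in Yong and Zhou: compactness of the control set $[0,\alpha_{\max}]$, Lipschitz continuity of the coefficients (inherited from the global well-posedness result of Appendix \ref{sec: appendix 1 existence of sde system}), and convexity of the extended set $(b,\mathbf{Z}\mathbf{Z}^T,f)$ in the control --- which holds because $\mathbf{Z}$ is control-independent, $b$ is affine in $\alpha$, and the running cost is convex in $\alpha$ through the $c_1\alpha^2$ term. That is precisely the ``Roxin convexity condition'' you name at the very end as the cleanest route, so your fallback \emph{is} the paper's proof. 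What your direct-method version would buy, if it worked, is an argument that does not lean on a citation to an abstract existence theorem; what the paper's route buys is brevity and the avoidance of the delicate limit passage in the state equation that occupies most of your second and third steps.

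There is, however, a genuine gap in the direct-method portion as written. You justify preserving the minimising property under Mazur convex combinations by asserting that $J$ is convex in $\alpha$ ``because the drift is affine in $\alpha$ and the running cost is convex in $\alpha$.'' This inference is not valid here: the dynamics in Equation (\ref{eqn: sto SVEI3RD without control}) are nonlinear (indeed bilinear) in the state through terms such as $(\beta_1 I_1+\beta_2 I_2+\beta_3 I_3)S$, so the control-to-state map $\alpha\mapsto X^{\alpha}$ is not affine, and the composition of a convex running cost with a nonlinear state map need not be convex in $\alpha$. Without convexity of $J$, Mazur's lemma gives you strong $L^2$ convergence of convex combinations $\tilde\alpha^n$ but no control on $J(\tilde\alpha^n)$ relative to $J(\alpha^n)$, and the minimising property can be lost. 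A second, smaller issue is that your Gronwall/Burkholder--Davis--Gundy stability estimate assumes a global Lipschitz drift, whereas the coefficients are only locally Lipschitz; Appendix \ref{sec: appendix 1 existence of sde system} gives positivity and non-explosion, not a uniform almost-sure bound, so the estimate needs a localisation argument. Both difficulties are exactly what the Roxin/Filippov convexity condition is designed to circumvent, which is why the paper (and your own closing remark) takes that route instead.
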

 
\begin{proof}
To establish the existence of the optimal control, it suffices to verify that conditions [C1]-[C3] are satisfied for Equation \ref{eqn: sto SVEI3RD without control}, as the finiteness of the cost functional $J(\cdot)$ at the optimal control follows directly from the bounded nature of $X_t$ and $\alpha_t$.

\begin{enumerate}[label=\textbf{C\arabic*}]
    \item Since the set for the control variable $\alpha$ is closed and bounded, it follows that the set is a compact metric space under the usual metric.

    \item As Theorem 1 guarantees the existence of positive global solutions for Equation \ref{eqn: sto SVEI3RD without control}, the functions $b$ and $\mathbf{Z}$ are established to be Lipschitz continuous.
    
    \item From the study of \cite{Boyd_2004}, a set $C \subset \mathbb{R}^n$ is convex if the line segment connecting any two points in $C$ lies entirely within $C$. Specifically, for any $x_1, x_2 \in C$ and for any $0 \leq \theta \leq 1$, it holds that $\theta x_1 + (1-\theta) x_2 \in C$.

\end{enumerate}

In this case, we need to prove that, for every $(t, X)$, the set $(b, \mathbf{Z} \mathbf{Z}^T, J)$ is convex in $C \subseteq \mathbb{R}^8$, which ensures the existence of the optimal control variable.
Since $$ 
\mathbf{Z} = \mathbf{Z}^T =
\begin{bmatrix}
\sigma_1 S_t & 0 & 0 & 0 & 0 & 0 & 0 & 0 \\
0 & \sigma_2 V_t & 0 & 0 & 0 & 0 & 0 & 0 \\
0 & 0 & \sigma_3 E_t & 0 & 0 & 0 & 0 & 0\\
0 & 0 & 0 & \sigma_4 I_{1,t} & 0 & 0 & 0 & 0 \\
0 & 0 & 0 & 0 & \sigma_5 I_{2,t} & 0 & 0 & 0 \\
0 & 0 & 0 & 0 & 0 & \sigma_6 I_{3,t} & 0 & 0 \\
0 & 0 & 0 & 0 & 0 & 0 & \sigma_7 R_t & 0 \\
0 & 0 & 0 & 0 & 0 & 0 & 0 & \sigma_8 D_t
\end{bmatrix}
$$
and 
$$ 
\mathbf{Z} \mathbf{Z}^T (t, X_t, \alpha_t) =
\begin{bmatrix}
\sigma_1^2 S_t^2 & 0 & 0 & 0 & 0 & 0 & 0 & 0 \\
0 & \sigma_2^2  V_t^2  & 0 & 0 & 0 & 0 & 0 & 0 \\
0 & 0 & \sigma_3^2  E_t^2  & 0 & 0 & 0 & 0 & 0\\
0 & 0 & 0 & \sigma_4^2  I_{1,t}^2  & 0 & 0 & 0 & 0 \\
0 & 0 & 0 & 0 & \sigma_5^2  I_{2,t}^2  & 0 & 0 & 0 \\
0 & 0 & 0 & 0 & 0 & \sigma_6^2  I_{3,t}^2  & 0 & 0 \\
0 & 0 & 0 & 0 & 0 & 0 & \sigma_7^2  R_t^2  & 0 \\
0 & 0 & 0 & 0 & 0 & 0 & 0 & \sigma_8^2  D_t^2 
\end{bmatrix},
$$
$\mathbf{Z} \mathbf{Z}^T$ is independent of $\alpha$, the set {$\mathbf{Z} \mathbf{Z}^T$ } is convex. 

Then, consider $f=c_1 \alpha^2 +c_2 E_t +c_3 I_{1,t} + c_4 I_{2,t} + c_5 I_{3,t} + c_6 \psi [1-(S_t+V_t+E_t)]$, we let $f_1, f_2$ be two such elements given by 
$f_1 = c_1 \alpha_1^2 +c_2 E_t +c_3 I_{1,t} + c_4 I_{2,t} + c_5 I_{3,t} + c_6 \psi [1-(S_t+V_t+E_t)] $ and $f_2 = c_1 \alpha_2^2 +c_2 E_t +c_3 I_{1,t} + c_4 I_{2,t} + c_5 I_{3,t} + c_6 \psi [1-(S_t+V_t+E_t)] $ where $\alpha_1, \alpha_2 \in C$.

Let $0 \leq \theta \leq 1$, then 
$\theta f_1+ (1-\theta) f_2 = c_2 E_t +c_3 I_{1,t} + c_4 I_{2,t} + c_5 I_{3,t} + c_6 \psi [1-(S_t+V_t+E_t)]+c_1[\theta \alpha_1^2 +(1-\theta)\alpha_2^2]$.

Since $y=\alpha^2$ is convex, $\exists \alpha_3 \in C $ where $\theta \alpha_1^2 +(1-\theta)\alpha_2^2 = \alpha_3^2$.

Therefore, $\theta f_1+ (1-\theta) f_2 = c_2 E_t +c_3 I_{1,t} + c_4 I_{2,t} + c_5 I_{3,t} + c_6 \psi [1-(S_t+V_t+E_t)]+c_1 \alpha_3^2  $. Hence, the set \{$f|\alpha \in C$\} is convex. 

Then, we are left to prove that \{$((b_1,b_2, b_3, b_4, b_5, b_6, b_7, b_8)|u \in U)$\} is convex, where 
\begin{equation}
\left\{
\begin{aligned}
b_1 & =  \Lambda -(\beta_1 I_1 + \beta_2 I_2 + \beta_3 I_3) S - \alpha S -\zeta S ,
\\
b_2 & = \alpha S - (\beta_1 I_1 + \beta_2 I_2 + \beta_3 I_3) \sigma V -\zeta V,
\\
b_3 & =  (\beta_1 I_1 + \beta_2 I_2 + \beta_3 I_3) S + (\beta_1 I_1 + \beta_2 I_2 + \beta_3 I_3) \sigma V - \gamma E -\zeta E ,
\\
b_4 & = \gamma E - (\delta_1 + p_1) I_1 -\zeta I_1 ,
\\
b_5 & = p_1 I_1 - (\delta_2 + p_2) I_2 -\zeta I_2 ,
\\
b_6 & = p_2 I_2 - (\delta_3 + \mu ) I_3 -\zeta I_3,
\\
b_7 & = \delta_1 I_1 + \delta_2 I_2  + \delta_3 I_3 -\zeta R,
\\
b_8 & =  \mu I_3 .
\end{aligned}\right.
\end{equation}

We obtain $b_1+b_2+b_3+b_4+b_5+b_6+b_7+b_8 = \Lambda-\zeta(S+V+E+I_1+I_2+I_3+R)$ and we let $D=\Lambda-\zeta(S+V+E+I_1+I_2+I_3+R)$.

Define $b^p = 
\begin{bmatrix}
b_1^p(t) \\
b_2^p(t) \\
b_3^p(t) \\
b_4^p(t) \\
b_5^p(t) \\
b_6^p(t) \\
b_7^p(t) \\
b_8^p(t) 
\end{bmatrix}$
and $b^q = 
\begin{bmatrix}
b_1^q(t) \\
b_2^q(t) \\
b_3^q(t) \\
b_4^q(t) \\
b_5^q(t) \\
b_6^q(t) \\
b_7^q(t) \\
b_8^q(t) 
\end{bmatrix}$ be two distinct non-zero vectors. 

Since $\theta b^p+(1-\theta) b^q = \theta D + (1-\theta)D=D$, we conclude that the set $(b|\alpha \in C)$ is convex. 

Hence, for every $(t,X)$, the set $(b, \mathbf{Z} \mathbf{Z}^T, f) $ is convex in $C \subset \mathbb{R}^8$. Therefore, from Theorem 5.3 in the textbook of \cite{Yong_1999}, the existence of optimal control is guaranteed. 

\end{proof} 


\newpage
\section{Model fitting algorithms} 
\label{sec: appendix 5 model fitting algorithm}

\subsection{Deterministic Model Fitting}
\label{subsec: appendix 5 det model fitting algorithm}

\begin{algorithm}
\caption{Deterministic Data Fitting}
\begin{algorithmic}
\Require {$t$, $S_t$, $V_t$, $E_t$, $I_{1,t}$, $I_{2,t}$, $I_{3,t}$, $R_t$, $D_t$} 
\State Randomly initialize weights $\mathbf{w}$, biases $\mathbf{b}$ and parameters set $\Theta$ 
\For{\text{epoch in epochs}}
    \State Obtain the values of each compartment of the SVEI3RD model with the input as $t$: 
    \State $ \{S^{NN}_{t_i}, V^{NN}_{t_i}, E^{NN}_{t_i}, I^{NN}_{1,t_i}, I^{NN}_{2,t_i}, I^{NN}_{3,t_i}, R^{NN}_{t_i}, D^{NN}_{t_i} \} = {NN}(t_i)  $
    \State Calculate the data loss function denoting the mismatch of the output of the neural network and observation data
    \State ( $i$ representing the number of data points):
    \begin{equation*}
    \begin{aligned}       
    {L}_{{Data}} = \frac{1}{{N}} \sum_{i=1}^{{N}} 
    &\bigg[\ (S_{t_i} - S_{t_i}^{{NN}})^2 + (V_{t_i} - V_{t_i}^{{NN}})^2 
        +(E_{t_i} - E_{t_i}^{{NN}})^2 + (I_{1,t_i} - I_{1, t_i}^{{NN}})^2 \\
        +& (I_{2,t_i} - I_{2, t_i}^{{NN}})^2   + (I_{3,t_i} - I_{3, t_i}^{{NN}})^2 
        + (R_{t_i} - R_{t_i}^{{NN}})^2 + (D_{t_i} - D_{t_i}^{{NN}})^2 \bigg]\
    \end{aligned}
    \end{equation*}
    \State Obtain the auto-differentiation values 
     $\{ \dot{S}^{NN}_{t_i}, \dot{V}^{NN}_{t_i}, \dot{E}^{NN}_{t_i}, \dot{I}^{NN}_{1,t_i}, \dot{I}^{NN}_{2,t_i}, \dot{I}^{NN}_{3,t_i}, \dot{R}^{NN}_{t_i}, \dot{D}^{NN}_{t_i} \} $
    \State Work out the residual loss represents the mean of the sum of squared residual errors from each compartment:
    \begin{equation*}
    \begin{aligned} 
    {L}_{{DE}} = \frac{1}{N} \sum_{i=1}^{{N}} 
   & \bigg\{ [ \Lambda -(\beta_1 I_{1,t_i} + \beta_2 I_{2,t_i} + \beta_3 I_{3,t_i}) S_{t_i} - \alpha S_{t_i} -\zeta S_{t_i} -  \dot{S}_{t_i}^{NN} ] ^2 \\
    +& [\alpha S_{t_i} - (\beta_1 I_{1, t_i} + \beta_2 I_{2, t_i} + \beta_3 I_{3, {t_i}}) \sigma V_{t_i} -\zeta V_{t_i}-  \dot{V}_{t_i}^{NN} ]^2 \\
   + &[ (\beta_1 I_{1,t_i} + \beta_2 I_{2,t_i} + \beta_3 I_{3,t_i}) (S_{t_i} + \sigma V_{t_i}) - \gamma E_{t_i} -\zeta E_{t_i} -  \dot{E}_{t_i}^{NN}]^2 \\
    +& [p_1 I_{1,t_i} - (\delta_2 + p_2) I_{2,t_i} -\zeta I_{2,t_i}-  \dot{I}_{1,t_i}^{NN}]^2 \\
     +& [p_1 I_{1,t_i} - (\delta_2 + p_2) I_{2,t_i} -\zeta I_{2,t_i}-  \dot{I}_{2,t_i}^{NN}]^2 \\
      +& [p_2 I_{2,t_i} - (\delta_3 + \mu ) I_{3,t_i} -\zeta I_{3,t_i} -  \dot{I}_{3,t_i}^{NN}]^2 \\
      +& (\delta_1 I_{1,t_i} + \delta_2 I_{2,t_i}  + \delta_3 I_{3,t_i} -\zeta R_{t_i}-  \dot{R}_{t_i}^{NN})^2 \\
      + &( \mu I_{3,t_i} - \dot{D}_{t_i}^{NN})^2 \bigg\}
    \end{aligned}
    \end{equation*}
    \State Calculate the total loss function as:
    $$
    {L} = \lambda_{{DE}} {L}_{{DE}} +  \lambda_{{Data}} {L}_{{Data}}
    $$
    \State Update $\mathbf{w}$, $\mathbf{b}$ and the dynamic parameters set $\Theta$ by using the Adam optimizer toolkit in Tensorflow.
\EndFor
\end{algorithmic}
\end{algorithm}

\newpage
\subsection{Stochastic Model Fitting}
\label{subsec: appendix 5 sto model fitting algorithm}

\begin{algorithm}
\caption{Stochastic Data Fitting}
\begin{algorithmic}
\Require {$t$, $S_t$, $V_t$, $E_t$, $I_{1,t}$, $I_{2,t}$, $I_{3,t}$, $R_t$, $D_t$}, $ {N}_\text{MC}$ where ${N}_\text{MC}$ is the number of iteration over the SDE.
\State Randomly initialize weights $\mathbf{w}$, biases $\mathbf{b}$ and dynamic parameter sets $\Theta$ and $\mathbf{Z}$.
\For {epoch in epochs}
    \For{$j = 1$ to ${N}_\text{MC}$} 
        \State Obtain the values of each compartment of the SVEI3RD model with the input as $ t $: 
        \State $S^{NN}_{t_i,j}, V^{NN}_{t_i,j}, E^{NN}_{t_i,j}, I^{NN}_{1,t_i,j}, I^{NN}_{2,t_i,j}, I^{NN}_{3,t_i,j}, R^{NN}_{t_i,j}, D^{NN}_{t_i,j}, W_{1,t_i,j}, W_{2,t_i,j}, W_{3,t_i,j}, W_{4,t_i,j}, W_{5,t_i,j}, W_{6,t_i,j},$
        \State $W_{7,t_i,j}, W_{8,t_i,j} = {NN}(t_i,j) $
        \State Set $\{S_{t_0,j}^{{DE}}, V_{t_0,j}^{{DE}}, E_{t_0,j}^{{DE}}, I_{1,t_0,j}^{{DE}}, I_{2,t_0,j}^{{DE}}, I_{3,t_0,j}^{{DE}}, R_{t_0,j}^{{DE}}, D_{t_0,j}^{{DE}}\} = \{S_0, V_0, E_0, I_{1,0}, I_{2,0}, I_{3,0}, R_0, D_0\} $ 
        \For{time step $i = 0$ to ${N}-1$}
            \State Calculate the SDE terms by using Euler-Maruyama discretization:
            \State $S_{t_{i+1,j}}^{{DE}} = S_{t_i,j}^{{NN}} + (\Lambda -(\beta_1 I_{1,t_i,j}^{{NN}} + \beta_2 I_{2, t_i,j}^{{NN}} + \beta_3 I_{3,t_i,j}^{{NN}}) S_{t_i,j}^{{NN}} - \alpha S_{t_i,j}^{{NN}} -\zeta S_{t_i,j}^{{NN}} ) \Delta t + \sigma_1 \sqrt{\Delta t} \Delta W_{1, t_i,j}$
            \State   
                $V_{t_{i+1}}^{{DE}} = V_{t_i,j}^{{NN}} + (\alpha S_{t_i,j}^{{NN}} - (\beta_1 I_{1, t_i,j}^{{NN}} + \beta_2 I_{2, t_i,j}^{{NN}} + \beta_3 I_{3, t_i,j}^{{NN}}) \sigma V_{t_i,j}^{{NN}} -\zeta V_{t_i,j}^{{NN}}) \Delta t + \sigma_2 \sqrt{\Delta t} \Delta W_{2,{t_i},j}$
            \State     
                $E_{t_{i+1},j}^{{DE}} = E_{t_i,j}^{{NN}} + ((\beta_1 I_{1,t_i,j}^{{NN}} + \beta_2 I_{2,t_i,j}^{{NN}} + \beta_3 I_{3,t_i,j}^{{NN}}) (S_{t_i,j}^{{NN}} +  \sigma V_{t_i,j}^{{NN}} )- \gamma E_{t_i,j}^{{NN}} -\zeta     E_{t_i,j}^{{NN}} ) \Delta t$
            \State  
                $+ \sigma_3 \sqrt{\Delta t} \Delta  W_{3,t_i,j} $
            \State
                $I_{1, t_{i+1},j}^{{DE}} = I_{1,t_i,j}^{{NN}} + (\gamma E_{t_i,j}^{{NN}} - (\delta_1 + p_1 I_{1,t_i,j}^{{NN}} -\zeta I_{1,t_i,j}^{{NN}} ) \Delta t + \sigma_4 \sqrt{\Delta t} \Delta W_{4,t_i,j} $
            \State
                $I_{2, t_{i+1,j}}^{{DE}} = I_{2,t_i,j}^{{NN}} + (p_1 I_{1,t_i,j}^{{NN}} - (\delta_2 + p_2) I_{2,t_i,j}^{{NN}} -\zeta I_{2,t_i,j}^{{NN}}) \Delta t + \sigma_5 \sqrt{\Delta t} \Delta W_{5,t_i,j}$
            \State
                $I_{3, t_{i+1},j}^{{DE}} = I_{3,t_i,j}^{{NN}} + (p_2 I_{2,t_i,j}^{{NN}} - (\delta_3 + \mu ) I_{3,t_i,j}^{{NN}} -\zeta I_{3,t_i,j}^{{NN}} ) \Delta t + \sigma_6 \sqrt{\Delta t} \Delta W_{6,t_i,j} $
            \State
               $ R_{t_{i+1},j}^{{DE}} = R_{t_i,j}^{{NN}} + (\delta_1 I_{1,t_i,j}^{{NN}} + \delta_2 I_{2,t_i,j}^{{NN}}  + \delta_3 I_{3,t_i,j}^{{NN}} -\zeta R_{t_i,j}^{{NN}}) \Delta t + \sigma_7 \sqrt{\Delta t} \Delta W_{7,t_i,j}$
            \State
                $D_{t_{i+1},j}^{{DE}} = D_{t_i,j}^{{NN}} +  ( \mu I_{3,t_i,j}^{{NN}} ) \Delta t + \sigma_8 \sqrt{\Delta t} \Delta W_{8,t_i,j}$ 
        \EndFor    
        \State Calculate the physical imposed loss across time:
        \State $ {L}_{{DE, j}} =\frac{1}{N} \sum_{i=0}^{N-1}  
        [(S_{t_{i+1},j}^{{DE}} - S_{t_{i+1},j}^{{NN}})^2 + (V_{t_{i+1},j}^{{DE}}- V_{t_{i+1},j}^{{NN}})^2 
        +(E_{t_{i+1},j}^{{DE}} - E_{t_{i+1},j}^{{NN}})^2 + (I_{1,t_{i+1},j}^{{DE}} - I_{1,t_{i+1},j}^{{NN}})^2  $
        \State $ + (I_{2,t_{i+1},j}^{{DE}} - I_{2,t_{i+1},j}^{{NN}})^2  + (I_{3,t_{i+1},j}^{{DE}} - I_{3,t_{i+1},j}^{{NN}})^2  
        + (R_{t_{i+1},j}^{{DE}}- R_{t_{i+1},j}^{{NN}})^2 + (D_{t_{i+1},j}^{{DE}} - D_{t_{i+1},j}^{{NN}})^2 $
    \EndFor
    \State Calculate the overall physical imposed loss across iterations:
    \State ${L}_{{DE}} = \frac{1}{{N}_{MC}} \sum_{j=1}^{{N}_{MC}} {L}_{{DE, j}}$
    \State Calculate the data loss for each compartment across iterations:
    \State $\Bar{S}_{t_i}^{\text{NN}} = \frac{1}{{N}_{MC}} \sum_{j=1}^{{N}_{MC}} {S}_{t_i,j}$,\
    $\Bar{V}_{t_i}^{\text{NN}} = \frac{1}{{N}_{MC}} \sum_{j=1}^{{N}_{MC}} {V}_{t_i,j}$,\
    $\Bar{E}_{t_i}^{\text{NN}} = \frac{1}{{N}_{MC}} \sum_{j=1}^{{N}_{MC}} {E}_{t_i,j}$,\
    $\Bar{I}_{1,t_i}^{\text{NN}} = \frac{1}{{N}_{MC}} \sum_{j=1}^{{N}_{MC}} {I}_{1,t_i,j}$,\
    \State $\Bar{I}_{2,t_i}^{\text{NN}} = \frac{1}{{N}_{MC}} \sum_{j=1}^{{N}_{MC}} {I}_{2,t_i,j}$,\
    $\Bar{I}_{3,t_i}^{\text{NN}} = \frac{1}{{N}_{MC}} \sum_{j=1}^{{N}_{MC}} {I}_{3,t_i,j}$,\
    $\Bar{R}_{t_i}^{\text{NN}} = \frac{1}{{N}_{MC}} \sum_{j=1}^{{N}_{MC}} {R}_{t_i,j}$,\
    \State $\Bar{D}_{t_i}^{\text{NN}} = \frac{1}{{N}_{MC}} \sum_{j=1}^{{N}_{MC}} {D}_{t_i,j}$,\
    \State Obtain the overall data loss across all time points:
    \State $\text{L}_{\text{Data}} = \frac{1}{N} \sum_{i=1}^{N}  [(S_{t_i} - \Bar{S}_{t_i}^{\text{NN}})^2 + (V_{t_i} - \Bar{V}_{t_i}^{\text{NN}})^2 +(E_{t_i} - \Bar{E}_{t_i}^{\text{NN}})^2 + (I_{1,t_i} - \Bar{I}_{1, t_i}^{\text{NN}})^2 + (I_{2,t_i} - \Bar{I}_{2, t_i}^{\text{NN}})^2$
    \State  $  + (I_{3,t_i} - \Bar{I}_{3, t_i}^{\text{NN}})^2 + (R_{t_i} - \Bar{R}_{t_i}^{\text{NN}})^2 + (D_{t_i} - \Bar{D}_{t_i}^{\text{NN}})^2 ]$
    
    \State Thus, the total loss function is denoted as:
     \State ${L}= \lambda_{{Data}} {L}_{{Data}} + \lambda_{{DE}} {L}_{{DE}}$
    \State Update $\mathbf{w}$, $\mathbf{b}$, $\Theta$ and $\mathbf{Z}$ by using the Adam optimizer in Tensorflow to minimize the total loss function.
\EndFor
\end{algorithmic}
\end{algorithm} 


\newpage
\bibliography{references}  

\end{document}